\numberwithin{equation}{section}
\theoremstyle{plain}
	\newtheorem{Thm}{Theorem}[section]
	\newtheorem{Lem}[Thm]{Lemma}
	\newtheorem{Prop}[Thm]{Proposition}
\theoremstyle{definition}
	\newtheorem{Rem}[Thm]{Remark}
	\newtheorem{Ass}[Thm]{Assumption}
\newcommand{\e}{\epsilon}
\newcommand{\G}{\Gamma}
\newcommand{\Z}{\mathbb{Z}}
\newcommand{\R}{\mathbb{R}}
\newcommand{\C}{\mathbb{C}}
\newcommand{\N}{\mathbb{N}}
\newcommand{\F}{\mathcal{F}}
\newcommand{\im}{\mathrm{i}}
\newcommand{\ran}{\mathrm{ran}}
\newcommand{\ind}{\mathrm{ind}}
\newcommand{\sgn}{\mathrm{sgn}}
\DeclareMathOperator{\Tr}{Tr}
	\pgfplotsset{compat=1.12}
\renewcommand*{\eqref}[1]{%
  \hyperref[{#1}]{\textup{\tagform@{\ref*{#1}}}}%
}
\begin{document}
\title{The Witten index for one-dimensional split-step quantum walks under the non-Fredholm condition}

\author {Yasumichi Matsuzawa, Akito Suzuki, Yohei Tanaka, Noriaki Teranishi, Kazuyuki Wada}

\maketitle

AMS 2020 Classification: 47A53, \ 81Q35, \ 81Q60
\\
Keywords: Quantum walk, Witten index, Chiral symmetry, Spectral shift function

\begin{abstract}

It is recently shown that a split-step quantum walk possesses a chiral symmetry, and that a certain well-defined index can be naturally assigned to it. 
The index is a well-defined Fredholm index if and only if the associated unitary time-evolution operator has spectral gaps at both $+1$ and $-1.$
In this paper we extend the existing index formula for the Fredholm case to encompass the non-Fredholm case (i.e., gapless case). We make use of a natural extension of the Fredholm index to the non-Fredholm case, known as the Witten index. 
The aim of this paper is to fully classify the Witten index of the split-step quantum walk by employing the spectral shift function for a rank one perturbation of a fourth order difference operator. It is also shown in this paper that the Witten index can take half-integer values in the non-Fredholm case.
\end{abstract}

\section{Introduction and main result}
Quantum walk is widely known as a quantum mechanical counterpart of random walk \cite{MR2120299, MR958911}.  
There are various applications such as quantum computations \cite{MR2507892,PhysRevA.81.042330}, quantum algorithms \cite{MR2306290, MR2318713,PhysRevA.67.052307} and topological phases \cite{PhysRevB.88.121406,PhysRevB.82.235114,PhysRevA.82.033429}.
It is also regarded as a discretization of the Dirac equation \cite{MR4092112}.

One of the most important properties of quantum walks is localization.
In particular, the famous Grover search algorithm \cite{MR1427516} deeply depends on the property.
Localization, however, does not always occur.
A sufficient condition for the occurrence of localization is that the time evolution operator has an eigenstate and has no singular continuous spectrum \cite[Proposition 2.4]{MR3488238}.
The absence of singular continuous spectrum is shown by using scattering theory \cite[Theorem 2.4]{MR3748367}.
On the other hand, the existence of an eigenstate can be understood from the point of view of topological phases in some situations.

In this paper, we focus on mathematical studies of topological phases for discrete-time chiral symmetric quantum walks on the integer lattice $\mathbb{Z},$ which are directly related to index theory.
The point is that eigenstates (known as edge states) exist when the index is non-zero. 
This phenomena is called the bulk-edge correspondence.
Such systems have been experimentally realized \cite{PhysRevA.96.033846,kitagawa2012observation} and mathematically studied \cite{MR3748295,MR4205231,endo2015relation,MR3842302,MR4031312}.
The above phenomena corresponds to the so-called gapped phases in physics.
To explain more precisely, we briefly recall some basic facts of chiral symmetric quantum walks from \cite{MR4031312} (see also \cite{MR3748295}).
Quantum walks we consider in this paper are discrete-time quantum walks.
That is, the (unit) time evolution of a system is described by a unitary operator $U$ on a Hilbert space $\mathcal{H}$.
For a unitary and self-adjoint operator $\G$ on $\mathcal{H}$, we say that $U$ has a chiral symmetry $\Gamma,$ if the following equality holds;
\begin{align}
\label{chiral}
\G U\G=U^{\ast}.
\end{align}
If $U$ has a chiral symmetry $\Gamma,$ then the imaginary part $Q$ of $U$ plays a role of a supercharge in the context of supersymmetric quantum mechanics \cite[Section 5.1.2]{MR1219537};
$$Q:=\displaystyle\frac{U-U^{\ast}}{2\im}.$$
That is, $Q$ anti-commutes with $\G$, meaning that $\G Q+Q\G=0$ holds. 
From this, $Q$ is decomposed as 
\begin{equation}\label{intro Q_0}
Q = 
\begin{bmatrix}
0 & Q_{0}^{\ast} \\ 
Q_{0}& 0
\end{bmatrix}_{\ker(\G-1) \oplus \ker(\G+1)},
\qquad Q_0:\ker(\G-1) \to \ker(\G+1).
\end{equation}
If $Q_{0}$ is Fredholm, we say that the pair $(\Gamma, U)$ is Fredholm.
In this case, we can introduce an index $\ind_{\Gamma}(U)$ of the pair $(\Gamma, U)$ by
\begin{equation}
\label{fredholmindex}
\begin{aligned}
\ind_\G(U):&=\dim \ker(Q_{0})-\dim \ker(Q_{0}^{\ast}),
\end{aligned}
\end{equation}
which is nothing but the Fredholm index of $Q_0.$
For Fredholm index theory, see e.g. \cite[Section 1.4]{MR1074574} or \cite{MR1369029}.
We note that the index $\ind_{\G}(U)$ depends not only on $U,$ but also on $\G.$ Thus we emphasize dependence on $\G$ in the suffix of $\ind(\cdot).$ 
The Fredholm condition of the pair $(\Gamma, U)$ is equivalent to that neither $+1$ nor $-1$ is contained in the essential spectrum of $U.$
Thus, in this case, the spectrum of $U$ has gaps at both $+1$ and $-1$, and hence we can interpret that this case corresponds to a gapped phase in physics.
In addition, it is established in \cite[Theorem 3.4]{MR4031312} (see also \cite[Corollary 4.3]{MR3748295}) that the following inequality holds;
$$\dim \ker(U-1)+\dim \ker(U+1)\ge |\ind_{\G}(U)|.$$
The above inequality implies that if $\ind_{\G}(U)\neq0 ,$ then there exists an 
eigenstate corresponding to $+1$ or $-1.$ 
This is a mathematical formulation of the bulk-edge correspondence.

Motivated by these facts, the indices for split-step quantum walks are discussed in \cite{MR4119516, MR4031326, MR4219958}. 
Split-step quantum walks are originally introduced by Kitagawa et.~al \cite{PhysRevA.82.033429} to investigate topological phases of quantum walks. 
Later, Suzuki et.~al. \cite{MR3842302} generalize Kitagawa's split-step quantum walks, which we shall consider in this paper.
To explain more precisely, let us recall the definition of split-step quantum walks.
Let $\ell^2(\Z)$ be the Hilbert space of square-summable $\C$-valued sequences indexed by the set $\Z$ of integers, and let $\mathcal{H}:=\ell^2(\Z)\oplus \ell^2(\Z).$
Let $L$ be the left-shift operator on $\ell^2(\Z),$ that is,
\[
(L\psi)(x):=\psi(x+1),\quad \psi\in \ell^2(\Z), \quad x\in\Z.
\]
We give a data $a,b,p$ and $q$, where $a,p$ are two $\R$-valued sequences and $b,q$ are two $\C$-valued sequences, satisfying $a(x)^2 + |b(x)|^2 = 1$ and $p(x)^2 + |q(x)|^{2} = 1$ for any $x \in \Z.$ 
We identify the sequences with the corresponding multiplication operators on $\ell^2(\mathbb{Z}).$
The time evolution operator $U$ of a split-step quantum walk is a unitary operator on $\mathcal{H}$ defined by
\[
U:=\Gamma \Gamma'
\]
with
\begin{align}
\Gamma:=\begin{bmatrix}
p & qL \\ L^*q^{\ast} & -p(\cdot-1)
\end{bmatrix},
\qquad 
\Gamma':=
\begin{bmatrix}
a & b^{\ast} \\ b & -a
\end{bmatrix}.
\end{align}
We note that $U$ has a chiral symmetry $\Gamma$ (see e.g. \cite[Lemma 2.1]{MR4031312}), and thus the operator $Q_0$ is defined via the decomposition \eqref{intro Q_0}.
In addition, if $Q_0$ is Fredholm, the index $\textrm{ind}_\Gamma(U)$ is defined as well.
To see an explicit form of $Q_0$, we make use of the following proposition.

\begin{Prop}[{\cite[Lemma 3.2]{MR4219958}}]
There exists a unitary operator $\epsilon$ on $\mathcal{H}$
such that
\begin{align*}
&\epsilon^{\ast} \G\epsilon =
 \begin{bmatrix}
 1 & 0 \\ 
 0 & -1
 \end{bmatrix}_{\ell^2(\Z)\oplus \ell^2(\Z)}, & 
&\epsilon^{\ast} Q\epsilon = 
 \begin{bmatrix}
 0 & Q_{\e_0}^* \\ 
 Q_{\e_0} & 0
 \end{bmatrix}_{\ell^2(\Z)\oplus \ell^2(\Z)},
\end{align*}
where $Q_{\epsilon_{0}}$ is a bounded operator on $\ell^2(\mathbb{Z})$ defined by
\begin{equation}\label{ymq}
Q_{\e_0}:=\displaystyle\frac{\im}{2}\left[\sqrt{1+p} e^{\im\theta}L b^*\sqrt{1+p}-\sqrt{1-p}b L^*e^{-\im\theta} \sqrt{1-p}-|q|\{a+a(\cdot+1)\}\right]
\end{equation}
and where $\theta$ is a $\mathbb{R}$-valued sequence satisfying $q(x)=|q(x)|e^{\im\theta(x)}$ for any $x\in\mathbb{Z}.$
Moreover, $(\Gamma, U)$ is Fredholm if and only if $Q_{\e_0}$ is Fredholm.
In this case, the index $\ind_\Gamma(U)$ equals to the Fredholm index of $Q_{\e_0}.$
\end{Prop}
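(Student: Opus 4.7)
The plan is to diagonalize the self-adjoint unitary $\G$ by constructing an explicit unitary $\e$ on $\mathcal{H}$ as a $2\times 2$ matrix of operators on $\ell^2(\Z)$, and then to conjugate the supercharge $Q = (U-U^{\ast})/(2\im)$ by $\e$ and read off the $(2,1)$ block as $Q_{\e_0}$. To construct $\e$, I would solve the eigenvector equations $\G \binom{f}{g} = \pm \binom{f}{g}$ directly: the top row of the $+1$ case reads $(1-p)f = qLg$, which combined with $|q|^2 = (1-p)(1+p)$ and $q = |q|e^{\im\theta}$ suggests the natural normalization $f = \sqrt{(1+p)/2}$, $g = L^{\ast}e^{-\im\theta}\sqrt{(1-p)/2}$, with an analogous expression for the $-1$ eigenvector. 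Arranging these as the two columns of $\e$ produces a $2\times 2$ matrix of operators that one verifies to be unitary and to satisfy $\e^{\ast}\G\e = \begin{bmatrix}1 & 0 \\ 0 & -1\end{bmatrix}$ by a routine block calculation, using only $LL^{\ast} = L^{\ast}L = I$ and the pairwise commutativity of the multiplication operators $a, b, p, q, e^{\im\theta}, \sqrt{1\pm p}$.

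Next, because $\G$ and $\G'$ are both self-adjoint, $U^{\ast} = \G'\G$, hence $Q = (\G\G' - \G'\G)/(2\im)$. Writing $\sigma := \begin{bmatrix}1 & 0 \\ 0 & -1\end{bmatrix}$ and $M := \e^{\ast}\G'\e$, one has $\e^{\ast}Q\e = (\sigma M - M\sigma)/(2\im)$, whose commutator structure annihilates the diagonal of $M$ and leaves only the anti-diagonal, matching the target form $\begin{bmatrix}0 & Q_{\e_0}^{\ast} \\ Q_{\e_0} & 0\end{bmatrix}$ with $Q_{\e_0} = \im M_{21}$ up to a standard sign. The block $M_{21}$ is then expanded using the explicit form of $\e$, with the shifts $L, L^{\ast}$ moved past multiplications via $LfL^{\ast} = f(\cdot+1)$, phases cancelled through $qe^{-\im\theta} = |q|$, and cross-products such as $\sqrt{(1+p)/2}\cdot\sqrt{(1-p)/2}$ collapsed to $|q|/2$; the four resulting terms then combine into the formula \eqref{ymq}.

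Finally, since $\e$ is unitary it intertwines the abstract chiral decomposition \eqref{intro Q_0} of $Q$ with its concrete realization on $\ell^2(\Z) \oplus \ell^2(\Z)$, so $Q_0$ and $Q_{\e_0}$ are unitarily equivalent; hence one is Fredholm if and only if the other is, their Fredholm indices coincide, and in the Fredholm case both equal $\ind_{\G}(U)$. The main obstacle is the bookkeeping in the second step: each of the four entries of $\e^{\ast}\G'\e$ is a product of as many as five operators mixing shifts with multiplications, and extracting the compact expression \eqref{ymq} from the raw $2\times 2$ expansion depends on systematically exploiting the identities above. The other parts of the argument — unitarity of $\e$, the anti-diagonal structure of the $\sigma$-commutator, and Fredholm equivalence under unitary conjugation — are essentially formal.
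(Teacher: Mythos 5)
Your proposal is correct and follows essentially the same route as the source of this result: the paper does not reprove the proposition but imports it from \cite[Lemma 3.2]{MR4219958}, where the diagonalizing unitary is exactly the one you construct, namely $\epsilon=\tfrac{1}{\sqrt{2}}\begin{bmatrix}1&0\\0&L^{*}e^{-\im\theta}\end{bmatrix}\begin{bmatrix}\sqrt{1+p}&-\sqrt{1-p}\\\sqrt{1-p}&\sqrt{1+p}\end{bmatrix}$, followed by the same block computation $\e^{*}Q\e=(\sigma M-M\sigma)/(2\im)$ with $M=\e^{*}\G'\e$ and the transfer of Fredholmness and index under the unitary identification of $\ker(\G\mp1)$ with the two $\ell^2(\Z)$ summands. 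The one detail worth reconciling is the phase bookkeeping in the $(2,1)$ block: with your normalization the shift term comes out as $\sqrt{1+p}\,e^{\im\theta}Lb\sqrt{1+p}$ rather than with $b^{*}$ as in \eqref{ymq}, a convention-level placement of the phase of $b$ that is immaterial once $b\geq 0$ is assumed (as the paper arranges via Appendix \ref{section: Appendix D}) but should be checked against the literal statement.
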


Since $Q_{\e_0}$ is concrete enough, the above proposition enable us to compute the index explicitly under suitable assumptions.

\begin{Thm}[{\cite[Theorem B]{MR4219958}}]\label{old result}
Suppose that the following eight limits exist:
\begin{equation}
\label{equation: anisotropic assumption}
\spadesuit_\pm := \lim_{x\to\pm\infty} \spadesuit(x), \qquad \spadesuit\in\{a,b,p,q\}.
\end{equation}
Then $(\Gamma, U)$ is Fredholm if and only if $|a_+|\not=|p_+|$ and $|a_-|\not=|p_-|$.
In this case, we have
\[
\ind_\Gamma(U)=
\begin{cases}
0, & |p_-|<|a_-|\ \text{and}\  |p_+|<|a_+|,\\
+\sgn\,p_+, & |p_-|<|a_-|\ \text{and}\  |p_+|>|a_+|,\\
-\sgn\,p_-, & |p_-|>|a_-|\ \text{and}\  |p_+|<|a_+|,\\
+\sgn\,p_+-\sgn\,p_-, & |p_-|>|a_-|\ \text{and}\  |p_+|>|a_+|,
\end{cases}
\]
where the sign function $\sgn:\mathbb{R}\to\{-1,0,+1\}$ is defined by
\begin{equation}\label{sgn_def}
\sgn\, x :=
\begin{cases}
+1, & x> 0,\\
0,& x=0,\\
-1, & x<0.
\end{cases}
\end{equation}

\end{Thm}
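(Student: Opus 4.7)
The plan is to exploit the asymptotic translation invariance of $Q_{\e_0}$ and reduce the index computation to a winding number calculation on the Fourier side.

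First, using the limits in \eqref{equation: anisotropic assumption}, I would introduce two constant-coefficient operators $Q_{\e_0}^{\pm}$ obtained by replacing $a,b,p,q$ in \eqref{ymq} by their respective limits $a_\pm,b_\pm,p_\pm,q_\pm$ (and $\theta$ by an associated constant phase $\theta_\pm$). Since the differences $\spadesuit(x)-\spadesuit_\pm$ vanish at $\pm\infty$ and $Q_{\e_0}$ is a banded difference operator, a standard compactness argument yields that $Q_{\e_0}-(\chi_{\Z_-}Q_{\e_0}^{-}+\chi_{\Z_+}Q_{\e_0}^{+})$ is compact on $\ell^2(\Z)$, so Fredholmness of $Q_{\e_0}$ and its index are determined by the pair $Q_{\e_0}^{\pm}$.

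Second, applying the discrete Fourier transform diagonalizes each $Q_{\e_0}^{\pm}$ into multiplication by a continuous symbol
\[
\sigma_\pm(k)=\tfrac{\im}{2}\bigl[(1+p_\pm)\bar b_\pm e^{\im\theta_\pm}e^{\im k}-(1-p_\pm)b_\pm e^{-\im\theta_\pm}e^{-\im k}-2|q_\pm|a_\pm\bigr],\quad k\in[0,2\pi).
\]
After absorbing the constant phase into $k$, the image $\{\sigma_\pm(k):k\in[0,2\pi)\}$ traces an ellipse in $\C$ centered at $-\im|q_\pm|a_\pm$ with semi-axes $|b_\pm|$ and $|b_\pm p_\pm|$. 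Using $|b_\pm|^2=1-a_\pm^2$ and $|q_\pm|^2=1-p_\pm^2$, a direct computation shows that the origin lies on this ellipse exactly when $|a_\pm|=|p_\pm|$ and strictly inside exactly when $|a_\pm|<|p_\pm|$. The Fredholm criterion $\sigma_\pm\neq 0$ on the unit circle therefore translates into $|a_\pm|\neq|p_\pm|$, giving the first assertion of the theorem.

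Third, I would invoke the index formula for one-dimensional difference operators with asymptotically constant coefficients (provable by splitting $\ell^2(\Z)$ into $\ell^2(\Z_\pm)$ and applying the Gohberg--Krein theorem on each half-line):
\[
\ind_\G(U)=\ind(Q_{\e_0})=\mathrm{wind}(\sigma_+)-\mathrm{wind}(\sigma_-),
\]
where $\mathrm{wind}$ denotes the winding number around $0\in\C$. Reading off the orientation of the ellipse as $k$ traverses $[0,2\pi)$ shows that $\mathrm{wind}(\sigma_\pm)=\sgn p_\pm$ when $|a_\pm|<|p_\pm|$ and $\mathrm{wind}(\sigma_\pm)=0$ when $|a_\pm|>|p_\pm|$, from which the four cases in the theorem fall out directly.

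The main obstacle is the sign bookkeeping in the third step: confirming that the orientation of $\sigma_\pm$, and hence the sign appearing in the index, is governed purely by $\sgn p_\pm$ and is independent of the signs of $a_\pm$ and the phases of $b_\pm$ and $\theta_\pm$. The remaining ingredients — compact perturbations not changing the index, identification of essential spectrum via symbols, and the half-line Toeplitz index formula — are standard for banded operators on $\Z$ and should go through routinely.
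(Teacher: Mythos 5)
Your proposal is correct and follows essentially the same route as the paper and the reference it cites for this statement: cut $\ell^{2}(\Z)$ at the origin modulo a compact (finite-rank plus small) error, identify the two half-line compressions with Toeplitz operators whose symbols $F_{\pm}$ trace ellipses centered at $-\im q_{\pm}a_{\pm}$ with semi-axes $|b_{\pm}|$ and $|p_{\pm}b_{\pm}|$, and read off Fredholmness ($|a_{\pm}|\neq|p_{\pm}|$) and the index from the winding numbers, exactly as in the paper's Theorem \ref{fredholm} and Proposition \ref{cut}. Your sign bookkeeping also checks out: with $L\leftrightarrow e^{\im k}$ the $+\infty$ compression is $F_{+}(v^{\ast})$ with index $+\mathrm{wind}(\sigma_{+})$ and the reflected $-\infty$ compression is $F_{-}(v)$ with index $-\mathrm{wind}(\sigma_{-})$, and $\mathrm{wind}(\sigma_{\pm})=\sgn p_{\pm}$ when the origin lies inside the ellipse, giving precisely the stated four cases.
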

Cedzich et.~al.~\cite{MR3748295} discuss the symmetry index of essentially gapped unitary operators with suitable symmetries from a representation theoretical point of view. Moreover, they discuss split-step quantum walks as a concrete example.

Recently, gapless phases have been studied by physicists \cite{li2022symmetry,PhysRevX.7.041048,PhysRevB.104.075132,PhysRevLett.120.057001,PhysRevX.11.041059}.
Its mathematical understanding, however, has not been fully elucidated.
The aim of this paper is to extend the above index formula to include the gapless case.
Our motivation consists of two parts.
The first is to develop the theory of gapless phases in a mathematically rigorous way.
Especially, we would like to introduce a suitable \textit{index} for such systems and clarify the physical meaning of the index.
The second is to study gapless phases for quantum walks.
This is because quantum walks can be experimentally realized, and thus we expect that gapless phases can be experimentally realized by making use of quantum walks.
As a first step of our project, we shall achieve the aim we have already mentioned.

We note that the gapless case corresponds to the non-Fredholm case since the gapped case corresponds to the Fredholm case.
One of the most natural extensions of the Fredholm index to the non-Fredholm case is the so-called Witten index.
Let us briefly recall the theory of the Witten index.
Let $A$ be a bounded operator on a Hilbert space.
We suppose the trace class condition for $A$, meaning that $A^*A-AA^*$ is of trace class.
Then the operator $e^{-tA^*A}-e^{-tAA^*}$ is of trace class as well for all $t>0$, and we set
\[
\ind_t(A):=\Tr(e^{-tA^*A}-e^{-tAA^*}),\qquad t>0.
\]
The Witten index $w(A)$ of $A$ is defined by
\[
w(A):= \lim_{t\to\infty}\ind_t(A)
\]
whenever the limit exists.
It is known that the Witten index is equal to the Fredholm index if $A$ is Fredholm \cite[Theorem 2.5]{MR950085}.

We now return to split-step quantum walks.
Since the index $\ind_\Gamma (U)$ is equal to the Fredholm index of $Q_{\epsilon_0}$ if  the pair $(\Gamma, U)$ is Fredholm, we would like to compute the Witten index of $Q_{\epsilon_0}$, which is the main result of this paper. To state the result, we introduce the following assumption in addition to (\ref{equation: anisotropic assumption}):
\begin{equation}
\label{equation: trace assumption}
\sum_{x\ge 0}
\Bigl[|\spadesuit(x)-\spadesuit_{+}|+|\spadesuit(-x-1)-\spadesuit_{-}|\Bigr] < \infty, \qquad \spadesuit\in\{a,b,p,q\}.
\end{equation}


The above summable conditions are needed to guarantee the trace class condition for $Q_{\epsilon_0}.$ 
We are now in a position to state the main result.

\begin{Thm}\label{thm:main}
We assume the existence of limits of the form (\ref{equation: anisotropic assumption}) together with the summable condition (\ref{equation: trace assumption}).
Then we have
\[
w(Q_{\epsilon_0}) = W(a_+,p_+)-W(a_-,p_-),
\]
where the function $W:[-1,1]\times[-1,1]\to\mathbb{R}$ is defined by
\[
W(r,s):=\begin{cases}
0,\qquad &|r|=|s|=1,\\
[\mathrm{sgn}(r+s)-\mathrm{sgn}(r-s)]/2, &\text{otherwise}.
\end{cases}
\]
\end{Thm}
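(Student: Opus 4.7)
The plan is to identify the Witten index $w(Q_{\epsilon_0})$ with a boundary value of a spectral shift function (SSF), and then to evaluate that boundary value by treating the relevant operators as rank-one perturbations of a translation-invariant fourth-order difference operator on $\ell^2(\Z)$. The starting point is the Krein trace formula together with an Abelian theorem: once the trace class condition $Q_{\epsilon_0}^*Q_{\epsilon_0}-Q_{\epsilon_0}Q_{\epsilon_0}^*\in\mathcal{L}^1$ is verified, one has
\[
w(Q_{\epsilon_0}) \;=\; \lim_{\lambda\downarrow 0}\,\xi(\lambda;\,Q_{\epsilon_0}Q_{\epsilon_0}^*,\,Q_{\epsilon_0}^*Q_{\epsilon_0}),
\]
provided this one-sided limit exists. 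The trace class condition itself follows from \eqref{equation: trace assumption} by direct inspection of \eqref{ymq}: the relevant commutators of $L,L^*$ with the coefficient sequences yield only summable multiplication operators, hence trace class contributions, to the difference.

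I then reduce to a reference model by freezing the coefficients in \eqref{ymq} to their asymptotic values: set $\spadesuit$ equal to $\spadesuit_{+}$ on $\Z_{\ge 0}$ and to $\spadesuit_{-}$ on $\Z_{<0}$, and denote the resulting operator by $Q^{\mathrm{ref}}$. By \eqref{equation: trace assumption} the difference $Q_{\epsilon_0}-Q^{\mathrm{ref}}$ is trace class, and invariance of the Witten index under trace class perturbations that preserve the trace class condition on $A^*A-AA^*$ gives $w(Q_{\epsilon_0})=w(Q^{\mathrm{ref}})$. By construction, $Q^{\mathrm{ref}}$ is a finite-rank perturbation, supported near $x=0$, of the orthogonal direct sum of two half-line operators with constant coefficients $\spadesuit_{+}$ and $\spadesuit_{-}$.

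Additivity of the SSF under this block decomposition, together with a further finite-rank extension reducing each half-line operator to its translation-invariant continuation on $\ell^2(\Z)$, expresses the desired boundary value as
\[
\xi(0^+;\,Q^{\mathrm{ref}}(Q^{\mathrm{ref}})^*,\,(Q^{\mathrm{ref}})^*Q^{\mathrm{ref}}) \;=\; \xi^{(+)}(0^+) - \xi^{(-)}(0^+),
\]
where each $\xi^{(\pm)}$ is the SSF at zero for a rank-one perturbation of the fourth-order difference operator $Q_{\pm}^*Q_{\pm}$ on $\ell^2(\Z)$; note that $Q_{\pm}^*Q_{\pm}=Q_{\pm}Q_{\pm}^*$ by translation invariance, so the SSF of the unperturbed pair vanishes identically, and only the rank-one defect contributes. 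A Fourier analysis of $Q_\pm$ identifies the gap condition $0\notin\sigma(Q_\pm^*Q_\pm)$ with $|a_\pm|\neq|p_\pm|$; in the gapped regime the standard Krein formula recovers the integer values already seen in Theorem~\ref{old result}, and the identification $\xi^{(\pm)}(0^+)=W(a_\pm,p_\pm)$ in the remaining cases will yield the stated formula $W(a_+,p_+)-W(a_-,p_-)$.

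The main obstacle will be this threshold evaluation of $\xi^{(\pm)}(0^+)$ in the critical regime $|a_{\pm}|=|p_{\pm}|$, which is precisely the non-Fredholm case. At such parameter values $\lambda=0$ sits at the edge of the essential spectrum of $Q_{\pm}^*Q_{\pm}$, the SSF is generically discontinuous at zero, and the correct one-sided boundary value must be extracted via a careful analysis of Jost-type solutions of the fourth-order difference equation $(Q_{\pm}^*Q_{\pm}-\lambda)\psi=0$ as $\lambda\downarrow 0$. It is precisely the half-integer jumps of the SSF produced by these threshold resonances that account for the fractional values of $W$ appearing in the statement, and controlling them rigorously via Krein's formula for the rank-one perturbation is the technical heart of the argument.
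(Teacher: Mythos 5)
Your overall roadmap — trace-class reduction to coefficients frozen at their limits, cutting the lattice at the origin, passing to spectral shift functions via the Krein trace formula, and reading off the Witten index as a one-sided boundary value of the SSF at the threshold — matches the architecture of the paper (its Sections 2--3 do exactly the cutting and freezing, and Section 5 runs the Abelian/Krein argument). But there is a genuine gap: the decisive step, the evaluation of the threshold boundary value in the critical regime $|a_\pm|=|p_\pm|<1$, is not carried out; you explicitly defer it as ``the technical heart'' and simply assert that $\xi^{(\pm)}(0^+)=W(a_\pm,p_\pm)$. That assertion is the entire content of the theorem beyond Theorem \ref{old result}: it is what produces the half-integer values, and it is also what guarantees that the limit defining $w(Q_{\epsilon_0})$ exists at all in the non-Fredholm case (your interchange $w(A)=\lim_{\lambda\downarrow 0}\xi(\lambda)$ itself needs the regularity of $\xi$ near the threshold, which only an explicit computation supplies). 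The paper does this concretely: after the half-line reduction, $F_\pm^{*}F_\pm$ and $F_\pm F_\pm^{*}$ are rank-one perturbations (a multiple of the boundary projection $\Omega_0$) of the \emph{same} half-line operator $T_0(P)=\{(v+v^*)-2m_P\}^2$, so the perturbation determinant is the scalar $1+\tfrac{2}{1-P}\langle\delta_0,(T_0-z)^{-1}\delta_0\rangle$, computable in closed form from the semicircle spectral measure of the half-line Laplacian and the algebraic function $H(z)$ of Appendix \ref{section: Appendix C}; this yields explicit arctan formulas for $\xi$ (Lemma \ref{lem:ssf}) and the band-edge limit $\xi(\alpha_P^{-1}-0)\in\{0,1/2\}$, with no Jost-solution analysis needed.

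Two secondary points. First, your comparison operator is the translation-invariant $Q_\pm^{*}Q_\pm$ on $\ell^2(\Z)$, and you call the defect ``rank one''; cutting a fourth-order difference operator at the origin is finite-rank but not rank one, so the perturbation determinant would be a genuinely matrix-valued object — the clean rank-one structure lives between $A^{*}A$ and $AA^{*}$ on the half-line, which is what the paper exploits (note its $T_0$ is the half-line operator with the semicircle measure, not the whole-line one). Second, the exceptional case $|a_\pm|=|p_\pm|=1$ (where $W$ is declared $0$ and the frozen operators vanish identically, so the SSF machinery degenerates) must be treated separately, as in Theorem \ref{exceptional}; your sketch omits it.
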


We note that the above formula coincides with Theorem \ref{old result} if $Q_{\epsilon_0}$ is Fredholm.
This follows from a direct computation.

As a consequence of the main result, we classify the Witten index and show that these indices only depend on the limits on both sides. Moreover, it takes half-integers such as $3/2, 1/2, -1/2$ or $-3/2.$ The appearance of half-integer values suggests the possibility of the existence of resonances at $+1$ or $-1.$ 
In other words, there may exist $\psi$ which the solution of eigenvalue problems
$$ U\psi= \psi,\quad \text{ or }\quad U\psi=-\psi$$
and $\psi\notin \ell^2(\mathbb{Z})\oplus\ell^2(\mathbb{Z})$ but $\psi\in \ell^\infty(\mathbb{Z})\oplus\ell^\infty(\mathbb{Z}).$ 
But the problem is left for the future study.
For quantum mechanical cases, see e.g. \cite{MR894842}.

Our proof is based on computing spectral shift functions and perturbation determinants directly,
	which is different from the method used in \cite{MR894842}.
For spectral shift functions and perturbation determinants, we refer \cite[Chapter 9]{MR2953553}.
The main difference between \cite{MR894842} and the present paper is as follows. 
In \cite{MR894842}, the authors mainly consider Dirac operators of the form 
\begin{align}\label{dirac}
Q:=\begin{bmatrix}0 & A^{\ast} \\ A & 0\end{bmatrix}_{L^2(\mathbb{R})\oplus L^2(\mathbb{R})},\quad A:=\displaystyle\frac{d}{dx}+\phi,\quad \phi:\mathbb{R}\rightarrow \R,
\end{align}
and compute the Witten index of $A.$
Since both $A^{\ast}A$ and $AA^{\ast}$ are second order differential operators whose properties are well known, the Witten index of \eqref{dirac} can be calculated by constructing Jost solutions. 
On the other hand, in our case, $Q_{\e_0}$ itself is  a second order difference operator.
Thus, to derive the Witten index of $Q_{\e_0}$, we have to deal with fourth order difference operators $Q_{\e_0}^*Q_{\e_0}$ and $Q_{\e_0}Q_{\e_0}^*$ that is more complicated than second order one.
Indeed, we face the resolvent of a fourth order difference operator when we derive the perturbation determinant (see \eqref{def:pd} below). 

Let us explain our strategy to compute the Witten index of $Q_{\e_0}.$
First, we cut the lattice $\mathbb{Z}$ at the origin, and reduce the computation of $Q_{\e_0}$ to that of some Toeplitz operator.
Second, 
We can reduce the computation of the spectral shift function of the pair $(Q_{\e_0}^*Q_{\e_0},Q_{\e_0}Q_{\e_0}^*)$
to that of the pair $(T,T_0)$, where $T_0$ is a Toeplitz operator and $T$ is a rank one perturbation of $T_0.$

The rest of this paper is organized as follows. 
In Section \ref{section: section 3}, we reduce $Q_{\epsilon_0}$ to a more simplified form. In Section \ref{section: section 4}, we analyze the properties of the perturbation determinant coming from $Q_{\epsilon_0}.$ In Section \ref{section: section 5}, we derive the spectral shift function and prove the main result. In Appendix \ref{section: Appendix A}, we prove the trace property of the difference between two exponential operators. In  Appendix \ref{section: Appendix B}, we discuss the spectral decomposition of the discrete Laplacian on the half-line. In  Appendix \ref{section: Appendix C}, we prove several properties of a quadratic equation which is closely related to Section \ref{section: section 4} and \ref{section: section 5}. 
In Appendix \ref{section: Appendix D}, we eliminate the phase terms of $b$ and $q$ by a unitary transformation to avoid the difficulty arising from the phase terms.
Contents of  Appendices \ref{section: Appendix A}, \ref{section: Appendix B} and \ref{section: Appendix C} may be well-known. Nevertheless, we include these supplementary sections for completeness.

Throughout the paper, we always assume the assumption of Theorem \ref{thm:main}.

\section{Reduction of \texorpdfstring{$Q_{\epsilon_{0}}$}{Qepsilonzero}}
\label{section: section 3}
In this section, we simplify $Q_{\epsilon_{0}}$ by using the summable conditions \eqref{equation: trace assumption}.
Let $\mathcal{B}(\mathcal{X})$ be the set of bounded operators on a Hilbert space $\mathcal{X}.$ 
We denote the set of trace class (resp. Hilbert-Schmidt) operators on $\mathcal{X}$ by $\mathcal{S}_{1}(\mathcal{X})$ (resp. $\mathcal{S}_{2}(\mathcal{X})$).
Let us recall the invariance of $\ind_{t}(\cdot)$ under trace class perturbations:
\begin{Thm}
\label{GS}
We define $\tilde{A}:=A+C$ with $A\in\mathcal{B}(\mathcal{X})$ and $C\in\mathcal{S}_{1}(\mathcal{X}).$ Then
\begin{align}
\label{equiv} A^* A-AA^*\in \mathcal{S}_{1}(\mathcal{X}) \mbox{ if and only if } \tilde{A}^* \tilde{A}-\tilde{A}\tilde{A}^*\in\mathcal{S}_{1}(\mathcal{X}).
\end{align}
In this case, we have 
\begin{align}\label{ind:equal}
\ind_{t}(\tilde{A})=\ind_{t}(A).
\end{align}
\end{Thm}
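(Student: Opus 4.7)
The plan is to verify both claims by direct algebraic manipulation, with the help of the intertwining identity $A\,e^{-tA^{\ast}A}=e^{-tAA^{\ast}}A$ (valid for any bounded $A$ by functional calculus, since $A(A^{\ast}A)^{n}=(AA^{\ast})^{n}A$ for all $n$).

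\textbf{Part 1 (equivalence \eqref{equiv}).} First I would expand
\[
\tilde{A}^{\ast}\tilde{A}-\tilde{A}\tilde{A}^{\ast}=(A^{\ast}A-AA^{\ast})+(A^{\ast}C+C^{\ast}A-AC^{\ast}-CA^{\ast})+(C^{\ast}C-CC^{\ast}).
\]
Since $C\in\mathcal{S}_{1}(\mathcal{X})$ and $\mathcal{S}_{1}(\mathcal{X})$ is a two-sided ideal in $\mathcal{B}(\mathcal{X})$, each of the four cross terms is trace class; moreover $C\in\mathcal{S}_{1}\subset\mathcal{S}_{2}$, so $C^{\ast}C$ and $CC^{\ast}$ are trace class as well. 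Hence $\tilde{A}^{\ast}\tilde{A}-\tilde{A}\tilde{A}^{\ast}$ and $A^{\ast}A-AA^{\ast}$ differ by a trace class operator, which yields the equivalence in both directions.

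\textbf{Part 2 (identity \eqref{ind:equal}).} I would interpolate via $A_{s}:=A+sC$ for $s\in[0,1]$ and show that
\[
\phi(s):=\operatorname{Tr}\!\bigl(e^{-tA_{s}^{\ast}A_{s}}-e^{-tA_{s}A_{s}^{\ast}}\bigr)
\]
is well-defined for every $s$ (by Part 1 applied to $sC$) and satisfies $\phi'(s)\equiv 0$. Differentiating with the Duhamel formula and using cyclicity of the trace gives
\[
\phi'(s)=-t\operatorname{Tr}\!\bigl((C^{\ast}A_{s}+A_{s}^{\ast}C)\,e^{-tA_{s}^{\ast}A_{s}}\bigr)+t\operatorname{Tr}\!\bigl((CA_{s}^{\ast}+A_{s}C^{\ast})\,e^{-tA_{s}A_{s}^{\ast}}\bigr).
\]
Applying the intertwining relation $A_{s}e^{-tA_{s}^{\ast}A_{s}}=e^{-tA_{s}A_{s}^{\ast}}A_{s}$ (and its adjoint) together with cyclicity transforms each summand of the first trace into the corresponding summand of the second trace, producing $\phi'(s)=0$. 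Hence $\phi(1)=\phi(0)$, which is exactly $\ind_{t}(\tilde{A})=\ind_{t}(A)$.

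\textbf{Main obstacle.} The algebra is short; the delicate point is bookkeeping the trace class membership at every step so that the Duhamel differentiation and the use of cyclicity are rigorously justified. Specifically, one must check that each of the individual operators $C^{\ast}A_{s}e^{-tA_{s}^{\ast}A_{s}}$, $A_{s}^{\ast}Ce^{-tA_{s}^{\ast}A_{s}}$, $CA_{s}^{\ast}e^{-tA_{s}A_{s}^{\ast}}$, $A_{s}C^{\ast}e^{-tA_{s}A_{s}^{\ast}}$ is trace class before applying cyclicity; this follows from $C\in\mathcal{S}_{1}$, boundedness of $A_{s}$, and the uniform bound $\|A_{s}e^{-tA_{s}^{\ast}A_{s}}\|\leq\|A_{s}\|$ coming from functional calculus. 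Continuity in $s$ of the trace norms involved then legitimizes integrating $\phi'\equiv0$ over $[0,1]$, completing the argument.
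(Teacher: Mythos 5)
Your argument is correct, and it splits exactly where the paper does, but the two halves are handled differently. For the equivalence \eqref{equiv} you do precisely what the paper does: expand $\tilde{A}^{\ast}\tilde{A}-\tilde{A}\tilde{A}^{\ast}$ and invoke the two-sided ideal property of $\mathcal{S}_{1}(\mathcal{X})$, so the two commutators differ by a trace class operator. For the identity \eqref{ind:equal}, however, the paper does not argue directly at all: it simply cites Gesztesy--Simon (Theorems 3.1 and 3.2 of the reference on topological invariance of the Witten index), which assert stability of $\ind_{t}$ under this kind of perturbation. Your interpolation argument --- setting $A_{s}=A+sC$, differentiating $\phi(s)=\Tr\bigl(e^{-tA_{s}^{\ast}A_{s}}-e^{-tA_{s}A_{s}^{\ast}}\bigr)$ via Duhamel, and killing $\phi'(s)$ with the intertwining relation $A_{s}e^{-tA_{s}^{\ast}A_{s}}=e^{-tA_{s}A_{s}^{\ast}}A_{s}$ plus cyclicity --- is a self-contained reproof of the special case of that citation that is actually needed here; the algebra checks out (each Duhamel integrand contains the trace class factor $C^{\ast}A_{s}+A_{s}^{\ast}C$ or $CA_{s}^{\ast}+A_{s}C^{\ast}$, so each trace is legitimate even though the exponentials themselves are not trace class, and the two resulting traces cancel exactly as you say). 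What the paper's route buys is brevity and access to the more general stability statements of Gesztesy--Simon; what yours buys is independence from that reference, at the cost of the analytic bookkeeping you correctly flag: one must show that $s\mapsto e^{-tA_{s}^{\ast}A_{s}}-e^{-tA_{s}A_{s}^{\ast}}$ is differentiable in trace norm (the Duhamel bound $\|e^{-tX_{1}}-e^{-tX_{0}}\|_{1}\leq t\,e^{t(\|X_{0}\|+\|X_{1}\|)}\|X_{1}-X_{0}\|_{1}$ supplies this, since $A_{s}^{\ast}A_{s}$ is quadratic in $s$ with trace class derivative), and note that well-definedness of $\phi(s)$ uses the paper's Appendix A proposition. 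With those points made explicit, your proof is complete and fully rigorous.
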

\begin{proof}
Since $\mathcal{S}_{1}(\mathcal{X})$ is an ideal, \eqref{equiv} follows from direct calculations. 
\eqref{ind:equal} follows from an application of \cite[Theorems 3.1 and 3.2]{MR950085}. 
\end{proof}
The following proposition easily follows from the properties of the trace.
\begin{Prop}\label{w}
The following assertions hold:
\begin{enumerate}
\item Suppose that for $A\in\mathcal{B}(\mathcal{X}),$ $w(A)$ exists. Then for any unitary operator $G,$ $w(GAG^{-1})$ exists and $w(GAG^{-1})=w(A).$
\item Let $A_{1}\in\mathcal{B}(\mathcal{X}_{1})$ and $A_{2}\in\mathcal{B}(\mathcal{X}_{2}),$
where $\mathcal{X}_1$ and $\mathcal{X}_2$ are two Hilbert spaces.
Suppose that both $w(A_{1})$ and $w(A_{2})$ exist. Then $w(A_{1}\oplus A_{2})$ exists and $w(A_{1}\oplus A_{2})=w(A_{1})+w(A_{2}).$
\end{enumerate}
\end{Prop}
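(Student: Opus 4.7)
Both assertions are essentially book-keeping statements about the functional $\ind_t(\cdot)$, and the plan is to verify them pointwise in $t$ and then pass to the limit $t\to\infty$. The only subtlety is that the proposition assumes the Witten indices exist, which implicitly requires the trace class condition on the self-commutators; I would verify that this condition transfers to $GAG^{-1}$ and to $A_1\oplus A_2$ before invoking cyclicity or additivity of the trace.

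For (1), I would start by noting that since $G$ is unitary, $(GAG^{-1})^{\ast}=GA^{\ast}G^{-1}$, so $(GAG^{-1})^{\ast}(GAG^{-1})=G(A^{\ast}A)G^{-1}$ and $(GAG^{-1})(GAG^{-1})^{\ast}=G(AA^{\ast})G^{-1}$. Consequently $(GAG^{-1})^{\ast}(GAG^{-1})-(GAG^{-1})(GAG^{-1})^{\ast}=G(A^{\ast}A-AA^{\ast})G^{-1}$, which is trace class whenever $A^{\ast}A-AA^{\ast}$ is, because $\mathcal{S}_1(\mathcal{X})$ is a two-sided ideal. By the continuous (Borel) functional calculus, $e^{-tG(A^{\ast}A)G^{-1}}=Ge^{-tA^{\ast}A}G^{-1}$ and similarly for $AA^{\ast}$. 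Then cyclicity of the trace gives $\ind_t(GAG^{-1})=\Tr\bigl(G(e^{-tA^{\ast}A}-e^{-tAA^{\ast}})G^{-1}\bigr)=\ind_t(A)$, and passing to the limit $t\to\infty$ yields the claim.

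For (2), I would use that direct sums commute with adjoints and products, so $(A_1\oplus A_2)^{\ast}(A_1\oplus A_2)=(A_1^{\ast}A_1)\oplus(A_2^{\ast}A_2)$ and similarly for $(A_1\oplus A_2)(A_1\oplus A_2)^{\ast}$. The self-commutator is therefore the direct sum of the two self-commutators, hence trace class on $\mathcal{X}_1\oplus\mathcal{X}_2$. By the functional calculus, $e^{-t(A_1\oplus A_2)^{\ast}(A_1\oplus A_2)}=e^{-tA_1^{\ast}A_1}\oplus e^{-tA_2^{\ast}A_2}$, and the same holds with stars swapped. Using additivity of the trace on direct sums of trace class operators one obtains $\ind_t(A_1\oplus A_2)=\ind_t(A_1)+\ind_t(A_2)$, and once more letting $t\to\infty$ gives existence of $w(A_1\oplus A_2)$ and the asserted identity.

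The main obstacle, if one can call it that, is just to make the trace class justification explicit rather than presuming it; the core calculation is a one-line application of cyclicity in (1) and of additivity of the trace on direct sums in (2), both of which follow once the relevant operator identities above are recorded.
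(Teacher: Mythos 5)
Your argument is correct and is exactly what the paper has in mind: the paper gives no detailed proof, remarking only that the proposition ``easily follows from the properties of the trace,'' and your explicit computation (conjugation commuting with the functional calculus plus cyclicity of the trace for (1), direct-sum splitting of the exponentials plus additivity of the trace for (2)) is that intended argument spelled out. One could even bypass the discussion of the self-commutator by conjugating or splitting the trace-class difference $e^{-tA^{\ast}A}-e^{-tAA^{\ast}}$ directly, but this is a cosmetic simplification, not a gap.
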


We now return to split-step quantum walks.
First of all, by Theorem \ref{lemma: NOW transform} and Proposition \ref{ind:equal}, we may assume without loss of generality that $b(x)\geq 0$ and $q(x)\geq0$ for all $x\in\mathbb{Z}.$
Thus we can choose $\theta$ in \eqref{ymq} so that $\theta(x)=0$ for any $x\in\mathbb{Z}.$
Next, we decompose $\ell^2(\Z)$ into two closed subspace $\ell^2(\Z_{\ge 0})$ and $\ell^2(\Z_{\le -1}),$ where
\begin{align}\label{Hpm}
\ell^2(\Z_{\ge 0}):=\{\psi\in \ell^2(\Z)|\ \psi(x)=0,\ x\le -1\},\quad \ell^2(\Z_{\le -1}):=\{\psi\in \ell^2(\Z)|\ \psi(x)=0,\ x\ge 0\}.
\end{align}
We denote orthogonal projections from $\ell^2(\Z)$ onto $\ell^2(\Z_{\ge 0})$ and $\ell^2(\Z_{\le -1})$ by $P_{+}$ and $P_{-},$ respectively. Moreover, we define an operator $Q_{\e_{0}, 0}$ by
\begin{align}\label{Qd}
Q_{\epsilon_{0}, 0}:=P_{-}Q_{\epsilon_{0}}P_{-}+ P_{+}Q_{\epsilon_{0}}P_{+}.
\end{align}
\begin{Lem}\label{cut:origin}
We have $Q_{\epsilon_{0}}-Q_{\e_{0},0}\in\mathcal{S}_{1}(\ell^2(\Z)).$
\end{Lem}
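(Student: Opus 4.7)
The plan is to exploit the observation that $Q_{\epsilon_0}$ is a tridiagonal band operator in the position basis, which forces the off-diagonal pieces with respect to the decomposition $\ell^2(\mathbb{Z}) = \ell^2(\mathbb{Z}_{\le -1}) \oplus \ell^2(\mathbb{Z}_{\ge 0})$ to be of finite rank. My first step would be to insert the resolution of identity $I = P_+ + P_-$ on both sides of $Q_{\epsilon_0}$ and subtract the diagonal blocks, yielding
\[
Q_{\epsilon_0} - Q_{\epsilon_0, 0} = P_+ Q_{\epsilon_0} P_- + P_- Q_{\epsilon_0} P_+.
\]

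Next, I would verify the band structure directly from the formula \eqref{ymq}. Under the reduction $\theta \equiv 0$ and $b,q \ge 0$ already adopted at the start of this section, the operator $Q_{\epsilon_0}$ is the sum of three pieces: the multiplication operator $-\tfrac{\im}{2} q\{a + a(\cdot+1)\}$, the term $\tfrac{\im}{2}\sqrt{1+p}\, L\, b\, \sqrt{1+p}$, and the term $-\tfrac{\im}{2}\sqrt{1-p}\, b\, L^*\, \sqrt{1-p}$. Since all the coefficients are diagonal in the canonical basis $\{\delta_y\}_{y\in\mathbb{Z}}$ and $L$ (resp.\ $L^*$) moves $\delta_y$ to $\delta_{y-1}$ (resp.\ $\delta_{y+1}$), a direct matrix-element computation shows that $(Q_{\epsilon_0})_{x,y} = 0$ whenever $|x-y| \ge 2$.

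Combining the two observations, the entry $(P_+ Q_{\epsilon_0} P_-)_{x,y}$ can be nonzero only if $x \ge 0$, $y \le -1$, and $|x-y| \le 1$, which forces $(x,y) = (0,-1)$. Hence $P_+ Q_{\epsilon_0} P_-$ has rank at most one, and by the symmetric argument $P_- Q_{\epsilon_0} P_+$ is supported only at $(-1,0)$ and is likewise of rank at most one. Thus $Q_{\epsilon_0} - Q_{\epsilon_0, 0}$ has rank at most two and automatically belongs to $\mathcal{S}_1(\ell^2(\mathbb{Z}))$.

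I do not expect a genuine obstacle here: the summability hypothesis \eqref{equation: trace assumption} is not even required for this statement, since the argument depends only on the finite bandwidth of $Q_{\epsilon_0}$ and the boundedness of its coefficients (which follows from $|a|,|b|,|p|,|q|\le 1$). The only minor care needed is in unwinding the action of the shift operators when computing the matrix elements to confirm tridiagonality.
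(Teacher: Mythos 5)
Your proposal is correct and is essentially the paper's own argument: the paper likewise observes that $(Q_{\epsilon_0}-Q_{\epsilon_0,0})\psi$ can be nonzero only at the sites $x=-1,0$ (which is exactly your matrix-element statement that the only surviving entries are $(0,-1)$ and $(-1,0)$), so the difference has rank at most two and is therefore trace class. Your extra observations — the identity $Q_{\epsilon_0}-Q_{\epsilon_0,0}=P_+Q_{\epsilon_0}P_-+P_-Q_{\epsilon_0}P_+$ and the fact that the summability condition \eqref{equation: trace assumption} is not needed here — are accurate but do not change the route.
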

\begin{proof}
For any $\psi\in \ell^2(\Z),$ we see that
$$((Q_{\e_{0}}-Q_{\e_{0}, 0})\psi)(x)=0,\quad x\in\Z \setminus\{-1, 0\}.$$
Thus, the rank of $Q_{\epsilon_{0}}-Q_{\e_{0},0}$ is at most two.
\end{proof}
We introduce the following functions $F_+$ and $F_-$:
\begin{align}\label{f:pm}
F_{\pm}(z):=\displaystyle\frac{\im}{2}\left[ (1+p_{\pm})b_{\pm}z-(1-p_{\pm})b_{\pm}\overline{z}-2q_{\pm}a_{\pm}\right],
\qquad z\in\mathbb{C}.
\end{align}
For a bounded operator $X$ on a Hilbert space $\mathcal{X}$, we set
\[
F_\pm(X) := \frac{\im}{2}\left[ (1+p_{\pm})b_{\pm}X-(1-p_{\pm})b_{\pm}X^*-2q_{\pm}a_{\pm}\right].
\]

\begin{Lem}\label{two-phase}
We have
$$P_{+}Q_{\epsilon_{0}}P_{+}-P_{+}F_{+}(L)P_{+}\in\mathcal{S}_{1}(\ell^2(\Z_{\ge 0})),\quad P_{-}Q_{\epsilon_{0}}P_{-}-P_{-}F_{-}(L)P_{-}\in\mathcal{S}_{1}(\ell^2(\Z_{\le -1})).$$
\end{Lem}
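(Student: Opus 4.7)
The plan is to expand $Q_{\e_0}$ into three terms of the form \emph{(multiplication operator)} times \emph{(shift or identity)} by using the commutation relations $Lm = m(\cdot+1)L$ and $L^* m = m(\cdot-1)L^*$ valid for any bounded multiplication operator $m$. This yields
\[
Q_{\e_0}=\tfrac{\im}{2}\bigl[M^{(1)}L - M^{(2)}L^* - M^{(3)}\bigr],
\]
with $M^{(1)}(x)=\sqrt{1+p(x)}\,b(x+1)\,\sqrt{1+p(x+1)}$, $M^{(2)}(x)=\sqrt{1-p(x)}\,b(x)\,\sqrt{1-p(x-1)}$, and $M^{(3)}(x)=q(x)\bigl[a(x)+a(x+1)\bigr]$. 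The constant-coefficient expression $F_+(L)$ has exactly the same shape but with $M^{(1)}, M^{(2)}, M^{(3)}$ replaced by the scalars $(1+p_+)b_+$, $(1-p_+)b_+$, $2q_+a_+$, which are precisely the pointwise limits of the three sequences as $x\to+\infty$. Hence it suffices to show that each sequence $M^{(i)}$ differs from its limit by a sequence in $\ell^1(\Z_{\ge 0})$; a diagonal multiplication by such a sequence is trace class on $\ell^2(\Z_{\ge 0})$, and composition with the bounded operators $P_+L$, $P_+L^*$, $P_+$ preserves the trace class property.

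The heart of the matter is to control the two factor products $\sqrt{(1+p(x))(1+p(x+1))}$ and $\sqrt{(1-p(x))(1-p(x-1))}$ against their respective limits $1+p_+$ and $1-p_+$. When the limit is strictly positive, the identity $\sqrt{uv}-w=(uv-w^2)/(\sqrt{uv}+w)$ together with $|uv-w^2|\le u|v-w|+w|u-w|$ yields the uniform estimate
\[
\bigl|\sqrt{(1+p(x))(1+p(x+1))}-(1+p_+)\bigr| \le C\bigl(|p(x)-p_+|+|p(x+1)-p_+|\bigr),
\]
which is summable by (\ref{equation: trace assumption}). In the degenerate boundary case $p_+=-1$ the Lipschitz estimate for $\sqrt{\cdot\,}$ breaks down, but the limit is zero and AM--GM gives $\sqrt{(1+p(x))(1+p(x+1))}\le\tfrac12\bigl(|p(x)-p_+|+|p(x+1)-p_+|\bigr)$ directly; the case $p_+=+1$ for the second factor is handled symmetrically. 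Combining these with a standard telescoping across the three bounded factors of each $M^{(i)}$ gives $M^{(i)} - M^{(i)}_{+\infty} \in \ell^1(\Z_{\ge 0})$, up to a finite-rank correction coming from boundary indices near $x=0$.

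The proof of the minus-side statement $P_-Q_{\e_0}P_- - P_-F_-(L)P_- \in \mathcal{S}_1(\ell^2(\Z_{\le -1}))$ is completely analogous, using the summability on $\Z_{\le -1}$ supplied by the other half of (\ref{equation: trace assumption}). The main obstacle I anticipate is precisely the degenerate case $p_\pm\in\{-1,+1\}$, where the square-root factors fail to be Lipschitz; as just described, the product structure of the coefficients together with the AM--GM inequality circumvents this obstruction.
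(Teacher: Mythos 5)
Your proposal is correct and follows essentially the same route as the paper: commute the shifts past the multiplication operators, split $Q_{\e_0}-F_+(L)$ into the three corresponding terms, compare each coefficient sequence with its limit via $\sqrt{uv}-w=(uv-w^2)/(\sqrt{uv}+w)$, and invoke the summability hypothesis \eqref{equation: trace assumption}, treating the endpoint case $p_+=-1$ (resp.\ $p_+=+1$) separately. The only cosmetic difference is at that endpoint, where you get the $\ell^1$ bound directly from AM--GM, while the paper instead writes $\sqrt{1+p}\sqrt{1+p(\cdot+1)}$ as a product of two Hilbert--Schmidt diagonal operators; both arguments are valid.
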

\begin{proof}
We only consider the difference $P_{+}Q_{\epsilon_{0}}P_{+}-P_{+}F_{+}(L)P_{+}.$ The other case is similar to this. First, we decompose $Q_{\e_{0}}-F_{+}(L)$ into three parts:
\begin{align*}
-2\im P_+\left(Q_{\e_{0}}-F_{+}(L)\right)P_+&=P_+\left\{\sqrt{1+p}Lb\sqrt{1+p}-(1+p_{+})b_{+}L\right\}P_+\\
&\hspace{5mm}-P_+\left\{\sqrt{1-p}bL^{\ast}\sqrt{1-p}-(1-p_{+})b_{+}L^{\ast}\right\}P_+\\
&\hspace{10mm}-P_+\left\{q(a+a(\cdot+1))-2q_{+}a_{+}\right\}P_+.
\end{align*}
The term $q(a+a(\cdot+1))-2q_{+}a_{+}$ can be deformed as
$$ q(a+a(\cdot+1))-2q_{+}a_{+}=(q-q_{+})(a+a(\cdot+1))+q_{+}(a-a_{+})+q_{+}(a(\cdot+1)-a_{+}).$$
By the summable conditions \eqref{equation: trace assumption}, 
we have
$P_{+}\{q(a+a(\cdot+1))-2q_{+}a_{+}\}P_{+}\in\mathcal{S}_{1}(\ell^2(\Z_{\ge 0})).$

Next, we consider the first term. 
For the multiplication operator by a bounded sequence $m:\mathbb{Z}\rightarrow\C,$ it follows that $Lm=m(\cdot+1)L$ and $L^{\ast}m=m(\cdot-1)L^{\ast}.$ 
Thus, we have
\begin{align*}
&\hspace{5mm}\sqrt{1+p}Lb\sqrt{1+p}-(1+p_{+})b_{+}L
\\
&=\sqrt{1+p}\sqrt{1+p(\cdot+1)}Lb-(1+p_{+})b_{+}L
\\
&=\left\{\sqrt{1+p}\sqrt{1+p(\cdot+1)}-(1+p_{+})\right\}Lb+(1+p_{+})L(b-b_{+}).
\end{align*}
It is seen that $P_{+}\{(1+p_{+})L(b-b_{+})\}P_{+}\in\mathcal{S}_{1}(\ell^2(\Z_{\ge 0})).$ 
We show $P_{+}\{\sqrt{1+p}\sqrt{1+p(\cdot+1)}-(1+p_{+})\}P_{+}\in\mathcal{S}_{1}(\ell^2(\Z_{\ge 0})).$ If $p_{+}\in(-1, 1],$ it can be deformed as follows:
\begin{align*}
\left|\sqrt{1+p}\sqrt{1+p(\cdot+1)}-(1+p_{+})\right|&=\left|\displaystyle\frac{p+p(\cdot+1)+p\cdot p(\cdot+1)-2p_{+}-p_{+}^2}{\sqrt{1+p}\sqrt{1+p(\cdot+1)}+(1+p_{+})}\right|
\\ 
&=\left|\displaystyle\frac{(p-p_{+})+\{p(\cdot+1)-p_{+}\}+\{p\cdot p(\cdot+1)-p_{+}^2\}}{\sqrt{1+p}\sqrt{1+p(\cdot+1)}+(1+p_{+})}\right|
\\ 
&\le \left|\displaystyle\frac{(p-p_{+})+\{p(\cdot+1)-p_{+}\}+\{p\cdot p(\cdot+1)-p_{+}^2\}}{1+p_{+}}\right|.
\end{align*}
By the summable conditions \eqref{equation: trace assumption}, we have $P_{+}\{\sqrt{1+p}\sqrt{1+(\cdot+1)}-(1+p_{+})\}P_{+}\in S_{1}(\ell^2(\Z_{\ge 0})).$ 
If $p_{+}=-1,$ the summable conditions \eqref{equation: trace assumption} imply
\begin{align}
\displaystyle\sum_{x\ge0}|p(x)-p_{+}|=\displaystyle\sum_{x\ge0}(1+p(x))<\infty.
\end{align}
Moreover, it is seen that
\begin{align}
\sqrt{1+p(x)}\sqrt{1+p(x+1)}-(1+p_+)=\sqrt{1+p(x)}\sqrt{1+p(x+1)},\quad x\ge 0.
\end{align}
Since $\sqrt{1+p},$ $\sqrt{1+p(\cdot+1)}\in \mathcal{S}_{2}(\ell^2(\Z_{\ge 0})),$ we have $\sqrt{1+p}\sqrt{1+p(\cdot+1)}\in \mathcal{S}_{1}(\ell^2(\Z_{\ge 0})).$ 
Therefore the first term is in $\mathcal{S}_{1}(\ell^2(\Z_{\ge 0})).$

It can be proven similarly that the second term is in $\mathcal{S}_{1}(\ell^2(\Z_{\ge 0})).$
This completes the proof.
\end{proof}
By Lemma \ref{cut:origin} and Lemma \ref{two-phase}, we have the following lemma.
\begin{Lem}
\label{reduce}
It follows that $Q_{\epsilon_{0}}-\{P_{+}F_{+}(L)P_{+}+ P_{-}F_{-}(L)P_{-}\}\in\mathcal{S}_{1}(\ell^2(\Z)).$
\end{Lem}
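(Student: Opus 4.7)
The plan is to write the operator $Q_{\epsilon_0}-\{P_+F_+(L)P_++P_-F_-(L)P_-\}$ as a telescoping sum whose each summand has already been shown to be trace class in Lemma~\ref{cut:origin} and Lemma~\ref{two-phase}. Concretely, I would insert the intermediate operator $Q_{\epsilon_0,0}=P_-Q_{\epsilon_0}P_-+P_+Q_{\epsilon_0}P_+$ and rewrite
\[
Q_{\epsilon_0}-\bigl(P_+F_+(L)P_++P_-F_-(L)P_-\bigr)
= \bigl(Q_{\epsilon_0}-Q_{\epsilon_0,0}\bigr)
+\bigl(P_+Q_{\epsilon_0}P_+-P_+F_+(L)P_+\bigr)
+\bigl(P_-Q_{\epsilon_0}P_--P_-F_-(L)P_-\bigr).
\]

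From here, each of the three terms lies in $\mathcal{S}_1$: the first by Lemma~\ref{cut:origin} (it is even of finite rank), and the remaining two by Lemma~\ref{two-phase}. Since the trace class forms an ideal, and in particular is closed under finite sums, the whole difference belongs to $\mathcal{S}_1(\ell^2(\mathbb{Z}))$. The only mildly delicate point is viewing the latter two differences, which are a priori operators on $\ell^2(\Z_{\ge 0})$ and $\ell^2(\Z_{\le -1})$ respectively, as operators on the ambient space $\ell^2(\mathbb{Z})$; but this is immediate because $P_\pm$ are orthogonal projections and extending a trace class operator by zero off a closed subspace preserves the trace norm.

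I do not anticipate any genuine obstacle here since all of the analytic work — controlling the discrepancies coming from the sequences $a,b,p,q$ and from the possibly vanishing weight $\sqrt{1+p}$ at the boundary of $[-1,1]$ — has already been carried out in the two lemmas that precede this statement. The proof therefore reduces to the bookkeeping identity above and one invocation of the ideal property of $\mathcal{S}_1$.
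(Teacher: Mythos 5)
Your proposal is correct and matches the paper's argument: the paper derives Lemma \ref{reduce} directly from Lemma \ref{cut:origin} and Lemma \ref{two-phase} via exactly this telescoping decomposition and the fact that $\mathcal{S}_1$ is closed under finite sums. The remark about extending the trace class pieces from $\ell^2(\Z_{\ge 0})$ and $\ell^2(\Z_{\le -1})$ to the ambient space by zero is a harmless clarification that the paper leaves implicit.
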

Before going to the next proposition, we introduce several notations. We set $\{\delta_{x}\}_{x\ge 0}$ as the standard basis of $\ell^2(\Z_{\ge0})$ and $v$ is the right-shift operator on $\ell^2(\Z_{\ge0})$, that is
\[
v\delta_{x}=\delta_{x+1},\qquad x\geq0.
\]
Let $\Omega_{0}$ be the orthogonal projection from $\ell^2(\Z_{\ge0})$ onto $\mathbb{C}\delta_0$.
Then it follows that
\begin{align}\label{halfline}
\quad v^{\ast}v=1, \quad vv^{\ast}=1-\Omega_{0}.
\end{align}
Let $\{|x\rangle\}_{x\in\mathbb{Z}}$ be the standard basis of $\ell^2(\mathbb{Z}).$
We define a unitary operator $V:\ell^2(\Z)\rightarrow \ell^2(\Z_{\ge0})\oplus \ell^2(\Z_{\ge0})$ by
\begin{align}\label{v}
	V|x\rangle := 
	\begin{cases}
		(0, \delta_{x}),  &x\ge 0, \\ 
		(\delta_{-x-1}, 0), &x\le -1.
	\end{cases}
\end{align}

\begin{Prop}\label{cut}
	It follows that
$$V\{P_{-}F_{-}(L)P_{-}+P_{+}F_{+}(L)P_{+}\}V^{-1}=F_{-}(v)\oplus F_{+}(v^{\ast}).$$
\end{Prop}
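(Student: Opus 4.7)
The plan is to verify this identity by reducing it to two independent computations on the two halves of the integer lattice and then taking the direct sum. Observe that the operator $P_- F_-(L) P_- + P_+ F_+(L) P_+$ preserves the orthogonal decomposition $\ell^2(\Z) = \ell^2(\Z_{\le -1}) \oplus \ell^2(\Z_{\ge 0})$, and by \eqref{v} the unitary $V$ maps each summand onto the corresponding summand of $\ell^2(\Z_{\ge 0})\oplus\ell^2(\Z_{\ge 0})$: the negative half goes to the first summand via the reflection $x\mapsto -x-1$, while the positive half goes to the second summand via the identification $|x\rangle\mapsto\delta_x$. Writing $V_\pm$ for these two restrictions, it therefore suffices to prove the two block identities
\[
V_-\bigl(P_- F_-(L) P_-\bigr)V_-^* = F_-(v), \qquad V_+\bigl(P_+ F_+(L) P_+\bigr)V_+^* = F_+(v^*).
\]

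For the positive side I would simply compute the conjugates of $P_+ L P_+$ and $P_+ L^* P_+$ on the basis $\{\delta_x\}_{x\ge 0}$. Since $L|x\rangle = |x-1\rangle$ with $P_+|-1\rangle = 0$, and $L^*|x\rangle = |x+1\rangle$ always remains in $\ell^2(\Z_{\ge 0})$, a short check gives $V_+ P_+ L P_+ V_+^* = v^*$ and $V_+ P_+ L^* P_+ V_+^* = v$; the boundary identity $v^* \delta_0 = 0$ corresponds precisely to the fact that $P_+$ kills $|-1\rangle$, consistent with $vv^* = 1 - \Omega_0$ in \eqref{halfline}. Substituting these identifications into the definition \eqref{f:pm} of $F_+$, and using that the scalar term $-2q_+ a_+$ is unaffected by the unitary conjugation, yields $F_+(v^*)$.

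The negative side is analogous, but because the reflection $x\mapsto -x-1$ reverses orientation, the roles of $L$ and $L^*$ get swapped: a direct computation on basis vectors produces $V_- P_- L P_- V_-^* = v$ and $V_- P_- L^* P_- V_-^* = v^*$. Here the boundary sits at $x=-1$, and the vanishing $P_- L^* P_-|-1\rangle = P_-|0\rangle = 0$ again matches $v^*\delta_0 = 0$ after reflection. Substituting into $F_-$ gives $F_-(v)$, and assembling the two blocks gives the claimed direct sum $F_-(v) \oplus F_+(v^*)$.

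The main (and only) obstacle is careful bookkeeping of the orientation reversal on $\ell^2(\Z_{\le -1})$ and of the boundary behaviour at the cut point; no analytic input is required, since the proposition is a purely algebraic consequence of how $V$ intertwines the bilateral shift operators $L, L^*$ with the unilateral shift $v$ and its adjoint on the two half-lines.
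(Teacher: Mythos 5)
Your proposal is correct and follows essentially the same route as the paper: you compute the conjugations $V P_{+}LP_{+}V^{-1}=0\oplus v^{\ast}$, $V P_{+}L^{\ast}P_{+}V^{-1}=0\oplus v$, $V P_{-}LP_{-}V^{-1}=v\oplus 0$, $V P_{-}L^{\ast}P_{-}V^{-1}=v^{\ast}\oplus 0$ (which is exactly what the paper does) and then substitute into the definition of $F_{\pm}$. The careful tracking of the orientation reversal and the boundary behaviour at the cut is exactly the content of the paper's argument, so nothing is missing.
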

\begin{proof}
By the definitions of $L$ and $v,$ we have
\begin{align}
VP_{+}LP_{+}V^{-1}=0\oplus v^{\ast},\quad VP_{+}L^{\ast}P_{+}V^{-1}=0\oplus v,\quad VP_{-}LP_{-}V^{-1}=v\oplus 0,\quad VP_{-}L^{\ast}P_{-}V^{-1}=v^{\ast}\oplus 0.
\end{align}
Thus, it follows that
\begin{align*}
&V\{P_{-}F_{-}(L)P_{-}+P_{+}F_{+}(L)P_{+}\}V^{-1}
\\
&=\displaystyle\frac{\im}{2}\left[ (1+p_{-})b_{-}VP_{-}LP_{-}V^{-1}-(1-p_{-})b_{-}VP_{-}L^{\ast}P_{-}V^{-1}-2q_{-}a_{-}VP_{-}V^{-1}\right]
\\
&\hspace{10mm}+\displaystyle\frac{\im}{2}\left[ (1+p_{+})b_{+}VP_{+}LP_{+}V^{-1}-(1-p_{+})b_{+}VP_{+}L^{\ast}P_{+}V^{-1}-2q_{+}a_{+}VP_{+}V^{-1}\right]
\\
&=F_{-}(v)\oplus F_{+}(v^{\ast}),
\end{align*}
where we used that $VP_{-}V^{-1}=1\oplus 0$ and $VP_{+}V^{-1}=0\oplus 1.$
\end{proof}

By Theorem \ref{GS}, Proposition \ref{w}, \ref{cut}, and Lemma \ref{reduce}, we have
\begin{align}\label{index}
w(Q_{\epsilon_0}) = w(F_{+}(v^{\ast}))+w(F_{-}(v)),
\end{align}
whenever $w(F_{+}(v^{\ast}))$ and $w(F_{-}(v))$ exist.
\begin{Lem}\label{lem:fastf}
We have
\begin{align*}
F_{+}(v^{\ast})^{\ast}F_{+}(v^{\ast})&=\displaystyle\frac{1}{4}\left[-b_{+}^2(1-p_{+}^2)\{v^{2}+(v^{\ast})^2\}-4a_{+}b_{+}p_{+}q_{+}(v+v^{\ast})+2b_{+}^2(1+p_{+}^2)+4a_{+}^2q_{+}^2-b_{+}^2(1+p_{+})^2\Omega_{0}\right],
\\
F_{+}(v^{\ast})F_{+}(v^{\ast})^{\ast}&=\displaystyle\frac{1}{4}\left[-b_{+}^2(1-p_{+}^2)\{v^{2}+(v^{\ast})^2\}-4a_{+}b_{+}p_{+}q_{+}(v+v^{\ast})+2b_{+}^2(1+p_{+}^2)+4a_{+}^2q_{+}^2-b_{+}^2(1-p_{+})^2\Omega_{0}\right],
\\
F_{-}(v)^{\ast}F_{-}(v)&=\displaystyle\frac{1}{4}\left[-b_{-}^2(1-p_{-}^2)\{v^{2}+(v^{\ast})^2\}-4a_{-}b_{-}p_{-}q_{-}(v+v^{\ast})+2b_{-}^2(1+p_{-}^2)+4a_{-}^2q_{-}^2-b_{-}^2(1-p_{-})^2\Omega_{0}\right],
\\
F_{-}(v)F_{-}(v)^{\ast}&=\displaystyle\frac{1}{4}\left[-b_{-}^2(1-p_{-}^2)\{v^{2}+(v^{\ast})^2\}-4a_{-}b_{-}p_{-}q_{-}(v+v^{\ast})+2b_{-}^2(1+p_{-}^2)+4a_{-}^2q_{-}^2-b_{-}^2(1+p_{-})^2\Omega_{0}\right].
\end{align*}
In particular, $F_{+}(v^{\ast})$ and $F_{-}(v)$ satisfy the trace class condition. 
That is, we have 
\begin{align}\label{compartible}
F_{+}(v^{\ast})^{\ast}F_{+}(v^{\ast})-F_{+}(v^{\ast})F_{+}(v^{\ast})^{\ast}\in\mathcal{S}_{1}(\ell^2(\Z_{\ge 0})),\quad F_{-}(v)^{\ast}F_{-}(v)-F_{-}(v)F_{-}(v)^{\ast}\in\mathcal{S}_{1}(\ell^2(\Z_{\ge 0})).
\end{align}
\end{Lem}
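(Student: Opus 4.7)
The plan is to prove Lemma \ref{lem:fastf} by direct algebraic expansion. To tame the bookkeeping, abbreviate $\alpha_\pm := (1+p_\pm)b_\pm$, $\beta_\pm := (1-p_\pm)b_\pm$, $\gamma_\pm := 2a_\pm q_\pm$, all of which are real numbers (the phase of $q$ has already been eliminated at the start of the section via Theorem \ref{lemma: NOW transform}, so we are working under the reduction $b(x), q(x) \ge 0$). With this notation,
\[
F_+(v^*) = \tfrac{\im}{2}[\alpha_+ v^* - \beta_+ v - \gamma_+], \qquad F_-(v) = \tfrac{\im}{2}[\alpha_- v - \beta_- v^* - \gamma_-],
\]
and the adjoints are obtained by replacing $\im$ with $-\im$ and swapping $v \leftrightarrow v^*$ in the first two summands.

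Next I would multiply out each of the four products and collect the nine resulting monomials using the commutation relations $v^*v = 1$ and $vv^* = 1 - \Omega_0$ from \eqref{halfline}. The two diagonal terms contribute $\alpha_\pm^2 + \beta_\pm^2$ together with a single copy of $\Omega_0$ whose coefficient is either $\alpha_\pm^2$ or $\beta_\pm^2$, depending on whether $v$ or $v^*$ occupies the outer position in the product; this is the sole source of asymmetry between the four formulas and is precisely what yields the coefficients $(1\pm p_\pm)^2$ in front of $\Omega_0$. The cross terms combine to $-\alpha_\pm\beta_\pm[v^2 + (v^*)^2]$ and $(\beta_\pm - \alpha_\pm)\gamma_\pm(v + v^*) = -4a_\pm b_\pm p_\pm q_\pm (v+v^*)$. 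Substituting the elementary identities $\alpha_\pm^2 + \beta_\pm^2 = 2b_\pm^2(1+p_\pm^2)$, $\alpha_\pm\beta_\pm = (1-p_\pm^2)b_\pm^2$, and $\gamma_\pm^2 = 4 a_\pm^2 q_\pm^2$ reproduces the four displayed formulas.

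For the trace class condition I would subtract the two formulas within each $\pm$ case. All contributions from $v^2 + (v^*)^2$ and $v + v^*$ agree between $F^*F$ and $FF^*$ and therefore cancel, leaving only the $\Omega_0$-terms:
\[
F_+(v^*)^*F_+(v^*) - F_+(v^*)F_+(v^*)^* = \tfrac{1}{4}\bigl[(1-p_+)^2 - (1+p_+)^2\bigr] b_+^2 \Omega_0 = -b_+^2 p_+ \Omega_0,
\]
and analogously $F_-(v)^*F_-(v) - F_-(v)F_-(v)^* = b_-^2 p_- \Omega_0$. Since $\Omega_0$ is a rank one projection, both differences lie in $\mathcal{S}_1(\ell^2(\mathbb{Z}_{\ge 0}))$, which is the required trace class condition \eqref{compartible}. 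No serious obstacle is expected; the computation is elementary and the only point demanding care is to keep $v^*v$ and $vv^*$ distinct at every occurrence, since that distinction is what ultimately produces the four different $\Omega_0$-coefficients.
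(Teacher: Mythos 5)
Your proposal is correct and follows essentially the same route as the paper: a direct expansion of the four products using $v^{*}v=1$ and $vv^{*}=1-\Omega_{0}$, with the only asymmetry coming from which term produces $vv^{*}$, and the trace class claim then following because the differences reduce to scalar multiples of the rank-one projection $\Omega_{0}$ (indeed $-p_{+}b_{+}^{2}\Omega_{0}$ and $p_{-}b_{-}^{2}\Omega_{0}$, as you computed). The abbreviations $\alpha_{\pm},\beta_{\pm},\gamma_{\pm}$ and the identities you list are consistent with the paper's computation, which carries out only the first case explicitly and leaves the rest as analogous.
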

\begin{proof}
We only consider $F_{+}(v^{\ast})^{\ast}F_{+}(v^{\ast}).$ The other cases can be proven similarly. By direct calculations, we have
\begin{align*}
F_{+}(v^{\ast})^{\ast}F_{+}(v^{\ast})&=\displaystyle\frac{1}{4}\left[(1+p_{+})b_{+}v-(1-p_{+})b_{+}v^{\ast}-2q_{+}a_{+}\right]\cdot \left[(1+p_{+})b_{+}v^{\ast}-(1-p_{+})b_{+}v-2q_{+}a_{+}\right]
\\
&=\displaystyle\frac{1}{4}[(1+p_{+})^{2}b_{+}^{2}vv^{\ast}-(1-p_{+}^2)b_{+}^{2}v^{2}-2(1+p_{+})b_{+}q_{+}a_{+}v-(1-p_{+}^2)b_{+}^{2}(v^{\ast})^{2}+(1-p_{+})^{2}b_{+}^{2}v^{\ast}v
\\
&\hspace{30mm}+2(1-p_{+})b_{+}q_{+}a_{+}v^{\ast}-2(1+p_{+})b_{+}q_{+}a_{+}v^{\ast}+2(1-p_{+})q_{+}a_{+}b_{+}v+4q_{+}^{2}a_{+}^{2}]
\\
&=\displaystyle\frac{1}{4}[-b_{+}^2(1-p_{+}^2)\{v^{2}+(v^{\ast})^{2}\}-4a_{+}b_{+}p_{+}q_{+}(v+v^{\ast})
\\
&\hspace{30mm}+b_{+}^{2}(1+p_{+})^{2}+(1-p_{+})^{2}b_{+}^{2}+4a_{+}^{2}q_{+}^{2}-b_{+}^{2}(1+p_{+})^{2}\Omega_{0}]
\\
&=\displaystyle\frac{1}{4}[-b_{+}^2(1-p_{+}^2)\{v^{2}+(v^{\ast})^{2}\}-4a_{+}b_{+}p_{+}q_{+}(v+v^{\ast})+2b_{+}^{2}(1+p_{+}^{2})+4a_{+}^{2}q_{+}^{2}-b_{+}^{2}(1+p_{+})^{2}\Omega_{0}],
\end{align*}
where to get the third line, we used \eqref{halfline}. 
\end{proof}

We compute the Witten indices of $F_{+}(v^{\ast})$ and $F_{-}(v)$ when they are Fredholm.
\begin{Thm}\label{fredholm}
The following assertions hold: 
\begin{enumerate}
\item The operator $F_{+}(v^{\ast})$ is Fredholm if and only if $|p_{+}|\neq |a_{+}|.$ In this case, we have
\begin{align}\label{wf+}
w(F_{+}(v^{\ast}))=W(a_+,p_+)
\end{align}
\item The operator $F_{-}(v)$ is Fredholm if and only if $|p_{-}|\neq |a_{-}|.$ In this case, we have
\begin{align}\label{wf-}
w(F_{-}(v))=-W(a_-,p_-)
\end{align}
\end{enumerate}
\end{Thm}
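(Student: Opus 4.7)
The plan is to recognise $F_+(v^*)$ and $F_-(v)$ as banded Toeplitz operators on $\ell^2(\Z_{\ge 0})\cong H^2(\mathbb{T})$, apply the classical Gohberg--Krein theorem to determine Fredholmness and to compute the Fredholm index from the winding number of the symbol, and then invoke the equality of Witten and Fredholm indices in the Fredholm case (\cite[Theorem~2.5]{MR950085}), which applies thanks to the trace-class compatibility \eqref{compartible} already verified in Lemma~\ref{lem:fastf}.

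Under the Fourier isomorphism $\delta_x\mapsto z^x$, the operator $v$ becomes multiplication by $z$ on $H^2(\mathbb{T})$, so $F_+(v^*)$ is the Toeplitz operator $T_{f_+}$ with symbol
\[
f_+(e^{\im\theta}) = b_+\sin\theta + \im\bigl(p_+ b_+ \cos\theta - q_+ a_+\bigr),
\]
obtained by a direct computation from \eqref{f:pm}. Its image traces an ellipse in $\C$, possibly degenerate, centred at $-\im q_+ a_+$ with horizontal semi-axis $b_+$ and vertical semi-axis $|p_+| b_+$. Using $a_+^2+b_+^2 = p_+^2+q_+^2 = 1$, a direct check shows that the origin lies on this ellipse if and only if $|a_+|=|p_+|$, which proves the Fredholmness criterion in item (1). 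For $F_-(v)$ the symbol is $\phi_-(e^{\im\theta}) = -b_-\sin\theta + \im(p_- b_- \cos\theta - q_- a_-)$, parametrising the analogous ellipse in the opposite direction, yielding the Fredholmness criterion in item (2).

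To compute the Fredholm index of $F_+(v^*)$, I would determine the winding number $\mathrm{wind}(f_+,0)$. Inspecting the parametrisation at $\theta=0,\pi/2,\pi,3\pi/2$ shows that the ellipse is traversed clockwise when $p_+>0$ and counter-clockwise when $p_+<0$, while the origin lies strictly inside it iff $|a_+|<|p_+|$. The Gohberg--Krein formula $\mathrm{ind}(T_{f_+})=-\mathrm{wind}(f_+,0)$ then yields $\mathrm{ind}\,F_+(v^*) = \sgn(p_+)$ when $|a_+|<|p_+|$ and $0$ when $|a_+|>|p_+|$, once the degenerate regimes $b_+=0$ (constant symbol) and $p_+=0$ (horizontal segment off the origin) are handled separately. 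A routine case check confirms the right-hand side agrees with $W(a_+,p_+)$, and Lemma~\ref{lem:fastf} supplies the trace-class condition needed to pass from $\mathrm{ind}$ to $w$ via \cite[Theorem~2.5]{MR950085}; this proves \eqref{wf+}. The formula \eqref{wf-} follows identically, with the sign flip $w(F_-(v))=-W(a_-,p_-)$ coming from the reversed orientation of $\phi_-$ relative to the corresponding $f_-$-ellipse.

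The main obstacle is the bookkeeping across the several sign-and-degeneracy subcases ($\sgn(p_\pm)$, $b_\pm=0$, $p_\pm=0$, $|p_\pm|=1$); the algebra collapses cleanly due to the identity $q_\pm^2 a_\pm^2 + (1-p_\pm^2) b_\pm^2 = q_\pm^2$, which is precisely what forces the origin to touch the ellipse exactly when $|a_\pm|=|p_\pm|$.
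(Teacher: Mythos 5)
Your proposal is correct and follows essentially the same route as the paper: identify $F_{\pm}$ with Toeplitz operators whose symbol $b_\pm\sin\theta+\im(p_\pm b_\pm\cos\theta-q_\pm a_\pm)$ traces an ellipse, read off Fredholmness and the index from whether the origin lies on/inside this ellipse and from its orientation ($-\mathrm{wn}$), and then invoke the equality of Witten and Fredholm indices in the Fredholm case. The degenerate subcases you flag ($b_\pm=0$, $p_\pm=0$) are handled in the paper in the same way, so no gap remains.
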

\begin{proof}
We only consider $F_{+}(v^{\ast})$ since the same argument is valid for $F_{-}(v).$
Note that the Witten index coincides with the Fredholm index when the operator is Fredholm. 
We follow the argument of the proof of \cite[Lemma 3.5]{MR4219958}. 
$F_{+}(v^{\ast})$ is unitarily equivalent to the Toeplitz operator $T_{F_{+}(\overline{\cdot})}$ with the symbol $F_{+}(\overline{\cdot}).$ 
For Toeplitz operator, we refer \cite[Section 3.5]{MR1074574}.
$T_{F_{+}(\overline{\cdot})}$ is Fredholm if and only if $F_{+}(e^{-\im\theta})$ ($\theta\in\R/ 2\pi\Z$) vanishes nowhere.
In this case, the Fredholm index of $T_{F_{+}(\overline{\cdot})}$ is equal to $-\mathrm{wn}(F_{+}(\overline{\cdot})),$ 
where $\mathrm{wn}(F_{+}(\overline{\cdot}))$ is the winding number of $F_{+}(e^{-\im\theta})$ ($\theta\in\R/ 2\pi\Z$) around the origin. 
By direct calculations, we have
\begin{align*}
F_{+}(e^{-\im\theta})&=\displaystyle\frac{\im}{2}[(1+p_{+})b_{+}e^{-\im\theta}-(1-p_{+})b_{+}e^{\im\theta}-2q_{+}a_{+}]
\\
&=\displaystyle\frac{\im}{2}(2p_{+}b_{+}\cos\theta-2b_{+}\im\sin\theta-2q_{+}a_{+})
\\
&=b_{+}\sin\theta+(p_{+}b_{+}\cos\theta-q_{+}a_{+})\im,\quad \theta\in\R/2\pi\Z.
\end{align*}
If $p_{+}b_{+}=0,$ then $F_{+}(e^{-\im\theta})=2b_{+}\sin\theta-q_{+}a_{+}\im.$ Thus the image of this function does not through the origin if and only if $q_{+}a_{+}\neq p_{+}b_{+}(=0).$ In this case, $\mathrm{wn}(F_{+}(\overline{\cdot}))=0.$ We suppose that $p_{+}b_{+}\neq 0.$ Then the image of $F_{+}(e^{-\im\theta})$ $(\theta\in\R/2\pi \Z)$ represents the ellipse centered at $-q_{+}a_{+}\im.$ This ellipse does not through the origin if and only if $|p_{+}b_{+}|\neq |q_{+}a_{+}|.$ This relation is also equivalent to $|a_{+}|\neq |p_{+}|.$ Thus, we have
\begin{align*}
\mathrm{wn}(F_{+}(\overline{\cdot}))
=
\begin{cases}
-\sgn(p_{+}), & |p_{+}|>|a_{+}|, \\
0, &|p_{+}|<|a_{+}|.
\end{cases}
\end{align*}
A direct computation shows that the right hand side coincides with $W(a_+,p_+)$.
Hence, \eqref{wf+} follows. 
\end{proof}
In what  follows, we assume both $|a_{\pm}|=|p_{\pm}|.$ In this case, $F_{+}(v^{\ast})$ and $F_{-}(v)$ are not Fredholm.
We first derive $w(F_{+}(v^*))$ and $w(F_{-}(v))$ under exceptional cases $p_{+}=1,$ $p_{+}=-1,$ $p_{-}=1$ or $p_{-}=-1.$

\begin{Thm}
\label{exceptional} 
The following assertions hold:
\begin{enumerate}
\item If $p_{+}=1$ or $p_{+}= -1,$ then $w(F_{+}(v^*))=0.$
\item If $p_{-}=1$ or $p_{-}=-1,$ then $w(F_{-}(v))=0.$
\end{enumerate}
\end{Thm}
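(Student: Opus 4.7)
The plan is to exploit the standing assumption $|a_\pm|=|p_\pm|$ (imposed just before the theorem) to observe that the exceptional values $p_+\in\{\pm1\}$ (or $p_-\in\{\pm1\}$) collapse all the relevant parameters, causing the operator $F_+(v^*)$ (resp. $F_-(v)$) to vanish identically. Once that is established, the Witten index is trivially zero.

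First I would unpack the parameter constraints. Since $p_\pm^2+q_\pm^2=1$ and $a_\pm^2+b_\pm^2=1$, assuming $p_+=1$ (or $p_+=-1$) gives $q_+=0$, and the standing hypothesis $|a_+|=|p_+|=1$ forces $a_+=\pm1$ and hence $b_+=0$. The analogous reasoning at $-\infty$ shows that $p_-\in\{\pm1\}$ forces $q_-=0$ and $b_-=0$.

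Next I would substitute these values directly into the definition \eqref{f:pm}. Because every term in $F_+(z)=\frac{\im}{2}[(1+p_+)b_+z-(1-p_+)b_+\bar z-2q_+a_+]$ carries a factor of either $b_+$ or $q_+$, the whole expression vanishes. Therefore $F_+(v^*)=0$ as a bounded operator on $\ell^2(\Z_{\ge 0})$. The same substitution makes $F_-(v)=0$ when $p_-\in\{\pm1\}$. One can alternatively verify this via the explicit formulas in Lemma \ref{lem:fastf}: both $F_+(v^*)^*F_+(v^*)$ and $F_+(v^*)F_+(v^*)^*$ reduce to the zero operator since every surviving coefficient contains $b_+^2$ or $q_+^2$.

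Finally, for the zero operator $A=0$ one has $A^*A=AA^*=0$, so $e^{-tA^*A}-e^{-tAA^*}=I-I=0$, which is trivially trace class with $\ind_t(0)=0$ for every $t>0$. Hence $w(F_+(v^*))=\lim_{t\to\infty}\ind_t(F_+(v^*))=0$, and identically $w(F_-(v))=0$ in the $p_-\in\{\pm 1\}$ case. There is no genuine obstacle here; the only point requiring care is to remember that the degeneracy $p_+=\pm1$ is being combined with the overall hypothesis $|a_+|=|p_+|$, without which the collapse of $F_+$ would not occur.
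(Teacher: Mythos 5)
Your proposal is correct and follows essentially the same route as the paper: under the standing assumption $|a_{\pm}|=|p_{\pm}|$, the case $p_{+}=\pm1$ (resp.\ $p_{-}=\pm1$) forces $b_{+}=q_{+}=0$ (resp.\ $b_{-}=q_{-}=0$), so $F_{+}(v^{\ast})=0$ (resp.\ $F_{-}(v)=0$) and the Witten index vanishes trivially. Your explicit remark that the collapse of $F_{\pm}$ uses the hypothesis $|a_{\pm}|=|p_{\pm}|$ (to get $b_{\pm}=0$) is exactly the implicit ingredient in the paper's one-line argument.
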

\begin{proof}
If $p_{+}=\pm 1$ (resp.~$p_{-}=\pm1$), then $b_{+}=q_{+}=0$ (resp.~$b_{-}=q_{-}=0$). Thus, $F_{+}(v^{\ast})=0$ (resp.~$F_{-}(v)=0$). From this, we have $w(F_{+}(v^{\ast}))=0$ (resp.~$w(F_{-}(v))=0$).
\end{proof}
The rest of this paper is devoted to compute $w(F_{+}(v^*))$ and $w(F_{-}(v))$ under the conditions $|p_{\pm}|=|a_{\pm}|<1.$ 
We notice that in this case we have $b_{\pm}=q_{\pm}.$ 
We rewrite $b_\pm$ and $q_\pm$ in the functions $F_\pm$ using $a_\pm$ and $p_\pm.$
\begin{Lem}
\label{gastg}
We have
\begin{align*}
F_{+}(v^{\ast})^{\ast}F_{+}(v^{\ast})&=\displaystyle\frac{1-p_{+}^2}{4}\Big[-(1-p_{+}^2)\{v^2+(v^{\ast})^2\}-4a_{+}p_{+}(v+v^{\ast})+2(1+3p_{+}^2)-(1+p_{+})^{2}\Omega_{0}\Big],
\\
F_{+}(v^{\ast})F_{+}(v^{\ast})^{\ast}&=\displaystyle\frac{1-p_{+}^2}{4}\Big[-(1-p_{+}^2)\{v^2+(v^{\ast})^2\}-4a_{+}p_{+}(v+v^{\ast})+2(1+3p_{+}^2)-(1-p_{+})^{2}\Omega_{0}\Big],
\\
F_{-}(v)^{\ast}F_{-}(v)&=\displaystyle\frac{1-p_{-}^2}{4}\Big[-(1-p_{-}^2)\{v^2+(v^{\ast})^2\}-4a_{-}p_{-}(v+v^{\ast})+2(1+3p_{-}^2)-(1-p_{-})^{2}\Omega_{0}\Big],
\\
F_{-}(v)F_{-}(v)^{\ast}&=\displaystyle\frac{1-p_{-}^2}{4}\Big[-(1-p_{-}^2)\{v^2+(v^{\ast})^2\}-4a_{-}p_{-}(v+v^{\ast})+2(1+3p_{-}^2)-(1+p_{-})^{2}\Omega_{0}\Big].
\end{align*}
\end{Lem}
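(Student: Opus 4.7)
The plan is to derive Lemma \ref{gastg} as a direct consequence of Lemma \ref{lem:fastf} by substituting the relations forced by the assumption $|a_\pm|=|p_\pm|<1$. The key preliminary observation is that under this assumption, the normalization identities $a_\pm^2+b_\pm^2=1$ and $p_\pm^2+q_\pm^2=1$ yield $b_\pm^2=q_\pm^2=1-p_\pm^2$. Since $b_\pm,q_\pm\geq 0$ (recall that we have arranged $b(x),q(x)\geq 0$ for all $x\in\mathbb{Z}$ via Theorem \ref{lemma: NOW transform}), this forces $b_\pm=q_\pm=\sqrt{1-p_\pm^2}$. The assumption also yields $a_\pm^2=p_\pm^2$, although we emphasize that this does \emph{not} imply $a_\pm=p_\pm$ since the signs may differ, so the product $a_\pm p_\pm$ must be kept as is in the final expression.

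I would then plug these relations into each of the four formulas of Lemma \ref{lem:fastf} and factor out the common prefactor $(1-p_\pm^2)/4$. Specifically, for $F_{+}(v^\ast)^\ast F_{+}(v^\ast)$, the coefficient $-b_+^2(1-p_+^2)$ of $v^2+(v^\ast)^2$ becomes $-(1-p_+^2)^2$, which factors as $(1-p_+^2)\cdot[-(1-p_+^2)]$; the coefficient $-4a_+b_+p_+q_+$ of $v+v^\ast$ becomes $-4a_+p_+(1-p_+^2)$, factoring as $(1-p_+^2)\cdot[-4a_+p_+]$; and the constant term simplifies as
\[
2b_+^2(1+p_+^2)+4a_+^2q_+^2 = 2(1-p_+^2)(1+p_+^2)+4p_+^2(1-p_+^2) = 2(1-p_+^2)(1+3p_+^2).
\]
The $\Omega_0$ coefficient $-b_+^2(1+p_+)^2$ becomes $-(1-p_+^2)(1+p_+)^2$, factoring as $(1-p_+^2)\cdot[-(1+p_+)^2]$. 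Dividing by $4$ and collecting terms gives the stated identity.

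The remaining three cases are handled identically; the only structural change is in the $\Omega_0$ coefficient, where $(1+p_\pm)^2$ is replaced by $(1-p_\pm)^2$ in passing between $F F^*$ and $F^* F$, exactly as in Lemma \ref{lem:fastf}. There is no genuine obstacle here — the step is a bookkeeping exercise — so the ``main obstacle'' is simply making sure the factorization of the constant term and the $\Omega_0$ term is recorded correctly and that the asymmetry $a_\pm p_\pm$ versus $p_\pm^2$ is preserved. The payoff of this reformulation is that the four operators are now expressed as a common Jacobi-type operator on $\ell^2(\mathbb{Z}_{\geq 0})$ plus a rank-one perturbation supported at the origin, which is the starting point for the perturbation determinant and spectral shift function computations of the subsequent sections.
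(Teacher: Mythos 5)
Your proposal is correct and follows essentially the same route as the paper: substitute $b_\pm=q_\pm$ with $b_\pm^2=q_\pm^2=1-p_\pm^2$ and $a_\pm^2=p_\pm^2$ into Lemma \ref{lem:fastf}, factor out $(1-p_\pm^2)/4$, and simplify the constant term to $2(1+3p_\pm^2)$ while keeping $a_\pm p_\pm$ intact. The bookkeeping, including the careful remark that $a_\pm p_\pm$ cannot be replaced by $p_\pm^2$, matches the paper's computation.
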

\begin{proof}
In this proof, we only consider $F_{+}(v^{\ast})^{\ast}F_{+}(v^{\ast}).$ The other cases can be proven similarly. 
We recall the relations $|a_{+}|=|p_{+}|$ and $b_{+}=q_{+}$, and thus $b_{+}^2=1-p_{+}^2$ follows. 
By Lemma \ref{lem:fastf}, we have
\begin{align*}
\ F_{+}(v^{\ast})^{\ast}F_{+}(v^{\ast})&=\displaystyle\frac{1}{4}[-b_{+}^2(1-p_{+}^2)\{v^{2}+(v^{\ast})^{2}\}-4a_{+}b_{+}p_{+}q_{+}(v+v^{\ast})+2b_{+}^{2}(1+p_{+}^{2})+4a_{+}^{2}q_{+}^{2}-b_{+}^{2}(1+p_{+})^{2}\Omega_{0}]
\\
&=\displaystyle\frac{b_{+}^{2}}{4}[-(1-p_{+}^2)\{v^{2}+(v^{\ast})^{2}\}-4a_{+}p_{+}(v+v^{\ast})+2(1+p_{+}^2)+4a_{+}^2-(1+p_{+})^{2}\Omega_{0}]
\\
&=\displaystyle\frac{1-p_{+}^2}{4}\Big[-(1-p_{+}^2)\{v^2+(v^{\ast})^2\}-4a_{+}p_{+}(v+v^{\ast})+2(1+3p_{+}^2)-(1+p_{+})^{2}\Omega_{0}\Big].
\end{align*}
This completes the proof.
\end{proof}
The following equalities are obtained by expanding the right-hand sides and by using Lemma \ref{gastg}.
The proof is not difficult and we omit it.
\begin{Lem}\label{square}
We have
\begin{align*}
F_{+}(v^{\ast})^{\ast}F_{+}(v^{\ast})&=-\displaystyle\frac{(1-p_{+}^2)^2}{4}\left[\left\{(v+v^{\ast})+\displaystyle\frac{2a_{+}p_{+}}{1-p_{+}^2}\right\}^2+\displaystyle\frac{2}{1-p_{+}}\Omega_{0}\right]+1,
\\
F_{+}(v^{\ast})F_{+}(v^{\ast})^{\ast}&=-\displaystyle\frac{(1-p_{+}^2)^2}{4}\left[\left\{(v+v^{\ast})+\displaystyle\frac{2a_{+}p_{+}}{1-p_{+}^2}\right\}^2+\displaystyle\frac{2}{1+p_{+}}\Omega_{0}\right]+1,
\\
F_{-}(v)^{\ast}F_{-}(v)&=-\displaystyle\frac{(1-p_{-}^2)^2}{4}\left[\left\{(v+v^{\ast})+\displaystyle\frac{2a_{-}p_{-}}{1-p_{-}^2}\right\}^2+\displaystyle\frac{2}{1+p_{-}}\Omega_{0}\right]+1,
\\
F_{-}(v)F_{-}(v)^{\ast}&=-\displaystyle\frac{(1-p_{-}^2)^2}{4}\left[\left\{(v+v^{\ast})+\displaystyle\frac{2a_{-}p_{-}}{1-p_{-}^2}\right\}^2+\displaystyle\frac{2}{1-p_{-}}\Omega_{0}\right]+1.
\end{align*}
\end{Lem}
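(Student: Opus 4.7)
The plan is to prove each of the four identities by direct expansion of the right-hand side and comparison with the formulas provided by Lemma \ref{gastg}. All four identities share the same algebraic structure, so I would treat them uniformly, focusing on the first equation for $F_{+}(v^{\ast})^{\ast}F_{+}(v^{\ast})$; the remaining three are verified by completely analogous computations, with the appropriate sign flip in the $\Omega_{0}$ term corresponding to whether the coefficient in Lemma \ref{gastg} is $(1+p_{\pm})^{2}$ or $(1-p_{\pm})^{2}$.

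The key algebraic identity I would use is
\[
(v+v^{\ast})^{2} = v^{2} + v^{\ast}v + vv^{\ast} + (v^{\ast})^{2} = v^{2} + (v^{\ast})^{2} + 2 - \Omega_{0},
\]
which follows from $v^{\ast}v=1$ and $vv^{\ast}=1-\Omega_{0}$ (see \eqref{halfline}). Using this together with the standing relations $a_{+}^{2}=p_{+}^{2}$ and $b_{+}^{2}=q_{+}^{2}=1-p_{+}^{2}$, I would expand
\[
\left\{(v+v^{\ast})+\frac{2a_{+}p_{+}}{1-p_{+}^{2}}\right\}^{2} = v^{2}+(v^{\ast})^{2}+2-\Omega_{0}+\frac{4a_{+}p_{+}}{1-p_{+}^{2}}(v+v^{\ast})+\frac{4a_{+}^{2}p_{+}^{2}}{(1-p_{+}^{2})^{2}},
\]
multiply by $-(1-p_{+}^{2})^{2}/4$, add the separate correction term $-\tfrac{(1-p_{+}^{2})^{2}}{4}\cdot\tfrac{2}{1-p_{+}}\Omega_{0} = -\tfrac{(1-p_{+})(1+p_{+})^{2}}{2}\Omega_{0}$, and add the constant $1$.

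Finally, I would match the result against Lemma \ref{gastg} coefficient by coefficient. The $v^{2}+(v^{\ast})^{2}$ and $v+v^{\ast}$ terms agree immediately. The constant part reduces, via $a_{+}^{2}=p_{+}^{2}$, to the identity $1-p_{+}^{4}-\tfrac{(1-p_{+}^{2})^{2}}{2} = \tfrac{(1-p_{+}^{2})(1+3p_{+}^{2})}{2}$, which is a short factorization. The only place requiring slight care is the $\Omega_{0}$ coefficient: combining the $-\Omega_{0}$ contributed by the expansion of $(v+v^{\ast})^{2}$ with the explicit correction term yields $-\tfrac{(1-p_{+})(1+p_{+})^{3}}{4}\Omega_{0}$, which equals the $-\tfrac{(1-p_{+}^{2})(1+p_{+})^{2}}{4}\Omega_{0}$ coming from Lemma \ref{gastg}. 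Since the entire argument is bookkeeping, I anticipate no serious obstacle; the only mild pitfall is correctly pairing the factor $\tfrac{2}{1\mp p_{\pm}}$ in each of the four statements with the corresponding $(1\pm p_{\pm})^{2}$ in Lemma \ref{gastg}, so that the asymmetry between $F(\cdot)^{\ast}F(\cdot)$ and $F(\cdot)F(\cdot)^{\ast}$ is preserved.
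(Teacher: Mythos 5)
Your proposal is correct and follows exactly the route the paper indicates (the paper omits the computation, stating only that the identities follow by expanding the right-hand sides and comparing with Lemma \ref{gastg}); your expansion via $(v+v^{\ast})^{2}=v^{2}+(v^{\ast})^{2}+2-\Omega_{0}$, the use of $a_{\pm}^{2}=p_{\pm}^{2}$, and the coefficient matching, including the $\Omega_{0}$ bookkeeping, are all accurate.
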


\section{Perturbation determinant}
\label{section: section 4}
We introduce the following two operators on $\ell^2(\Z_{\ge 0}):$
\begin{align}
T(A, P):=\left\{(v+v^{\ast})+\displaystyle\frac{2AP}{1-P^2}\right\}^{2}+\displaystyle\frac{2}{1-P}\Omega_{0},\quad T_{0}(A, P):=\left\{(v+v^{\ast})+\displaystyle\frac{2AP}{1-P^2}\right\}^{2},
\end{align}
where $A$ and $P$ are real numbers satisfying $|A|<1,$ $|P|< 1$ and $|A|=|P|.$ Note that both $T(A, P)$ and $T_{0}(A, P)$ are non-negative self-adjoint. We set a positive constant $\alpha_{P}$ by
\begin{align}\label{alpha-p}
\alpha_{P}:=\displaystyle\frac{(1-P^2)^2}{4},\quad -1<P<1.
\end{align}
We can represent $F_{+}(v^{\ast})^{\ast}F_{+}(v^{\ast}),$ $F_{+}(v^{\ast})F_{+}(v^{\ast})^{\ast},$ $F_{-}(v)^{\ast}F_{-}(v)$ and $F_{-}(v)F_{-}(v)^{\ast}$ using $T(A, P).$ 
More precisely, see Table 1 below.
\begin{table}[htb]
\begin{center}
  \begin{tabular}{|c|c|c|} \hline
    $A$ & $P$ & $T(A, P)$ \\ \hline \hline
    $a_{+}$ & $p_{+}$  & $-(F_{+}(v^{\ast})^{\ast}F_{+}(v^{\ast})-1)/\alpha_{p_{+}}$ \\
    $-a_{+}$ & $-p_{+}$ & $-(F_{+}(v^{\ast})F_{+}(v^{\ast})^{\ast}-1)/\alpha_{-p_{+}}$ \\
    $-a_{-}$ & $-p_{-}$  & $-(F_{-}(v)^{\ast}F_{-}(v)-1)/\alpha_{-p_{-}}$ \\ 
    $a_{-}$ & $p_{-}$  & $-(F_{-}(v)F_{-}(v)^{\ast}-1)/\alpha_{p_{-}}$ \\ \hline
  \end{tabular}
\end{center}
\end{table}
\begin{center}
Table 1: Correspondence between $T(A, P)$ and $F_\pm$'s. For example, if we set $A=a_{+}$ and $P=p_{+},$ then we get $T(a_{+}, p_{+})=-(F_{+}(v^{\ast})^{\ast}F_{+}(v^{\ast})-1)/\alpha_{p_{+}}.$
\end{center}
Now $\ind_{t}(Q_{\epsilon_{0}})$ is modified as
\begin{align*}
\ind_{t}(Q_{\epsilon_{0}})&=\Tr\left[e^{-tF_{+}(v^{\ast})^{\ast}F_{+}(v^{\ast})}-e^{-tF_{+}(v^{\ast})F_{+}(v^{\ast})^*}\right]+\Tr\left[e^{-tF_{-}(v)^{\ast}F_{-}(v)}-e^{-tF_{-}(v)F_{-}(v)^{\ast}}\right]
\\
&=e^{-t}\Tr\left[ e^{t\alpha_{p_{+}}T(a_{+}, p_{+})}-e^{t\alpha_{-p_{+}}T(-a_{+}, -p_{+})}\right]+e^{-t}\Tr \left[e^{t\alpha_{-p_{-}}T(-a_{-}, -p_{-})}-e^{t\alpha_{p_{-}}T(a_{-}, p_{-})}\right]
\\
&=e^{-t}\Tr\left[e^{t\alpha_{p_{+}}T(a_{+}, p_{+})}-e^{t\alpha_{p_{+}}T_{0}(a_{+}, p_{+})}\right]+e^{-t}\Tr\left[e^{t\alpha_{-p_{+}}T(-a_{+}, -p_{+})}-e^{t\alpha_{-p_{+}}T_{0}(-a_{+}, -p_{+})}\right]
\\
&\hspace{5mm}+e^{-t}\Tr\left[e^{t\alpha_{-p_{-}}T(-a_{-}, -p_{-})}-e^{t\alpha_{-p_{+}}T_{0}(-a_{-}, -p_{-})}\right]+e^{-t}\Tr\left[e^{t\alpha_{p_{-}}T(a_{-}, p_{-})}-e^{t\alpha_{p_{-}}T_{0}(a_{-}, p_{-})}\right],
\end{align*}
where we used $\alpha_{P}=\alpha_{-P}$ and $T_{0}(A, P)=T_{0}(-A, -P).$ From the above equality, it suffices to consider $\Tr(e^{t\alpha_{P} T(A, P)}-e^{t\alpha_{P}T_{0}(A, P)}).$
\begin{Rem}\label{symmetry}
We define the unitary operator $W$ on $\ell^2(\Z_{\ge 0})$ by
$$ (W\psi)(x):=(-1)^x\psi(x),\quad x\in \Z_{\ge 0},\quad \psi\in \ell^2(\Z_{\ge 0}).$$
Since $WvW^{-1}=-v,$ we have $WT(A, P)W^{-1}=T(-A, P)$ and $WT_{0}(A, P)W^{-1}=T_{0}(-A, P).$ This implies that 
$$ \Tr(e^{t\alpha_{P} T(A, P)}-e^{t\alpha_{P}T_{0}(A, P)})=\Tr(e^{t\alpha_{P} T(-A, P)}-e^{t\alpha_{P}T_{0}(-A, P)}). $$
Thus , $\Tr(e^{t\alpha_{P} T(A, P)}-e^{t\alpha_{P}T_{0}(A, P)})$ does not depend on the choice of $A=P$ or $A=-P.$
\end{Rem}
By Remark \ref{symmetry}, without loss of generality, we choose $A$ as $A=-P$ for convenience. We denote $T(-P, P)$ as $T(P)$ and $T_{0}(-P, P)$ as $T_{0}(P),$ respectively:
\begin{align}\label{def:T}
T(P)&:=T(-P, P)=\left\{(v+v^{\ast})-\displaystyle\frac{2P^2}{1-P^2}\right\}^{2}+\displaystyle\frac{2}{1-P}\Omega_{0},
\\ \label{def:T0}
T_{0}(P)&:=T_{0}(-P, P)=\left\{(v+v^{\ast})-\displaystyle\frac{2P^2}{1-P^2}\right\}^{2}.
\end{align}
To calculate $\Tr(e^{t\alpha_{P} T(P)}-e^{t\alpha_{P}T_{0}(P)}),$ we make use of the spectral shift function defined by
\begin{align}\label{ssf}
\xi(x)&:=\displaystyle\frac{1}{\pi}\displaystyle\lim_{\e\rightarrow +0}\mathrm{Arg}\ \Delta_{T(P)/T_{0}(P)}(x+\im\epsilon), \quad x\in\R,
\end{align}
where $\mathrm{Arg}$ is the argument such that $\mathrm{Arg}(z)\in(-\pi, \pi]$ for any $z\in\C\setminus\{0\}$ and $\Delta_{T(P)/T_{0}(P)}(\cdot)$ is the perturbation determinant for the pair $(T(P), T_{0}(P))$;
\begin{align}\label{def:pd}
\Delta_{T(P)/T_{0}(P)}(z)&:=\det((T(P)-z)(T_{0}(P)-z)^{-1})=1+\displaystyle\frac{2}{1-P}\langle \delta_{0}, (T_{0}(P)-z)^{-1}\delta_{0}\rangle,\quad z\in\mathbb{C}\setminus\R.
\end{align}

In what follows, we simply write $T(P)$ and $T_{0}(P)$ as $T$ and $T_{0},$ respectively if there is no danger of confusion.

In the rest of this paper, a square root of a complex number appears. 
We only treat the principle value of a square root. Namely, for any $z\in\C\setminus(-\infty, 0],$ we define
\begin{align}\label{root}
\sqrt{z}:=r^{1/2}e^{\frac{\im}{2}\mathrm{Arg}(z)},\quad z=re^{\im\mathrm{Arg}(z)}, \quad r>0, \quad \mathrm{Arg}(z)\in(-\pi, \pi).
\end{align}
We introduce a constant $m_{P}$ by
\begin{align}\label{mp}
m_{P}:=\displaystyle\frac{P^2}{1-P^2},\quad -1<P<1. 
\end{align}
We note that 
\begin{align*}
0\le m_{P}<1\quad &\mbox{ if and only if }\quad 0\le |P|<1/\sqrt{2}, \\
m_{P}=1\quad  &\mbox{ if and only if }\quad |P|=1/\sqrt{2}, \\ 
m_{P}>1\quad  &\mbox{ if and only if }\quad 1/\sqrt{2}<|P|<1.
\end{align*}
Moreover, it is seen that 
\begin{align}\label{alphap-mp}
\alpha_{P}^{-1}=4(1+m_{P})^2.
\end{align}
Although the following lemma is simple, it is a key ingredients in this paper because $\Delta_{T/T_{0}}(z)$ 
can be represented as the difference of the resolvents of the discrete Laplace operators on the half-line. 
\begin{Lem}\label{resolvent}
For any $z\in \C\setminus \R,$ we have
$$ \Delta_{T/T_{0}}(z)=1+\displaystyle\frac{H(\tau_{+}(z))-H(\tau_{-}(z))}{(1-P)\sqrt{z}}, $$
 where 
$$ \tau_{\pm}(z):=2m_{P}\pm\sqrt{z},\quad z\in\C\setminus(-\infty,0],$$
and
$$\quad H(z):=\displaystyle\frac{\sqrt{\frac{z-2}{z+2}}-1}{\sqrt{\frac{z-2}{z+2}}+1},\quad z\in\C\setminus[-2, 2].$$
\end{Lem}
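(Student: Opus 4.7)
The plan is straightforward: factor $T_0(P) - z$ into two commuting linear factors in $S := v + v^*$, apply partial fractions, and identify the remaining half-line resolvent matrix element with $H$.

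First I would observe that by \eqref{def:T0} one has $T_0(P) = (S - 2m_P)^2$, so for $z \in \mathbb{C}\setminus\mathbb{R}$ the factorization
\[
T_0(P) - z = (S - \tau_+(z))(S - \tau_-(z))
\]
holds (the two factors commute). The principal-branch convention \eqref{root} forces $\mathrm{Im}\sqrt{z} \neq 0$, hence $\tau_\pm(z)\notin\mathbb{R}$, and in particular both factors are invertible in $\mathcal{B}(\ell^2(\mathbb{Z}_{\ge 0}))$. A standard partial-fraction identity together with $\tau_+(z) - \tau_-(z) = 2\sqrt{z}$ then yields
\[
(T_0(P) - z)^{-1} = \frac{1}{2\sqrt{z}}\bigl[(S - \tau_+(z))^{-1} - (S - \tau_-(z))^{-1}\bigr].
\]

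The next step is to establish the identification $\langle \delta_0, (S - \tau)^{-1}\delta_0\rangle = H(\tau)$ for $\tau\in\mathbb{C}\setminus[-2,2]$. Solving $(S - \tau)\psi = \delta_0$ on $\ell^2(\mathbb{Z}_{\ge 0})$ amounts to the recursion $\psi(x+1) - \tau\psi(x) + \psi(x-1) = \delta_{x,0}$ with the built-in Dirichlet convention $\psi(-1) = 0$. The unique square-summable solution has the form $\psi(x) = C\zeta^x$ where $\zeta$ is the unique root of $\zeta + \zeta^{-1} = \tau$ with $|\zeta| < 1$; substituting the equation at $x = 0$ gives $C = -\zeta$, hence $\langle \delta_0, (S - \tau)^{-1}\delta_0\rangle = -\zeta$. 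The substitution $w := \sqrt{(\tau-2)/(\tau+2)}$ reduces $\zeta + \zeta^{-1} = \tau$ to a quadratic whose small-root branch is $\zeta = (1-w)/(1+w)$, and therefore $-\zeta = (w-1)/(w+1) = H(\tau)$. This half-line Green's function computation is presumably the content of Appendix \ref{section: Appendix B}, which I would cite rather than reprove from scratch.

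Finally, formula \eqref{def:pd} is the classical rank-one perturbation determinant identity applied to $T(P) - T_0(P) = \frac{2}{1-P}\Omega_0$ with $\Omega_0 = |\delta_0\rangle\langle\delta_0|$, rewriting $\Delta_{T/T_0}(z)$ in terms of $\langle \delta_0, (T_0(P)-z)^{-1}\delta_0\rangle$. Substituting the partial-fraction decomposition above and inserting $\langle \delta_0, (S - \tau_\pm(z))^{-1}\delta_0\rangle = H(\tau_\pm(z))$ yields the stated identity. The only obstacle I anticipate is the bookkeeping of branch choices: the principal branch of $\sqrt{\cdot}$ from \eqref{root} must propagate consistently through the nested radical in $H$ and through the small-root selection $|\zeta| < 1$, so that the identification $-\zeta(\tau) = H(\tau)$ holds on all of $\mathbb{C}\setminus[-2,2]$ without a sign ambiguity; once this is verified (essentially because the condition $|\zeta|<1$ singles out a single sheet of $\sqrt{(\tau-2)/(\tau+2)}$), the rest is routine.
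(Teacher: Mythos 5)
Your argument is correct, and it reaches the identity by a partly different route from the paper. The common skeleton is the same: start from the rank-one determinant formula \eqref{def:pd} and resolve $(T_0-z)^{-1}$ into the two resolvents of $S=v+v^*$ at $\tau_\pm(z)$ via a partial-fraction step (you do this at the operator level, the paper does the identical algebra on the scalar level inside an integral; these are equivalent). The genuine difference is how the matrix element $\langle\delta_0,(S-\tau)^{-1}\delta_0\rangle=H(\tau)$ is obtained. The paper diagonalizes $v+v^*$ by the unitary of Appendix~\ref{section: Appendix B}, writes the matrix element as $\frac{1}{2\pi}\int_{-2}^{2}\frac{\sqrt{4-t^2}}{t-\tau}\,dt$, and evaluates this by the residue computation in Proposition~\ref{theorem: properties of H}(5); you instead solve the three-term recursion $\psi(x+1)-\tau\psi(x)+\psi(x-1)=\delta_{x,0}$ with the built-in Dirichlet condition $\psi(-1)=0$, select the decaying root $\zeta$ of $\zeta+\zeta^{-1}=\tau$, and identify $-\zeta=H(\tau)$ through $w=\sqrt{(\tau-2)/(\tau+2)}$, which is exactly the branch/modulus dichotomy recorded in Proposition~\ref{theorem: properties of H}(2),(4) ($\mathrm{Re}\,w>0$ is equivalent to $|\zeta|<1$, so there is indeed no sign ambiguity). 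Your route buys a self-contained, elementary Green's-function derivation that bypasses both the spectral decomposition of Appendix~\ref{section: Appendix B} and the contour integral of Appendix~\ref{section: Appendix C}(5), at the cost of having to argue uniqueness of the $\ell^2$ solution and the root selection; the paper's route outsources exactly these points to the two appendices. One small correction: the Green's-function formula is not literally "the content of Appendix~\ref{section: Appendix B}" — that appendix only supplies the spectral measure $\frac{\sqrt{4-t^2}}{2\pi}\,dt$ at $\delta_0$, and the identification with $H$ requires Proposition~\ref{theorem: properties of H}(5) as well — but since you prove the formula directly, this mis-citation is harmless.
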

\begin{proof}
By the spectral decomposition of $v+v^{\ast}$ (see  Appendix \ref{section: Appendix B}), 
it follows that
\begin{equation*}
\begin{aligned}
\Delta_{T/T_{0}}(z)&=1+\displaystyle\frac{2}{1-P}\langle \delta_{0}, (T_{0}-z)^{-1}\delta_{0}\rangle
\\
&=1+\displaystyle\frac{2}{1-P}\displaystyle\int_{-2}^{2}\frac{\sqrt{4-t^{2}}}{2\pi}\frac{1}{\left(t-2m_{P}\right)^2-z}\mathrm{d}t
\vspace{3mm}
\\
&=1+\displaystyle\frac{1}{(1-P)\sqrt{z}}\displaystyle\int_{-2}^{2}\frac{\sqrt{4-t^{2}}}{2\pi}\left[\frac{1}{t-2m_{P}-\sqrt{z}}-\frac{1}{t-2m_{P}+\sqrt{z}}\right]\mathrm{d}t
\\
&=1+\displaystyle\frac{H(\tau_{+}(z))-H(\tau_{-}(z))}{(1-P)\sqrt{z}},
\end{aligned}
\end{equation*}
where to obtain the last equality, we used the fifth assertion of Proposition \ref{theorem: properties of H}.
\end{proof}

\begin{Rem}\label{xi:positive}
Since $T_{0}$ is non-negative, 
the resolvent set of $T_{0}$ contains $\C\setminus[0,\infty).$
Thus we have
$$ \displaystyle\lim_{\epsilon\rightarrow+0}\Delta_{T/T_{0}}(x+\im\epsilon)=1+\displaystyle\frac{\langle\delta_{0}, (T_{0}-x)^{-1}\delta_{0}\rangle}{1-P}>0,\quad x<0.$$
This implies that $\xi(x)=0$ for any $x<0.$ Moreover, we will see in Lemma \ref{xi:positivezero} that $\xi(x)=0$ for any $x>4(1+m_{P})^2$.
\end{Rem}
We shall see in Lemma \ref{lem:pdlimit} that the limit $H(\tau_{\pm}(x+\im\epsilon))$ as $\epsilon\rightarrow+0$ takes one of three possible values. 
Before going to Lemma \ref{lem:pdlimit}, we need a lemma.
\begin{Lem}\label{lem:taupm}
For any $-1<P<1$ and $x>0,$ the following assertions hold:
\begin{enumerate}
\item If $0\le|P|<1/\sqrt{2},$ then
\begin{equation*}
\begin{cases}
|\tau_{+}(x)|<2 \mbox{ and } |\tau_{-}(x)|<2, &0<x<4(1-m_{P})^2, \\
\tau_{+}(x)>2 \mbox{ and } |\tau_{-}(x)|<2, &4(1-m_{P})^2<x<4(1+m_{P})^2, \\
\tau_{+}(x)>2 \mbox{ and } \tau_{-}(x)<-2, &4(1+m_{P})^{2}<x.
\end{cases}
\end{equation*}
\item If $|P|=1/\sqrt{2},$ then
\begin{equation*}
\begin{cases}
\tau_{+}(x)>2 \mbox{ and } |\tau_{-}(x)|<2, &0<x<16, \\
\tau_{+}(x)>2 \mbox{ and } \tau_{-}(x)<-2, & 16<x.
\end{cases}
\end{equation*}
\item If $1/\sqrt{2}<|P|<1,$ then
\begin{equation*}
\begin{cases}
\tau_{+}(x)>2 \mbox{ and } \tau_{-}(x)>2, & 0<x<4(m_{P}-1)^2, \\
\tau_{+}(x)>2 \mbox{ and } |\tau_{-}(x)|<2, &4(m_{P}-1)^2<x<4(m_{P}+1)^2, \\
\tau_{+}(x)>2 \mbox{ and } \tau_{-}(x)<-2, &4(m_{P}+1)^{2}<x.
\end{cases}
\end{equation*}
\end{enumerate}
\end{Lem}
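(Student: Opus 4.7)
The plan is to reduce each of the three possible strict orderings of $\tau_{\pm}(x)$ relative to the interval $(-2,2)$ to a comparison between $\sqrt{x}$ and one of $2(1\pm m_{P})$, and then sort out which comparisons are automatic or vacuous depending on the sign of $1-m_{P}$. Recall from the preceding discussion that $m_{P}\ge 0$ for every $P\in(-1,1)$, that $\sqrt{x}>0$ for $x>0$, and that the threshold $m_{P}=1$ corresponds exactly to $|P|=1/\sqrt{2}$.

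For the plus branch, since $\tau_{+}(x)=2m_{P}+\sqrt{x}>0>-2$ unconditionally, the only open question is whether $\tau_{+}(x)>2$; after squaring this is equivalent to $x>4(1-m_{P})^{2}$ when $m_{P}<1$, while the condition holds for every $x>0$ when $m_{P}\ge 1$. For the minus branch, $\tau_{-}(x)=2m_{P}-\sqrt{x}$ and we record the three equivalences
\begin{align*}
\tau_{-}(x)>2 &\iff \sqrt{x}<2(m_{P}-1),\\
|\tau_{-}(x)|<2 &\iff 2(m_{P}-1)<\sqrt{x}<2(m_{P}+1),\\
\tau_{-}(x)<-2 &\iff \sqrt{x}>2(m_{P}+1).
\end{align*}
When $m_{P}<1$ (resp.\ $m_{P}=1$) the first of these equivalences becomes vacuous because its right-hand side is negative (resp.\ zero).

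The three cases of the lemma then follow by combining the plus and minus branches and squaring the $\sqrt{x}$-inequalities. In Case~1 ($|P|<1/\sqrt{2}$, i.e.\ $m_{P}<1$), only the thresholds $4(1-m_{P})^{2}$ and $4(1+m_{P})^{2}$ are effective, producing the three subintervals displayed in the statement. In Case~2 ($|P|=1/\sqrt{2}$, i.e.\ $m_{P}=1$), the inequality $\tau_{+}(x)>2$ is automatic and the only effective threshold is $4(1+m_{P})^{2}=16$. In Case~3 ($|P|>1/\sqrt{2}$, i.e.\ $m_{P}>1$), $\tau_{+}(x)>2$ is again automatic, and now the threshold $4(m_{P}-1)^{2}$ becomes effective, contributing the additional subinterval $0<x<4(m_{P}-1)^{2}$ on which $\tau_{-}(x)>2$.

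No serious obstacle is anticipated; the argument is purely bookkeeping once the elementary reformulation of the strict inequalities $|\tau_{\pm}(x)|<2$, $\tau_{\pm}(x)>2$ and $\tau_{\pm}(x)<-2$ in terms of $\sqrt{x}$ is in hand. The only care needed is to track the sign-dependent vacuity of the thresholds across the three regimes and to square the resulting inequalities consistently.
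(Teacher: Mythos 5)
Your proposal is correct and follows essentially the same route as the paper: solving each strict inequality $\tau_{\pm}(x)\gtrless\pm2$ for $\sqrt{x}$ against the thresholds $2(1\pm m_{P})$ and sorting the outcomes by whether $m_{P}<1$, $=1$, or $>1$. The only difference is cosmetic — you shortcut the paper's six-case enumeration by noting $\tau_{+}(x)>0>-2$ automatically, which is fine.
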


\begin{figure}[H]
\centering
\begin{tikzpicture}[every node/.style={scale = 0.8}, scale = 0.8]
\begin{axis}[
width = 0.9\textwidth, height = 0.5\textwidth,
xmin=-1.5, xmax=5, 
ymin= -2, ymax=2, 
legend pos = north west, 
axis lines=center, 
axis y line=none,
xlabel=$x$, ylabel=$y$, xlabel style = {anchor = west}, xtick= {0, 2, 4}, xticklabels=\empty, clip=false
]
\node at (-1.5, 1)[right]  {1. $0 \leq |P| < \frac{1}{\sqrt{2}}$};

\node at (0,0) [below] {$0$};
\node at (2,0) [below] {$4(1 - m_p)^2$};
\node at (4,0) [below] {$4(1 + m_p)^2$};

\draw[latex'-latex'] (0,-1) -- (4,-1);
\draw[densely dashed, latex'-latex'] (4,-1) -- (5,-1);
\node at (2,-1) [below] {$|\tau_-| < 2$};
\node at (4.5,-1) [below] {$\tau_- < -2$};

\draw[latex'-latex'] (0,1) -- (2,1);
\draw[densely dashed, latex'-latex'] (2,1) -- (5,1);
\node at (1,1) [above] {$|\tau_+| < 2$};
\node at (3.5,1) [above] {$\tau_+ > 2$};

\end{axis}
\begin{axis}[
yshift= -50mm,
width = 0.9\textwidth, height = 0.5\textwidth,
xmin=-1.5, xmax=5, 
ymin= -2, ymax=2, 
legend pos = north west, 
axis lines=center, 
axis y line=none,
xlabel=$x$, ylabel=$y$, xlabel style = {anchor = west}, xtick= {0, 4}, xticklabels=\empty, clip=false
]
\node at (-1.5, 1) [right]{2. $|P| = \frac{1}{\sqrt{2}}$};

\node at (0,0) [below] {$0$};
\node at (4,0) [below] {$16$};

\draw[densely dashed, latex'-latex'] (0,-1) -- (2,-1);
\draw[latex'-latex'] (2,-1) -- (4,-1);
\draw[densely dashed, latex'-latex'] (4,-1) -- (5,-1);

\draw[latex'-latex'] (0,-1) -- (4,-1);
\draw[densely dashed, latex'-latex'] (4,-1) -- (5,-1);
\node at (2,-1) [below] {$|\tau_-| < 2$};
\node at (4.5,-1) [below] {$\tau_- < -2$};

\draw[densely dashed, latex'-latex'] (0,1) -- (5,1);
\node at (2.5,1) [above] {$\tau_+ > 2$};
\end{axis}

\begin{axis}[
yshift= - 100mm,
width = 0.9\textwidth, height = 0.5\textwidth,
xmin=-1.5, xmax=5, 
ymin= -2, ymax=2, 
legend pos = north west, 
axis lines=center, 
axis y line=none,
xlabel=$x$, ylabel=$y$, xlabel style = {anchor = west}, xtick= {0, 2, 4}, xticklabels=\empty, clip=false
]
\node at (-1.5, 1) [right] {3. $\frac{1}{\sqrt{2}} \leq |P| < 1$};

\node at (0,0) [below] {$0$};
\node at (2,0) [below] {$4(1 - m_p)^2$};
\node at (4,0) [below] {$4(1 + m_p)^2$};

\draw[densely dashed, latex'-latex'] (0,-1) -- (2,-1);
\draw[latex'-latex'] (2,-1) -- (4,-1);
\draw[densely dashed, latex'-latex'] (4,-1) -- (5,-1);

\node at (1,-1) [below] {$\tau_- > 2$};
\node at (3,-1) [below] {$|\tau_-| < -2$};
\node at (4.5,-1) [below] {$\tau_- < -2$};

\draw[densely dashed, latex'-latex'] (0,1) -- (5,1);
\node at (2.5,1) [above] {$\tau_+ > 2$};

\end{axis}
\end{tikzpicture}
\caption{Values of $\tau_{\pm}.$}
\label{figure: graph of Lambda}
\end{figure}

\begin{proof}
We consider the following six cases:
\begin{enumerate}
\item $\tau_{+}(x)>2.$
\begin{equation*}
2m_{P}+\sqrt{x}>2\quad \mbox{ if and only if }\quad \sqrt{x}>2-2m_{P}.
\end{equation*}
The solution of the above inequality is
\begin{align*}
\begin{cases}x>4(1-m_{P})^2, \quad  &0\le |P|<\frac{1}{\sqrt{2}},
\\
x>0, & |P|=\frac{1}{\sqrt{2}}, \\
x>0,& \frac{1}{\sqrt{2}}<|P|<1.
\end{cases}
\end{align*}
\item $\tau_{+}(x)<-2.$
\begin{equation*}
2m_{P}+\sqrt{x}<-2\quad \mbox{ if and only if }\quad \sqrt{x}<-2-2m_{P}.
\end{equation*}
The solution of the above inequality is the empty set for all $P.$
\item $|\tau_{+}(x)|<2.$
\begin{equation*}
-2<2m_{P}+\sqrt{x}<2\quad \mbox{ if and only if }\quad -2-2m_{P}<\sqrt{x}<2-2m_{P}.
\end{equation*}
The solution of the above inequality is
\begin{align*}
\begin{cases}x<4(1-m_{P})^2, \quad  &0\le |P|<\frac{1}{\sqrt{2}},
\\
\emptyset, & |P|=\frac{1}{\sqrt{2}}, \\
\emptyset, & \frac{1}{\sqrt{2}}<|P|<1.
\end{cases}
\end{align*}
\item $\tau_{-}(x)>2.$
\begin{equation*}
2m_{P}-\sqrt{x}>2\quad \mbox{ if and only if }\quad 2m_{P}-2>\sqrt{x}.
\end{equation*}
The solution of the above inequality is
\begin{align*}
\begin{cases}\emptyset, \quad  &0\le |P|<\frac{1}{\sqrt{2}},
\\
\emptyset, & |P|=\frac{1}{\sqrt{2}}, \\
0<x<4(m_{P}-1)^2, & \frac{1}{\sqrt{2}}<|P|<1.
\end{cases}
\end{align*}

\item $\tau_{-}(x)<-2.$
\begin{equation*}
2m_{P}-\sqrt{x}<-2\quad \mbox{ if and only if }\quad 2m_{P}+2<\sqrt{x}.
\end{equation*}
The solution of the above inequality is
\begin{align*}
\begin{cases}4(1+m_{P})^2<x, \quad  &0\le |P|<\frac{1}{\sqrt{2}},
\\
16<x, & |P|=\frac{1}{\sqrt{2}}, \\
4(1+m_{P})^2<x, & \frac{1}{\sqrt{2}}<|P|<1.
\end{cases}
\end{align*}
\item $|\tau_{-}(x)|<2.$
\begin{equation*}
-2<2m_{P}-\sqrt{x}<2\quad \mbox{ if and only if }\quad -2+2m_{P}<\sqrt{x}<2+2m_{P}.
\end{equation*}
The solution of the above inequality is
\begin{align*}
\begin{cases}x<4(1+m_{P})^2, \quad  &0\le |P|<\frac{1}{\sqrt{2}},
\\
0<x<16, & |P|=\frac{1}{\sqrt{2}}, \\
4(m_{P}-1)^2<x< 4(1+m_{P})^2,& \frac{1}{\sqrt{2}}<|P|<1.
\end{cases}
\end{align*}
\end{enumerate}
By summarizing the above six cases, we get the desired result.
\end{proof}


\begin{Lem}\label{lem:pdlimit} 
For any $x>0,$ we have
\begin{equation*}
\displaystyle\lim_{\epsilon\rightarrow+0}\{\tau_{\pm}(x+\im\epsilon)+2\}\sqrt{\displaystyle\frac{\tau_{\pm}(x+\im\epsilon)-2}{\tau_{\pm}(x+\im\epsilon)+2}}=\begin{cases}\sqrt{\tau_{\pm}(x)^2-4},\hspace{5mm}& \tau_{\pm}(x)>2, \\ -\sqrt{\tau_{\pm}(x)^2-4},\hspace{5mm}&  \tau_{\pm}(x)<-2, \\ \pm \im\sqrt{4-\tau_{\pm}(x)^2},\hspace{5mm}& |\tau_{\pm}(x)|<2.\end{cases}
\end{equation*}
\end{Lem}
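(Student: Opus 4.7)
The plan is to push the limit inside the square root by tracking the argument of each factor under the principal branch. Since $\sqrt{x+\im\epsilon}\to\sqrt{x}>0$ with small positive imaginary part as $\epsilon\to+0$, we have that $\tau_+(x+\im\epsilon)$ approaches $\tau_+(x)$ from the upper half-plane, while $\tau_-(x+\im\epsilon)$ approaches $\tau_-(x)$ from the lower half-plane. I will split into the three cases based on the position of $\tau_{\pm}(x)$ relative to $[-2,2]$ and, in each case, determine the limiting argument of $\tau_\pm-2$, $\tau_\pm+2$, and their ratio.

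In the case $\tau_\pm(x)>2$, both $\tau_\pm(x+\im\epsilon)-2$ and $\tau_\pm(x+\im\epsilon)+2$ stay near positive reals, so the ratio is near a positive real and the principal square root is continuous there. Multiplying by $\tau_\pm+2>0$ yields $\sqrt{\tau_\pm^2-4}$ in the limit. In the case $\tau_\pm(x)<-2$ (which by Lemma \ref{lem:taupm} arises only for $\tau_-$), both $\tau_-\pm 2$ tend to negative reals and are approached from below (since $\tau_-$ is itself approached from the lower half-plane), so both principal arguments tend to $-\pi$ and the argument of the ratio tends to $0$. The principal square root of the ratio therefore tends to a positive number, but the multiplicative factor $\tau_-+2$ is negative, producing $-\sqrt{\tau_-^2-4}$.

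In the case $|\tau_\pm(x)|<2$, the factor $\tau_\pm+2$ tends to a positive real while $\tau_\pm-2$ tends to a negative real, so the ratio tends to a negative real with a small imaginary part whose sign is determined by the approach direction: positive for $\tau_+$ and negative for $\tau_-$. Under the principal branch of $\sqrt{\cdot}$, the argument of the ratio thus tends to $+\pi$ for $\tau_+$ and to $-\pi$ for $\tau_-$, so the argument of its square root tends to $\pm\pi/2$. After multiplying by $\tau_\pm+2>0$, the resulting limit is $\pm\im\sqrt{4-\tau_\pm^2}$, exactly matching the stated formula.

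The main obstacle is the careful branch-cut bookkeeping in the second and third cases, where the limit point of the ratio lies on the negative real axis and hence on the branch cut of the principal square root. This is the analogue of a Plemelj-Sokhotski jump, and the correct sign emerges only once one records that the parities $\mathrm{Im}(\tau_+)>0$ and $\mathrm{Im}(\tau_-)<0$ are opposite. A convenient way to execute this rigorously is to write $\tau_\pm(x+\im\epsilon)\pm 2$ in polar form with principal arguments close to the target values $0$, $\pm\pi$, to pass to the limit inside $\mathrm{Arg}$ (which is continuous off the cut), and only then to apply the definition \eqref{root} of $\sqrt{\cdot}$.
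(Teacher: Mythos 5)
Your argument is correct and follows essentially the same route as the paper: continuity of the principal square root away from the cut handles the cases $\tau_\pm(x)>2$ and $\tau_\pm(x)<-2$ (with the sign in the latter coming from $\tau_-+2<0$), and for $|\tau_\pm(x)|<2$ the sign of $\mathrm{Im}\,\tau_\pm$ determines the side of the cut, exactly as in the paper's use of $\tau_\pm(x+\im\epsilon)\in\C_\pm$. The only point worth making explicit is the one-line verification (as in \eqref{fraction} of Appendix \ref{section: Appendix C}) that $\tau\mapsto(\tau-2)/(\tau+2)$ preserves the sign of the imaginary part, which is what justifies your assertion about the sign of the small imaginary part of the ratio.
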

\begin{proof}
\leavevmode
\begin{enumerate}
\item $\tau_{\pm}(x)>2.$
\\
Since $(\tau_{\pm}(x)-2)/(\tau_{\pm}(x)+2)>0$ and $\sqrt{z}$ is holomorphic on $\mathbb{C}\setminus(-\infty, 0],$ we have
\begin{align*}
\displaystyle\lim_{\epsilon\rightarrow+0}\{\tau_{\pm}(x+\im\epsilon)+2\}\sqrt{\displaystyle\frac{\tau_{\pm}(x+\im\epsilon)-2}{\tau_{\pm}(x+\im\epsilon)+2}}&=(\tau_{\pm}(x)+2)\sqrt{\displaystyle\frac{\tau_{\pm}(x)-2}{\tau_{\pm}(x)+2}}=\sqrt{(\tau_{\pm}(x)+2)^2\displaystyle\frac{\tau_{\pm}(x)-2}{\tau_{\pm}(x)+2}}=\sqrt{\tau_{\pm}(x)^2-4}.
\end{align*}
\\
\item  $\tau_{\pm}(x)<-2.$
\\
Similarly, from $(\tau_{\pm}(x)-2)/(\tau_{\pm}(x)+2)>0$ and $-\tau_{\pm}(x)-2>0,$ we have
\begin{align*}
\displaystyle\lim_{\epsilon\rightarrow+0}\{\tau_{\pm}(x+\im\epsilon)+2\}\sqrt{\displaystyle\frac{\tau_{\pm}(x+\im\epsilon)-2}{\tau_{\pm}(x+\im\epsilon)+2}}=-(-\tau_{\pm}(x)-2)\sqrt{\displaystyle\frac{\tau_{\pm}(x)-2}{\tau_{\pm}(x)+2}}&=-\sqrt{(-\tau_{\pm}(x)-2)^2\displaystyle\frac{\tau_{\pm}(x)-2}{\tau_{\pm}(x)+2}}
\\
&=-\sqrt{\tau_{\pm}(x)^2-4}.
\end{align*}
\item  $|\tau_{\pm}(x)|<2.$
\\
It is seen that $(\tau_{\pm}(x)-2)/(\tau_{\pm}(x)+2)<0.$ We set $\C_{\pm}$ as 
$$ \C_{\pm}:=\{z\in\C|\ \pm \mathrm{Im}\,z>0\}.$$
Since $\tau_{\pm}(x+\im\epsilon)\in\mathbb{C}_{\pm},$ �we get
$$\sqrt{(\tau_{\pm}(x+\im\epsilon)-2)/(\tau_{\pm}(x+\im\epsilon)+2)}\in\mathbb{C}_{\pm}.$$
Thus, we have
\begin{align*}
\displaystyle\lim_{\epsilon\rightarrow+0}\{\tau_{\pm}(x+\im\epsilon)+2\}\sqrt{\displaystyle\frac{\tau_{\pm}(x+\im\epsilon)-2}{\tau_{\pm}(x+\im\epsilon)+2}}=\{\tau_{\pm}(x)+2\}\sqrt{\left|\displaystyle\frac{\tau_{\pm}(x)-2}{\tau_{\pm}(x)+2}\right|}e^{\pm\im\frac{\pi}{2}}&=\pm \im\sqrt{(\tau_{\pm}(x)+2)^2\displaystyle\frac{2-\tau_{\pm}(x)}{\tau_{\pm}(x)+2}}
\\
&=\pm \im\sqrt{4-\tau_{\pm}(x)^2}.
\end{align*}
\end{enumerate}
\end{proof}

\begin{Lem}
\label{det:limzero}
For any $x>0,$ the following assertions hold:
\leavevmode
\begin{enumerate}
\item If $0\le|P|<\displaystyle\frac{1}{\sqrt{2}},$ then
\begin{align*}
\displaystyle\lim_{\epsilon\rightarrow +0}&\Delta_{T/T_{0}}(x+\im\epsilon) \\
&=
\begin{cases}
\displaystyle \frac{-2P\sqrt{x}+\im\sqrt{4-\left(2m_{P}+\sqrt{x}\right)^2}+\im\sqrt{4-\left(2m_{P}-\sqrt{x}\right)^2}}{2(1-P)\sqrt{x}}, &0<x<4(1-m_{P})^2, \\[10pt]
\displaystyle \frac{-2P\sqrt{x}+\sqrt{\left(2m_{P}+\sqrt{x}\right)^2-4}+\im\sqrt{4-\left(2m_{P}-\sqrt{x}\right)^2}}{2(1-P)\sqrt{x}}, &4(1-m_{P})^2<x<4(1+m_{P})^2.
\end{cases}
\end{align*}
\item If $|P|=\displaystyle\frac{1}{\sqrt{2}},$ then
\begin{align*}
\lim_{\epsilon\rightarrow+0}\Delta_{T/T_{0}}(x+\im\epsilon) =
\frac{-2P\sqrt{x}+\sqrt{4\sqrt{x}+x}+\im\sqrt{4\sqrt{x}-x}}{2(1-P)\sqrt{x}}, \qquad 0<x<16.
\end{align*}
\item If $\displaystyle\frac{1}{\sqrt{2}}<|P|<1,$ then
\begin{align*}
\displaystyle\lim_{\epsilon\rightarrow +0}&\Delta_{T/T_{0}}(x+\im\epsilon) \\
&=
\begin{cases}
\displaystyle\frac{-2P\sqrt{x}+\sqrt{\left(2m_{P}+\sqrt{x}\right)^2-4}-\sqrt{\left(2m_{P}-\sqrt{x}\right)^2-4}+\im 0}{2(1-P)\sqrt{x}}, & 0<x<4(m_{P}-1)^2, \\[10pt]
\displaystyle\frac{-2P\sqrt{x}+\sqrt{\left(2m_{P}+\sqrt{x}\right)^2-4}+\im\sqrt{4-\left(2m_{P}-\sqrt{x}\right)^2}}{2(1-P)\sqrt{x}}, & 4(m_{P}-1)^2<x<4(m_{P}+1)^2.
\end{cases}
\end{align*}
\end{enumerate}
\end{Lem}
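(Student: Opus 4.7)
The plan is to apply Lemma~\ref{resolvent}, which gives
\[
\Delta_{T/T_0}(x+\im\epsilon) = 1 + \frac{H(\tau_+(x+\im\epsilon)) - H(\tau_-(x+\im\epsilon))}{(1-P)\sqrt{x+\im\epsilon}}.
\]
Because $x > 0$, the denominator extends continuously as $\sqrt{x+\im\epsilon}\to\sqrt{x}$, so the entire problem reduces to evaluating the two boundary values $\lim_{\epsilon\to+0}H(\tau_\pm(x+\im\epsilon))$ and substituting.

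To compute these boundary values I would first rationalize $H.$ Setting $r:=\sqrt{(\tau-2)/(\tau+2)}$ and using $r^2-1=-4/(\tau+2)$, one obtains the identity
\[
H(\tau) = \frac{r-1}{r+1} = -\frac{(\tau+2)(r-1)^2}{4}.
\]
Plugging in the three possible boundary values $(\tau+2)r\to\sqrt{\tau^2-4},\ -\sqrt{\tau^2-4},\ \pm\im\sqrt{4-\tau^2}$ supplied by Lemma~\ref{lem:pdlimit}, expanding the square, and applying $(\tau+2)^2+(\tau^2-4)=2\tau(\tau+2)$ to cancel the common factor $(\tau+2)$, one arrives at the closed forms
\[
\lim_{\epsilon\to+0}H(\tau+\im\epsilon) = \begin{cases} (\sqrt{\tau^2-4}-\tau)/2, & \tau>2,\\ (-\sqrt{\tau^2-4}-\tau)/2, & \tau<-2,\\ (-\tau+\im\sqrt{4-\tau^2})/2, & |\tau|<2,\end{cases}
\]
with the sign of the imaginary part in the last line reversed when the limit is taken from the lower half-plane. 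Since $\tau_+(x+\im\epsilon)$ lies in the upper half-plane and $\tau_-(x+\im\epsilon)$ lies in the lower half-plane for small $\epsilon>0$, this fixes $H(\tau_+(x)+\im 0) = (-\tau_+(x)+\im\sqrt{4-\tau_+(x)^2})/2$ and $H(\tau_-(x)+\im 0) = (-\tau_-(x)-\im\sqrt{4-\tau_-(x)^2})/2$ whenever $|\tau_\pm(x)|<2$.

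The remainder is a case-by-case substitution driven by Lemma~\ref{lem:taupm}, which specifies, for each of the three cases (i)--(iii) in the statement and each subinterval of $x$, which closed form applies to each of $\tau_\pm(x)$. In every subcase I would substitute the relevant expressions into
\[
\Delta_{T/T_0}(x+\im 0) = \frac{2(1-P)\sqrt{x} + 2H(\tau_+(x)+\im 0) - 2H(\tau_-(x)+\im 0)}{2(1-P)\sqrt{x}},
\]
cancel the linear-in-$\tau$ terms via $\tau_+(x)-\tau_-(x)=2\sqrt{x}$, and collapse the constant using $2(1-P)\sqrt{x}-2\sqrt{x}=-2P\sqrt{x}$. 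Substituting $\tau_\pm(x)^2-4=(2m_P\pm\sqrt{x})^2-4$ then reproduces the stated formulas verbatim. The case $|P|=1/\sqrt{2}$ follows from the same template once one uses $m_P=1$ to simplify $(2+\sqrt{x})^2-4=x+4\sqrt{x}$ and $4-(2-\sqrt{x})^2=4\sqrt{x}-x$.

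The only point that requires attention is the sign convention in the middle regime $|\tau|<2$: because $\tau_+$ and $\tau_-$ approach the real axis from opposite half-planes, the imaginary contributions $\pm\im\sqrt{4-\tau_\pm^2}$ enter $H(\tau_+)-H(\tau_-)$ with \emph{matching} signs and therefore \emph{add} rather than cancel, which accounts for the purely imaginary contributions visible in the final formulas of cases (i), (ii) and the second subcase of (iii). Beyond this sign bookkeeping, no conceptual obstacle arises and the proof amounts to elementary algebra.
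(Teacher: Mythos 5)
Your proposal is correct and follows essentially the same route as the paper: reduce via Lemma \ref{resolvent}, rewrite $H$ in the rationalized form $H(\tau)=\bigl[(\tau+2)\sqrt{(\tau-2)/(\tau+2)}-\tau\bigr]/2$, take boundary values from Lemma \ref{lem:pdlimit} (with the half-plane bookkeeping for $\tau_{\pm}$ that makes the imaginary parts add in $H(\tau_+)-H(\tau_-)$), and finish with the case analysis of Lemma \ref{lem:taupm}, using $\tau_+-\tau_-=2\sqrt{x}$ to produce the $-2P\sqrt{x}$ term. The only difference is cosmetic: you re-derive the rationalization of $H$ via the $(r-1)^2$ identity instead of quoting it from the proof of Proposition \ref{theorem: properties of H}.
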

\begin{proof}
We use the following expression (see the proof of Proposition \ref{theorem: properties of H}):
\begin{equation*}
H(z)=\displaystyle\frac{-z+(z+2)\sqrt{\displaystyle\frac{z-2}{z+2}}}{2},\hspace{5mm}z\in\mathbb{C}\setminus[-2, 2].
\end{equation*}
Then, we have 
\begin{equation*}
\begin{aligned}
&\hspace{5mm}\displaystyle\lim_{\epsilon\rightarrow +0}\Delta_{T/T_{0}}(x+\im\epsilon)
\\
&=\displaystyle\lim_{\epsilon\rightarrow +0}\left[1+\displaystyle\frac{H(\tau_{+}(x+\im\epsilon))-H(\tau_{-}(x+\im\epsilon))}{(1-P)\sqrt{x+\im\epsilon}}\right]
\\
&=1+\displaystyle\frac{1}{2(1-P)\sqrt{x}}\displaystyle\lim_{\epsilon\rightarrow +0}\Big[-\tau_{+}(x+\im\epsilon)+(\tau_{+}(x+\im\epsilon)+2)\sqrt{\displaystyle\frac{\tau_{+}(x+\im\epsilon)-2}{\tau_{+}(x+\im\epsilon)+2}}
\\
&\hspace{80mm}+\tau_{-}(x+\im\epsilon)-(\tau_{-}(x+\im\epsilon)+2)\sqrt{\displaystyle\frac{\tau_{-}(x+\im\epsilon)-2}{\tau_{-}(x+\im\epsilon)+2}}\Big]
\\
&=\displaystyle\frac{1}{2(1-P)\sqrt{x}}\displaystyle\lim_{\epsilon\rightarrow +0}\left[-2P\sqrt{x}+(\tau_{+}(x+\im\epsilon)+2)\sqrt{\displaystyle\frac{\tau_{+}(x+\im\epsilon)-2}{\tau_{+}(x+\im\epsilon)+2}}-(\tau_{-}(x+\im\epsilon)+2)\sqrt{\displaystyle\frac{\tau_{-}(x+\im\epsilon)-2}{\tau_{-}(x+\im\epsilon)+2}}\right].
\end{aligned}
\end{equation*}
By applying Lemma \ref{lem:taupm} and Lemma \ref{lem:pdlimit}, we have the following:
\begin{enumerate}
\item If $0\le|P|<\displaystyle\frac{1}{\sqrt{2}},$ then
\begin{equation*}
\begin{aligned}
&\hspace{5mm}\displaystyle\lim_{\epsilon\rightarrow +0}\Delta_{T/T_{0}}(x+\im\epsilon)
\\
&=\begin{cases}\displaystyle\frac{1}{2(1-P)\sqrt{x}}\left[-2P\sqrt{x}+\im\sqrt{4-\tau_{+}(x)^2}+\im\sqrt{4-\tau_{-}(x)^2}\right],\quad 0<x<4(1-m_{P})^2,
\vspace{3mm}
\\
\displaystyle\frac{1}{2(1-P)\sqrt{x}}\left[-2P\sqrt{x}+\sqrt{\tau_{+}(x)^2-4}+\im\sqrt{4-\tau_{-}(x)^2}\right],\quad 4(1-m_{P})^2<x<4(1+m_{P})^2.
\end{cases}
\end{aligned}
\end{equation*}
\item If $|P|=\displaystyle\frac{1}{\sqrt{2}},$ then
\begin{equation*}
\begin{aligned}
&\hspace{5mm}\displaystyle\lim_{\epsilon\rightarrow +0}\Delta_{T/T_{0}}(x+\im\epsilon)
\\
&=\displaystyle\frac{1}{2(1-P)\sqrt{x}}\left[-2P\sqrt{x}+\sqrt{\tau_{+}(x)^2-4}+\im\sqrt{4-\tau_{-}(x)^2}\right],\quad 0<x<16.
\end{aligned}
\end{equation*}
\item If $\displaystyle\frac{1}{\sqrt{2}}<|P|<1,$ then
\begin{equation*}
\begin{aligned}
&\hspace{5mm}\displaystyle\lim_{\epsilon\rightarrow +0}\Delta_{T/T_{0}}(x+\im\epsilon)
\\
&=\begin{cases}\displaystyle\frac{1}{2(1-P)\sqrt{x}}\left[-2P\sqrt{x}+\sqrt{\tau_{+}(x)^2-4}-\sqrt{\tau_{-}(x)^2-4}+\im 0\right],\quad 0<x<4(m_{P}-1)^2,
\vspace{3mm}
\\
\displaystyle\frac{1}{2(1-P)\sqrt{x}}\left[-2P\sqrt{x}+\sqrt{\tau_{+}(x)^2-4}+\im\sqrt{4-\tau_{-}(x)^2}\right],\quad 4(1-m_{P})^2<x<4(1+m_{P})^2.
\end{cases}
\end{aligned}
\end{equation*}
\end{enumerate}
Thus the claim follows.
\end{proof}
\begin{Lem}\label{xi:positivezero}
For any $x>4(1+m_{P})^2,$ we have $\lim_{\epsilon\rightarrow+0}\Delta_{T/T_{0}}(x+\im\epsilon)>0.$ In particular, $\xi(x)=0$ for any $x>4(1+m_{P})^2.$
\begin{proof}
By Lemma \ref{lem:taupm}, it is seen that $\tau_{+}(x)>2$ and $\tau_{-}(x)<-2$ for any $P\in(-1, 1).$ Thus, we have
\begin{align}\label{limitzero}
\displaystyle\lim_{\epsilon\rightarrow+0}\Delta_{T/T_{0}}(x+\im\epsilon)=\displaystyle\frac{1}{2(1-P)\sqrt{x}}\left[-2P\sqrt{x}+\sqrt{(2m_{P}+\sqrt{x})^2-4}+\sqrt{(2m_{P}-\sqrt{x})^2-4}\right].
\end{align}
Since the denominator of \eqref{limitzero} is positive, we consider the numerator. We parametrize $x$ as $x=4(1+m_{P}+t)^2$ with $t>0.$ Then we have
\begin{align*}
&\ -2P\sqrt{x}+\sqrt{(2m_{P}+\sqrt{x})^2-4}+\sqrt{(2m_{P}-\sqrt{x})^2-4}
\\
&=-4P(1+m_{P}+t)+2\sqrt{(2m_{P}+t+2)(2m_{p}+t)}+2\sqrt{t(t+2)}
\\
&=2\{\sqrt{(2m_{P}+t+2)(2m_{p}+t)}+\sqrt{t(t+2)}-2P(1+m_{P}+t)\}
\\
&=2\displaystyle\frac{(2m_{P}+t+2)(2m_{P}+t)+t(t+2)+2\sqrt{(2m_{P}+t+2)(2m_{P}+t)}\sqrt{t(t+2)}-4P^2(1+m_{P}+t)^2}{\sqrt{(2m_{P}+t+2)(2m_{p}+t)}+\sqrt{t(t+2)}+2P(1+m_{P}+t)}.
\end{align*}
The numerator of the above fraction is estimated as follows:
\begin{align*}
&\ (2m_{P}+t+2)(2m_{P}+t)+t(t+2)+2\sqrt{(2m_{P}+t+2)(2m_{P}+t)}\sqrt{t(t+2)}-4P^2(1+m_{P}+t)^2
\\
&\ge  (2m_{P}+t+2)(2m_{P}+t)+t(t+2)+2\sqrt{(2m_{P}+t+0)(2m_{P}+t)}\sqrt{t(t+0)}-4P^2(1+m_{P}+t)^2
\\
&= (2m_{P}+t)^2+2(2m_{P}+t)+t(t+2)+2(2m_{P}+t)t-4P^2(1+m_{P}+t)^2
\\
&= 4m_{P}^2+4m_{P}t+t^2+4m_{P}+2t+t^2+2t+2(2m_{P}+t)t-4P^2(1+m_{P})^2-8P^2t(1+m_{P})-4P^2t^2
\\
&=\left\{4m_{P}^2+4m_{P}-4P^2(1+m_{P})^2\right\}+\{8m_{P}t-8P^2(1+m_{P})t\}+4t^2-4P^2t^2+4t
\\
&=4\{m_{P}^2+m_{P}-P^2(1+m_{P})^2\}+8t\{m_{P}-P^2(1+m_{P})\}+4t^2(1-P^2)+4t
\\
&=4t^2(1-P^2)+4t>0,
\end{align*}
where we used 
$$ m_{P}^2+m_{P}-P^2(1+m_{P})^2=0,\quad m_{P}-P^2(1+m_{P})=0.$$
Thus the fraction in \eqref{limitzero} is positive, This implies $\xi(x)=0$ for $x>4(1+m_{P})^2$
\end{proof}
\end{Lem}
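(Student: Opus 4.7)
The plan is to first explicitly evaluate the boundary limit $\lim_{\epsilon\to+0}\Delta_{T/T_0}(x+\im\epsilon)$ in the regime $x>4(1+m_P)^2$, then reduce positivity to an algebraic inequality which succumbs to a rationalization trick. For the regime of interest, Lemma \ref{lem:taupm} tells us that in all three cases for $|P|\in[0,1)$ we have $\tau_+(x)>2$ and $\tau_-(x)<-2$. Feeding this into Lemma \ref{lem:pdlimit} (as was done inside the proof of Lemma \ref{det:limzero}) yields
\[
\lim_{\epsilon\to+0}\Delta_{T/T_0}(x+\im\epsilon)=\frac{-2P\sqrt{x}+\sqrt{(2m_P+\sqrt{x})^2-4}+\sqrt{(2m_P-\sqrt{x})^2-4}}{2(1-P)\sqrt{x}},
\]
which is the identity already labelled \eqref{limitzero} in the statement's prelude. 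Since $2(1-P)\sqrt{x}>0$, it suffices to show the numerator is positive.

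Next, I would introduce the parametrization $\sqrt{x}=2(1+m_P+t)$ with $t>0$, motivated by the fact that the boundary condition $x=4(1+m_P)^2$ corresponds to $t=0$. This turns the three radicals into
\[
2m_P+\sqrt{x}=2(2m_P+1+t),\qquad 2m_P-\sqrt{x}=-2(1+t),
\]
so that $(2m_P+\sqrt{x})^2-4=4(2m_P+t)(2m_P+t+2)$ and $(2m_P-\sqrt{x})^2-4=4t(t+2)$. Consequently the numerator equals $2(A-B)$ where
\[
A:=\sqrt{(2m_P+t)(2m_P+t+2)}+\sqrt{t(t+2)},\qquad B:=2P(1+m_P+t).
\]
If $P\le 0$ then $B\le 0\le A$ and positivity is immediate, so I may assume $P>0$; then $A+B>0$ and the inequality $A>B$ is equivalent to $A^2-B^2>0$.

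The main step is to expand $A^2-B^2$. Using the elementary bound $\sqrt{(2m_P+t)(2m_P+t+2)}\sqrt{t(t+2)}\geq (2m_P+t)\,t$, and expanding everything, one is left with
\[
A^2-B^2\;\ge\; 4\bigl[m_P^2+m_P-P^2(1+m_P)^2\bigr]+8t\bigl[m_P-P^2(1+m_P)\bigr]+4t^2(1-P^2)+4t.
\]
Here the two quantities in the square brackets both vanish: from $m_P=P^2/(1-P^2)$ one checks $m_P=P^2(1+m_P)$ directly (giving the second identity), and multiplying by $(1+m_P)$ gives $m_P(1+m_P)=P^2(1+m_P)^2$, i.e. the first identity. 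The remaining terms give $4t^2(1-P^2)+4t>0$, which proves $A^2-B^2>0$, hence $A-B>0$.

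Finally, the assertion $\xi(x)=0$ follows from the definition \eqref{ssf}: a positive real boundary limit has argument zero. The only delicate point is the algebraic identity step, where it is crucial to use both relations $m_P^2+m_P=P^2(1+m_P)^2$ and $m_P=P^2(1+m_P)$ to kill all the $t$-independent and linear-in-$t$ remainders simultaneously; this is really where the specific geometry of the parametrization pays off, and everything else is routine expansion.
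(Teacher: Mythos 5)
Your proposal is correct and follows essentially the same route as the paper: the same boundary-limit formula from Lemmas \ref{lem:taupm} and \ref{lem:pdlimit}, the same parametrization $\sqrt{x}=2(1+m_P+t)$, the same rationalization $A-B=(A^2-B^2)/(A+B)$ with the bound $\sqrt{(2m_P+t)(2m_P+t+2)}\sqrt{t(t+2)}\ge(2m_P+t)t$, and the same two identities for $m_P$. Your separate treatment of $P\le 0$ is a small refinement (it guarantees $A+B>0$ before dividing, a point the paper leaves implicit), but otherwise the arguments coincide.
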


\section{Spectral shift function and the Witten index}
\label{section: section 5}
Based on Remark \ref{xi:positive}, Lemma \ref{det:limzero} and Lemma \ref{xi:positivezero}, we now compute the spectral shift function.
\begin{Lem}\label{lem:ssf}
The following assertions hold:
\begin{enumerate}
\item If $P=0,$ then
\begin{align*}
\xi(x)=\displaystyle\frac{1}{2},\quad 0<x<4.
\end{align*}
\item If $0< |P|< \displaystyle\frac{1}{\sqrt{2}},$ then
\begin{align*}
\xi(x)=\begin{cases} \displaystyle\frac{1}{2}-\displaystyle\frac{1}{\pi}\arctan\displaystyle\frac{-2P\sqrt{x}}{\sqrt{4-(2m_{P}+\sqrt{x})^2}+\sqrt{4-(2m_{P}-\sqrt{x})^2}},&  0<x<4(1-m_{P})^2,
\vspace{3mm}
\\
\displaystyle\frac{1}{2}-\displaystyle\frac{1}{\pi}\arctan\displaystyle\frac{-2P\sqrt{x}+\sqrt{(2m_{P}+\sqrt{x})^2-4}}{\sqrt{4-(2m_{P}-\sqrt{x})^2}}, &  4(1-m_{P})^2<x<4(1+m_{P})^2.
\end{cases}
\end{align*}
\item If $|P|= \displaystyle\frac{1}{\sqrt{2}},$ then
\begin{align*}
\xi(x)=\displaystyle\frac{1}{2}-\displaystyle\frac{1}{\pi}\arctan\displaystyle\frac{-2Px^{1/4}+\sqrt{4+\sqrt{x}}}{\sqrt{4-\sqrt{x}}},\quad  0<x<16.
\end{align*}
\item If $\displaystyle\frac{1}{\sqrt{2}}<|P|<1,$ then
\begin{align*}
\hspace{-10mm}\xi(x)=\begin{cases} \displaystyle\frac{1}{2}-\displaystyle\frac{1}{2}\sgn\left[-2P\sqrt{x}+\sqrt{\left(2m_{P}+\sqrt{x}\right)^2-4}-\sqrt{\left(2m_{P}-\sqrt{x}\right)^2-4}\right],\ & 0<x<4(1-m_{P})^2,
\vspace{3mm}
\\
\displaystyle\frac{1}{2}-\displaystyle\frac{1}{\pi}\arctan\displaystyle\frac{-2P\sqrt{x}+\sqrt{(2m_{P}+\sqrt{x})^2-4}}{\sqrt{4-(2m_{P}-\sqrt{x})^2}}, \ & 4(1-m_{P})^2<x<4(1+m_{P})^2.
\end{cases}
\end{align*}
\end{enumerate}
\end{Lem}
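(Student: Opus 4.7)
The plan is to apply the definition
\[
\xi(x) = \frac{1}{\pi}\lim_{\epsilon\to+0}\mathrm{Arg}\,\Delta_{T/T_0}(x+\im\epsilon)
\]
directly to the explicit boundary values already computed in Lemma \ref{det:limzero}, relying on the elementary identity
\[
\mathrm{Arg}(u+\im v) = \frac{\pi}{2}-\arctan\!\Bigl(\frac{u}{v}\Bigr), \qquad v>0,
\]
so that $(1/\pi)\mathrm{Arg}(u+\im v) = 1/2 - (1/\pi)\arctan(u/v)$ whenever the imaginary part $v$ is strictly positive. Since $-1<P<1$ gives $1-P>0$, the common prefactor $1/(2(1-P)\sqrt{x})$ in Lemma \ref{det:limzero} is a positive real number and therefore contributes nothing to the argument; it suffices to take $\mathrm{Arg}$ of the numerator of each expression.

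First I would dispose of the case $P=0$: then $m_P=0$, Lemma \ref{det:limzero}(1) with $0<x<4$ gives a purely imaginary limit with positive imaginary part $\sqrt{4-x}/\sqrt{x}$, so $\mathrm{Arg}=\pi/2$ and $\xi(x)=1/2$. Next, for each subcase in Lemma \ref{det:limzero} in which the numerator has the form $u+\im v$ with $v>0$, I would simply read off $u$ and $v$ and substitute into the identity above. Concretely: in part (1) first subcase ($0<|P|<1/\sqrt 2$ and $0<x<4(1-m_P)^2$) take $u=-2P\sqrt{x}$ and $v=\sqrt{4-(2m_P+\sqrt{x})^2}+\sqrt{4-(2m_P-\sqrt{x})^2}$; in the remaining second subcase of part (1), in part (2), and in the second subcase of part (3), the forms
\[
u=-2P\sqrt{x}+\sqrt{(2m_P+\sqrt{x})^2-4}, \qquad v=\sqrt{4-(2m_P-\sqrt{x})^2}
\]
are read directly from Lemma \ref{det:limzero}. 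Strict positivity of $v$ in each range is already guaranteed by Lemma \ref{lem:taupm}, which states exactly that $|\tau_-(x)|<2$ (or $|\tau_+(x)|<2$) on the relevant interval, so $4-\tau_\pm(x)^2>0$.

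The only qualitatively different situation is the first subcase of part (4), where $|P|>1/\sqrt 2$ and $0<x<4(m_P-1)^2$. Here Lemma \ref{lem:taupm} gives $\tau_\pm(x)>2$, so both square roots in Lemma \ref{det:limzero}(3) are real; the limit of $\Delta_{T/T_0}(x+\im\epsilon)$ is a real number approached from the upper half-plane. In that case $\mathrm{Arg}$ equals $0$ or $\pi$ according to the sign of the numerator, so $(1/\pi)\mathrm{Arg} = (1-\sgn(\cdot))/2 = 1/2 - (1/2)\sgn(\cdot)$, producing the sign-function expression in the lemma. (The scenario where the numerator vanishes can be ignored since it occurs on a set of $x$'s of measure zero, and it is consistent with the convention $\sgn(0)=0$ from \eqref{sgn_def}.)

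I do not anticipate any serious obstacle: the whole proof is a case-by-case transcription, and every analytical fact needed---convergence to the boundary value, positivity of the denominator prefactor, strict positivity of the imaginary part---has been established in Lemmas \ref{lem:taupm}, \ref{lem:pdlimit}, and \ref{det:limzero}. The only point requiring mild care is choosing the correct branch of $\mathrm{Arg}$: since $\epsilon\to +0$ approaches from $\mathbb{C}_+$, whenever the boundary value is a negative real number we must take $\mathrm{Arg}=+\pi$ rather than $-\pi$, which is exactly what the convention $\mathrm{Arg}\in(-\pi,\pi]$ encodes through the limit from above.
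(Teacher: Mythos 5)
Your proposal is correct and follows essentially the same route as the paper: take the boundary values from Lemma \ref{det:limzero}, discard the positive prefactor $1/(2(1-P)\sqrt{x})$, and apply the identity $\mathrm{Arg}(u+\im v)=\pi/2-\arctan(u/v)$ for $v>0$ (with $\mathrm{Arg}=0$ or $\pi$ in the real-boundary-value subcase of part (4), and the direct purely imaginary computation for $P=0$). The only cosmetic step you leave implicit is dividing numerator and denominator by $x^{1/4}$ in the $|P|=1/\sqrt{2}$ case to reach the stated form, which is immediate.
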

\begin{proof}
We observe that $\Delta_{T/T_{0}}(x+\im\epsilon)\in\C_{+},$ 
which implies that $\lim_{\epsilon\rightarrow+0}\Delta_{T/T_{0}}(x+\im\epsilon)\in \overline{\C_{+}}.$ 
To compute the spectral shift function $\xi,$ we make use of the following formula. If $x + \im y \in \overline{\C}_{+} \setminus \{0\},$ then we have
\begin{equation*}
\mathrm{Arg}(x+\im y)=
\begin{cases}
\displaystyle\frac{\pi}{2}-\arctan\displaystyle\frac{x}{y}, &  y>0, \\
0,&  x>0 \mbox{ and } y=0, \\
\pi, & x<0 \mbox{ and } y=0.
\end{cases}
\end{equation*}
This together with Lemma \ref{det:limzero} and the definition of $\xi,$ the lemma follows except the case $P=0.$ 
If $P=0,$  we have
\begin{align}
\displaystyle\lim_{\epsilon\rightarrow+0}\Delta_{T/T_{0}}(x+\im\epsilon)=\displaystyle\frac{\im}{\sqrt{x}}\sqrt{4-x},\quad 0<x<4.
\end{align}
Therefore, $\xi(x)=1/2.$
\end{proof}

\begin{Lem}
We have
\begin{align}\label{index:TT0}
\displaystyle\lim_{t\rightarrow+\infty}e^{-t}\Tr(e^{t\alpha_{P}T(P)}-e^{t\alpha_{P}T_{0}(P)})=\xi(\alpha_{P}^{-1}-0)=
\begin{cases}
0,  &-1<P<0, \\ 
\displaystyle\frac{1}{2},  & 0\le P<1.
\end{cases}
\end{align}
\end{Lem}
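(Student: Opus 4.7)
The plan is to invoke Krein's trace formula to rewrite the difference of exponentials as a Laplace-type integral against the spectral shift function $\xi$, then extract the left boundary value of $\xi$ at the top of its support via a standard Abelian argument, and finally compute this boundary value explicitly using Lemma \ref{lem:ssf}.

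Since the perturbation $T(P) - T_{0}(P) = \frac{2}{1-P}\Omega_{0}$ has rank one, Krein's trace formula applied to $f(\lambda) = e^{t\alpha_{P}\lambda}$ yields
\begin{equation*}
e^{-t}\Tr(e^{t\alpha_{P}T(P)} - e^{t\alpha_{P}T_{0}(P)}) = t\alpha_{P}\int_{\R} e^{t(\alpha_{P}\lambda - 1)}\xi(\lambda)\,d\lambda.
\end{equation*}
Remark \ref{xi:positive} and Lemma \ref{xi:positivezero} confine the support of $\xi$ to $[0,\alpha_{P}^{-1}]$, and Lemma \ref{lem:ssf} shows $\xi$ is bounded there. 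The substitution $s = 1 - \alpha_{P}\lambda$ converts the right-hand side into $t\int_{0}^{1}e^{-ts}\xi(\alpha_{P}^{-1}(1-s))\,ds$. A routine Abelian argument --- split at a small $\delta$ where $|\xi(\alpha_{P}^{-1}(1-s)) - \xi(\alpha_{P}^{-1}-0)| < \varepsilon$ for $s \in (0,\delta)$, and use $t\int_{0}^{\infty}e^{-ts}\,ds = 1$ on the principal piece together with $t\int_{\delta}^{\infty}e^{-ts}\,ds = e^{-t\delta} \to 0$ on the tail --- identifies the limit as $\xi(\alpha_{P}^{-1} - 0)$, provided this left limit exists.

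To compute $\xi(\alpha_{P}^{-1} - 0)$ from Lemma \ref{lem:ssf}, I parametrize $\sqrt{x} = 2(1+m_{P}) - \varepsilon$ with $\varepsilon \downarrow 0$ and work in the branch of Lemma \ref{lem:ssf} containing $\alpha_{P}^{-1}$ as its right endpoint; this branch has the same structural form in each of the cases $0 < |P| < 1/\sqrt{2}$, $|P| = 1/\sqrt{2}$, and $1/\sqrt{2} < |P| < 1$. The denominator $\sqrt{4-(2m_{P}-\sqrt{x})^{2}}$ behaves like $2\sqrt{\varepsilon}$, while the numerator, using the identity $\sqrt{m_{P}(m_{P}+1)} = |P|/(1-P^{2})$, expands as $\frac{4(|P|-P)}{1-P^{2}} + O(\varepsilon)$. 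For $P < 0$ the leading term is strictly positive, so the ratio diverges to $+\infty$, $\arctan \to \pi/2$, and $\xi \to 0$. For $P > 0$ the leading term vanishes and the next-order correction, which uses the identity $(4m_{P}+2)(1-P^{2}) = 2(1+P^{2})$, makes the ratio of order $\sqrt{\varepsilon} \to 0$, so $\arctan \to 0$ and $\xi \to 1/2$. The case $P = 0$ is immediate because Lemma \ref{lem:ssf} gives $\xi \equiv 1/2$ on $(0,4)$.

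The main obstacle will be the delicate $0/0$ asymptotic at the right endpoint when $P \ge 0$; resolving it relies on the algebraic identities for $m_{P}$ inherited from the constraint $|A| = |P|$. This sign-dependent resolution is precisely what produces the half-integer $1/2$ at the right endpoint, which in turn accounts for the half-integer Witten indices appearing in the main theorem.
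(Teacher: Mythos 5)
Your proposal is correct and takes essentially the same route as the paper: Krein's trace formula reduces the limit to the boundary value $\xi(\alpha_{P}^{-1}-0)$ (the paper does this by integration by parts plus dominated convergence, you by an equivalent Abelian argument), and that value is then read off from Lemma \ref{lem:ssf} by the same endpoint parametrization near $x=4(1+m_{P})^{2}$. Your Taylor expansion of the square root replaces the paper's multiplication by the conjugate, but both rest on the same algebraic identities for $m_{P}=P^{2}/(1-P^{2})$ and yield the same sign-dependent values $0$ and $1/2$.
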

\begin{proof}
Let us recall $\alpha_{P}^{-1}=4(1+m_{P})^2$ (see \eqref{alphap-mp}). First, we prove
\begin{align}\label{ssf-edge}
\lim_{t\rightarrow\infty}e^{-t}\Tr(e^{t\alpha_{P}T(P)}-e^{t\alpha_{P}T_{0}(P)})=\xi(\alpha_{P}^{-1}-0).
\end{align}
Let $\epsilon>0$ be sufficiently small so that $(\alpha_{P}^{-1}-\epsilon, \alpha_{P}^{-1})\subset (4(1-m_{P})^2, \alpha_{P}^{-1}).$ Then, by the Krein trace formula (see \cite[Section 9.7]{MR2953553}), it follows that
\begin{align*}
e^{-t}\Tr(e^{t\alpha_{P}T(P)}-e^{t\alpha_{P}T_{0}(P)})&=\displaystyle\int_{\R}e^{-t}(e^{t\alpha_{P}x})'\xi(x)dx
\\
&=\displaystyle\int_{0}^{\alpha_{P}^{-1}}e^{-t}(e^{t\alpha_{P}x})'\xi(x)dx
\\
&=\displaystyle\int_{0}^{\alpha_{P}^{-1}-\epsilon}e^{-t}(e^{t\alpha_{P}x})'\xi(x)dx+\displaystyle\int_{\alpha_{P}^{-1}-\epsilon}^{\alpha_{P}^{-1}}e^{-t}(e^{t\alpha_{P}x})'\xi(x)dx
\\
&=:\mathrm{I}(t, \epsilon)+\mathrm{II}(t, \epsilon).
\end{align*}
Since $\xi$ is piecewise continuous and is bounded on $\R$ $\mathrm{a.e.},$ the dominated convergence theorem yields
$$\displaystyle\lim_{t\rightarrow\infty}\mathrm{I}(t, \epsilon)=0.$$ For $\mathrm{II(t, \epsilon)},$ we have
\begin{align}
\mathrm{II}(t, \epsilon)&= \left[e^{-t}e^{t\alpha_{P}x}\xi(x)\right]_{\alpha_{P}^{-1}-\epsilon}^{\alpha_{P}^{-1}}-\displaystyle\int_{\alpha_{P}^{-1}-\epsilon}^{\alpha_{P}^{-1}}e^{-t}e^{t\alpha_{P}x}\xi'(x)dx
\\
&=\xi(\alpha_{P}^{-1}-0)-e^{-t\epsilon}\xi(\alpha_{P}^{-1}-\epsilon+0)-\displaystyle\int_{\alpha_{P}^{-1}-\epsilon}^{\alpha_{P}^{-1}}e^{-t}e^{t\alpha_{P}x}\xi'(x)dx.
\end{align}
Since $\xi(x)$ is differentiable in $(\alpha_{P}^{-1}-\epsilon, \alpha_{P}^{-1})$ and $\xi'(x)$ is $L^{1}$-integrable in $(\alpha_{P}^{-1}-\epsilon, \alpha_{P}^{-1}),$ the dominated convergence theorem implies
$$ \displaystyle\lim_{t\rightarrow\infty} \mathrm{II}(t, \epsilon)=\xi(\alpha_{P}^{-1}-0).$$
Thus \eqref{ssf-edge} holds.

Next, we prove the right hand side of \eqref{index:TT0}. If $-1<P<0,$ then numerators of composents of $\arctan(\cdot)$ in Lemma \ref{lem:ssf} take positive values. Thus, we have $\xi(\alpha_{P}^{-1}-0)=0.$ If $P=0,$ we have $\xi(4-0)=1/2$ from Lemma \ref{lem:ssf}. If $0<P<1,$ we need some calculation. Similar to Lemma \ref{xi:positivezero}, we parametrize $x$ as $x=4(1+m_{P}-t)^2$ with sufficiently small $t>0.$ Then, we have
\begin{align*}
\displaystyle\frac{-2P\sqrt{x}+\sqrt{(2m_{P}+\sqrt{x})^2-4}}{\sqrt{4-(2m_{P}-\sqrt{x})^2}}&=\displaystyle\frac{-2P(1+m_{P}-t)+\sqrt{(2m_{P}-t)(2m_{P}+2-t)}}{\sqrt{t(2-t)}}
\\
&=\displaystyle\frac{(2m_{P}-t)(2m_{P}+2-t)-4P^2(1+m_{P}-t)^2}{\sqrt{t(2-t)}[2P(1+m_{P}-t)+\sqrt{(2m_{P}-t)(2m_{P}+2-t)}]}.
\end{align*} 
The numerator of above fraction is calculated as follows:
\begin{align*}
& \ (2m_{P}-t)(2m_{P}+2-t)-4P^2(1+m_{P}-t)^2
\\
&=(2m_{P}-t)^2+2(2m_{P}-t)-4P^2(1+m_{P})^2+8P^2t(1+m_{P})-4P^2t^2
\\
&=4m_{P}^2-4m_{P}t+t^2+4m_{P}-2t-4P^2(1+m_{P})^2+8P^2t(1+m_{P})-4P^2t^2
\\
&=\left\{4m_{P}^2+4m_{P}-4P^2(1+m_{P})^2\right\}-4m_{P}t+t^2-2t+8P^2t(1+m_{P})-4P^2t^2
\\
&=0+t\{-4m_{P}+t-2+8P^2(1+m_{P})-4P^2t\}
\\
&=t\{(1-4P^2)t-4m_{P}-2+8P^2(1+m_{P})\}
\\
&=t\left\{(1-4P^2)t-\displaystyle\frac{4P^2}{1-P^2}-2+8P^2\left(1+\displaystyle\frac{P^2}{1-P^2}\right)\right\}
\\
&=t\left\{(1-4P^2)t+2\displaystyle\frac{3P^2-1}{1-P^2}\right\}
\end{align*}
Thus, it follows that
\begin{align*}
&\displaystyle\frac{-2P\sqrt{x}+\sqrt{(2m_{P}+\sqrt{x})^2-4}}{\sqrt{4-(2m_{P}-\sqrt{x})^2}}
\\
&=\displaystyle\frac{\sqrt{t}\left\{(1-4P^2)t+2\displaystyle\frac{3P^2-1}{1-P^2}\right\}}{\sqrt{(2-t)}[2P(1+m_{P}-t)+\sqrt{(2m_{P}-t)(2m_{P}+2-t)}]}\rightarrow 0\quad t\rightarrow+0.
\end{align*}
This implies that
\[ 
\displaystyle\lim_{x\rightarrow 4(1+m_{P})^2-0}
\arctan 
\left[\displaystyle\frac{-2P\sqrt{x}+\sqrt{(2m_{P}+\sqrt{x})^2-4}}{\sqrt{4-(2m_{P}-\sqrt{x})^2}}\right]=0.
\]
By Lemma \ref{lem:ssf}, the results follow.
\end{proof}
\begin{Lem}\label{lem:index}
We have
\begin{align*}
\begin{cases}
w(F_{+}(v^{\ast}))=W(a_+,p_+),
\quad \text{if }|a_{+}|=|p_{+}|<1,
\\
w(F_{-}(v))=-W(a_-,p_-),
\quad 
\text{if }|a_{-}|=|p_{-}|<1.
\end{cases}
\end{align*}
\end{Lem}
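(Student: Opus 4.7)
The plan is to reduce the lemma to the preceding spectral-shift computation by using Table~1, the identity $\alpha_{-P}=\alpha_P$, and the symmetry observation in Remark~\ref{symmetry}. Starting from Table~1, one reads off
\[
F_+(v^*)^*F_+(v^*) = 1 - \alpha_{p_+}\,T(a_+,p_+), \qquad F_+(v^*)F_+(v^*)^* = 1 - \alpha_{p_+}\,T(-a_+,-p_+),
\]
so that
\[
\ind_t\bigl(F_+(v^*)\bigr) = e^{-t}\Tr\bigl(e^{t\alpha_{p_+}T(a_+,p_+)} - e^{t\alpha_{p_+}T(-a_+,-p_+)}\bigr).
\]
Since $T_0(a_+,p_+)=T_0(-a_+,-p_+)$, inserting this common operator and subtracting it back splits the right-hand side into a difference of two traces, each of the form $e^{-t}\Tr(e^{t\alpha_P T(A,P)} - e^{t\alpha_P T_0(A,P)})$ treated in the preceding lemma.

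Next, since $|a_+|=|p_+|<1$, the value $a_+$ is either $p_+$ or $-p_+$. By Remark~\ref{symmetry}, each of the traces above is unchanged if $A$ is replaced by $-A$, so we may always take $A=-P$ and replace $T(a_+,p_+)$ by $T(p_+)$ and $T(-a_+,-p_+)$ by $T(-p_+)$. Passing to the limit $t\to\infty$ and applying the preceding lemma (together with $\alpha_{-p_+}=\alpha_{p_+}$) yields
\[
w\bigl(F_+(v^*)\bigr) = \xi_{p_+}\bigl(\alpha_{p_+}^{-1}-0\bigr) - \xi_{-p_+}\bigl(\alpha_{-p_+}^{-1}-0\bigr),
\]
which evaluates to $+1/2$ if $p_+>0$, to $-1/2$ if $p_+<0$, and to $0$ if $p_+=0$. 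A quick case check then confirms that this matches $W(a_+,p_+)$ in every subcase $a_+=\pm p_+$: for instance, if $p_+>0$ and $a_+=p_+$ one has $\sgn(a_++p_+)-\sgn(a_+-p_+)=1-0=1$, while if $a_+=-p_+$ one gets $0-(-1)=1$, so $W=1/2$ in both cases; the signs $p_+<0$ and $p_+=0$ are analogous.

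The case of $F_-(v)$ is entirely parallel. Table~1 gives
\[
F_-(v)^*F_-(v) = 1 - \alpha_{p_-}\,T(-a_-,-p_-), \qquad F_-(v)F_-(v)^* = 1 - \alpha_{p_-}\,T(a_-,p_-),
\]
so the roles of the two terms in the difference are reversed relative to the $F_+(v^*)$-computation. Repeating the argument therefore produces an overall sign, and yields $w(F_-(v))=-W(a_-,p_-)$. The main obstacle is purely bookkeeping—keeping straight which sign of $P$ corresponds to which factor among $F_\pm(v^{(*)})^{(*)}F_\pm(v^{(*)})^{(*)}$—since all of the analytic work has already been carried out in Lemma~\ref{lem:ssf} and the subsequent limit computation.
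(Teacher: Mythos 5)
Your proposal is correct and follows essentially the same route as the paper's proof: read off the factorizations from Table~1, insert the common $T_0$ using $\alpha_{-P}=\alpha_P$ and $T_0(A,P)=T_0(-A,-P)$, reduce to $T(\pm p_\pm)$ via Remark~\ref{symmetry}, and evaluate the limits with \eqref{index:TT0}, obtaining $\mathrm{sgn}(p_\pm)/2$ and matching it to $\pm W(a_\pm,p_\pm)$. Your treatment is in fact slightly more explicit than the paper's in spelling out how the two subcases $a_+=\pm p_+$ are absorbed by the symmetry remark, but the argument is the same.
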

\begin{proof}
We only consider $F_{+}(v^{\ast}).$ The other case is similar. This proof is a combination of Table 1, Remark \ref{symmetry}, \eqref{def:T}, \eqref{def:T0} and \eqref{index:TT0}. We recall that
\begin{align*}
F_{+}(v^{\ast})^{\ast}F_{+}(v^{\ast})&=-\alpha_{p_{+}}T(p_{+})+1,\quad F_{+}(v^{\ast})F_{+}(v^{\ast})^{\ast}=-\alpha_{-p_{+}}T(-p_{+})+1.
\end{align*}
Thus, it follows that
\begin{align*}
w(F_{+}(v^{\ast}))=&\displaystyle\lim_{t\rightarrow\infty}\Tr(e^{-tF_{+}(v^{\ast})^{\ast}F_{+}(v^{\ast})}-e^{-tF_{+}(v^{\ast})F_{+}(v^{\ast})^{\ast}})
\\
=&\displaystyle\lim_{t\rightarrow\infty}e^{-t}\Tr(e^{t\alpha_{p_{+}}T(p_{+})}-e^{t\alpha_{-p_{+}}T(-p_{+})})
\\
=&\displaystyle\lim_{t\rightarrow\infty}e^{-t}\Tr(e^{t\alpha_{p_{+}}T(p_{+})}-e^{t\alpha_{p_{+}}T_{0}(p_{+})})-\displaystyle\lim_{t\rightarrow\infty}e^{-t}\Tr(e^{t\alpha_{-p_{+}}T(-p_{+})}-e^{t\alpha_{-p_{+}}T_{0}(-p_{+})})
\\
&=\begin{cases}
0-\displaystyle\frac{1}{2}=-\displaystyle\frac{1}{2},& -1<p_{+}< 0,
\vspace{2mm}
\\
\displaystyle\frac{1}{2}-\displaystyle\frac{1}{2}=0,&  p_{+}=0,
\vspace{2mm}
\\
\displaystyle\frac{1}{2}-0=\displaystyle\frac{1}{2},&  0<p_{+}<1,
\end{cases}\\
&=\frac{\mathrm{sgn}(p_+)}{2}
=W(a_+,p_+).
\end{align*}
where to get the third line, we used $\alpha_{p_{+}}T_{0}(p_{+})=\alpha_{-p_{+}}T_{0}(-p_{+}).$ 
This completes the proof.
\end{proof}
\begin{proof}[Proof of Theorem \ref{thm:main}]
From Theorem \ref{fredholm}, Theorem \ref{exceptional}, \eqref{index} and Lemma \ref{lem:index}, we get the desired result. 
\end{proof}

\appendix
\section{A difference of two exponentials of a bounded operator}
\label{section: Appendix A}
In this section, we prove the following proposition:
\begin{Prop}
We take two operators $H_0, H_{1}\in\mathcal{B}(\mathcal{X}).$ If $H_1-H_0\in \mathcal{S}_{1}(\mathcal{X}),$ then for any $z\in\C,$ we have $e^{zH_{1}}-e^{zH_{0}}\in\mathcal{S}_{1}(\mathcal{X}).$
\end{Prop}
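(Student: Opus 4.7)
The plan is to use Duhamel's formula to express the difference of the two exponentials as an integral whose integrand manifestly lies in the trace class $\mathcal{S}_1(\mathcal{X})$, and then to show that the Bochner integral in $\mathcal{S}_1(\mathcal{X})$ converges. Concretely, for fixed $z\in\mathbb{C}$, I would introduce the map $s\mapsto e^{zsH_1}e^{z(1-s)H_0}$ on $[0,1]$, observe that it is differentiable in the operator norm, and compute
\[
\frac{d}{ds}\left(e^{zsH_1}e^{z(1-s)H_0}\right) = e^{zsH_1}\,z(H_1-H_0)\,e^{z(1-s)H_0}.
\]
Integrating from $0$ to $1$ then gives the key identity
\[
e^{zH_1}-e^{zH_0} = z\int_0^1 e^{zsH_1}(H_1-H_0)e^{z(1-s)H_0}\,ds.
\]

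Next I would verify that the integrand lies in $\mathcal{S}_1(\mathcal{X})$ for every $s\in[0,1]$. Since $\mathcal{S}_1(\mathcal{X})$ is a two-sided ideal in $\mathcal{B}(\mathcal{X})$ and $H_1-H_0\in\mathcal{S}_1(\mathcal{X})$, the product $e^{zsH_1}(H_1-H_0)e^{z(1-s)H_0}$ is trace class. A standard trace-norm estimate gives
\[
\left\|e^{zsH_1}(H_1-H_0)e^{z(1-s)H_0}\right\|_1 \leq \|e^{zsH_1}\|\,\|H_1-H_0\|_1\,\|e^{z(1-s)H_0}\| \leq e^{|z|(\|H_1\|+\|H_0\|)}\,\|H_1-H_0\|_1,
\]
which is a uniform bound on $[0,1]$.

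The last step is to check that $s\mapsto e^{zsH_1}(H_1-H_0)e^{z(1-s)H_0}$ is continuous from $[0,1]$ into $\mathcal{S}_1(\mathcal{X})$. This follows from the norm-continuity of $s\mapsto e^{zsH_1}$ and $s\mapsto e^{z(1-s)H_0}$ in $\mathcal{B}(\mathcal{X})$ together with the ideal inequality $\|ATB\|_1 \leq \|A\|\,\|T\|_1\,\|B\|$, which lets me transfer operator-norm convergence of the flanking exponentials to trace-norm convergence of the triple product. Once continuity is in hand, the Bochner integral of a continuous $\mathcal{S}_1$-valued function on a compact interval exists in $\mathcal{S}_1(\mathcal{X})$, and therefore $e^{zH_1}-e^{zH_0}\in\mathcal{S}_1(\mathcal{X})$.

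No real obstacle is expected here: the argument is a textbook application of Duhamel's principle combined with the ideal property of $\mathcal{S}_1$. The only mildly delicate point is the justification of exchanging differentiation and the integral (or equivalently, legitimizing the Bochner integral in $\mathcal{S}_1$), which is handled by the uniform bound and continuity just described.
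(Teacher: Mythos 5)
Your argument is correct, but it follows a different route from the paper. The paper expands both exponentials into their power series and proves by induction that $H_1^n-H_0^n\in\mathcal{S}_1(\mathcal{X})$ with the bound $\|H_1^n-H_0^n\|_1\le n\|H_1-H_0\|_1 M^{n-1}$, $M:=\max\{\|H_1\|,\|H_0\|\}$; summability of these bounds gives convergence of the difference of the series in trace norm, and since trace-norm convergence implies operator-norm convergence the limit is identified with $e^{zH_1}-e^{zH_0}$. You instead invoke Duhamel's formula
\[
e^{zH_1}-e^{zH_0}=z\int_0^1 e^{zsH_1}(H_1-H_0)e^{z(1-s)H_0}\,ds,
\]
note that the integrand is trace class by the ideal property, uniformly bounded in trace norm, and trace-norm continuous in $s$ (via $\|ATB\|_1\le\|A\|\,\|T\|_1\,\|B\|$ and operator-norm continuity of the flanking semigroup factors), so the Bochner integral exists in $\mathcal{S}_1(\mathcal{X})$ and, by continuity of the embedding $\mathcal{S}_1(\mathcal{X})\hookrightarrow\mathcal{B}(\mathcal{X})$, coincides with the operator-norm integral that represents $e^{zH_1}-e^{zH_0}$. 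Both proofs yield essentially the same quantitative bound, of the form $\|e^{zH_1}-e^{zH_0}\|_1\le |z|e^{|z|M}\|H_1-H_0\|_1$; the paper's series argument is more elementary in that it avoids vector-valued integration altogether, while your Duhamel argument is shorter, generalizes more readily (e.g.\ to differences $f(H_1)-f(H_0)$ treated via integral representations and to unbounded generators), and makes the mechanism — a single factor of $H_1-H_0$ sandwiched between bounded operators — more transparent. The only point you should spell out, as you note, is the identification of the $\mathcal{S}_1$-valued integral with the operator-norm one, which your uniform bound and continuity indeed justify.
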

\begin{proof}
For a bounded operator $A,$ we denote by $\|A\|$ the operator norm of $A.$
Similarly, for a trace class operator $A$, we denote by $\|A\|_1$ the trace norm of $A.$ 
We set $M:=\max\{\|H_{1}\|, \|H_{0}\|\}.$ The proof is divided into two steps. First, we show that for any $n\in\N,$ $H_{1}^n-H_{0}^n\in\mathcal{S}_{1}(\mathcal{X})$ and $\|H_{1}^{n}-H_{0}^{n}\|_{1}\le n\|H_{1}-H_{0}\|_{1}M^{n-1}.$ When $n=1,$  this claim is obvious. Suppose that when $n=k,$ the claim follows. Then $n=k+1,$ it follows that
\begin{align*}
H_{1}^{k+1}-H_{0}^{k+1}&=H_{1}H_{1}^{k}-H_{1}H_{0}^k +H_{1}H_{0}^k-H_{0}H_{0}^k
\\
&=H_{1}(H_{1}^k-H_{0}^k)+(H_{1}-H_{0})H_{0}^k\in\mathcal{S}_{1}(\mathcal{X})
\end{align*}
and
\begin{align*}
\|H_{1}^{k+1}-H_{0}^{k+1}\|_{1}&\le\|H_{1}\|\|H_{1}^k-H_{0}^k\|_{1}+\|H_{1}-H_{0}\|_{1}\|H_{0}^k\|
\\
&\le M\times k\|H_{1}-H_{0}\|_{1}M^{k-1} +\|H_{1}-H_{0}\|_{1}M^{k-1}
\\
&=(k+1)\|H_{1}-H_{0}\|_{1}M^{(k+1)-1}.
\end{align*}
Thus the claim follows by induction. Next, we show that $e^{zH_{1}}-e^{zH_{0}}\in\mathcal{S}_{1}(\mathcal{X}).$ First, we observe that
\begin{align}
\displaystyle\sum_{n=1}^{\infty}\frac{|z|^n}{n!}\|H_{1}^n-H_{0}^n\|_{1}\le \displaystyle\sum_{n=1}^{\infty}\frac{|z|^n}{(n-1)!}M^{n-1}\|H_{1}-H_{0}\|_{1}=|z|e^{|z|M}\|H_{1}-H_{0}\|<\infty.
\end{align}
Thus the series $$ \displaystyle\sum_{n=0}^{\infty}\left(\displaystyle\frac{1}{n!}(zH_{1})^n-\displaystyle\frac{1}{n!}(zH_{0})^n\right) $$ converges in $\mathcal{S}_{1}(\mathcal{X}).$ On the other hand, it follow that $$ e^{zH_{j}}=\displaystyle\sum_{n=0}^{\infty} \frac{1}{n!}(zH_{j})^{n}$$
in the operator norm.
Since a convergence in $\mathcal{S}_{1}(\mathcal{X})$ implies a convergence in $\mathcal{B}(\mathcal{X}),$ we have
$$ e^{zH_{1}}-e^{zH_{0}}=\displaystyle\sum_{n=0}^{\infty}\left(\displaystyle\frac{1}{n!}(zH_{1})^{n}-\displaystyle\frac{1}{n!}(zH_{0})^n\right)\in\mathcal{S}_{1}(\mathcal{X}).$$
This completes the proof.
\end{proof}

\section{Spectral decomposition of the discrete Laplacian on \texorpdfstring{$\Z_{\ge0}$}{Z}}
\label{section: Appendix B}
In this section, we consider the unitary transformation from $\ell^2(\Z_{\ge0})$ to $L^2((-2, 2), dt).$ 
This is also discussed in \cite[Example 2 in Section 3.1]{MR2497822}. 
We use the same notations used in the main text. Namely, we set $v$ as a right-shift operator on $\ell^2(\Z_{\ge0})$ and $\{\delta_{x}\}_{x\ge0}$ as the standard basis of $\ell^2(\Z_{\ge0})$ (see \eqref{halfline}). 
\begin{Prop}\label{section B main}
There exists a unitary operator $u: \ell^2(\Z_{\geq0})\rightarrow L^{2}((-2, 2), dt)$ such that $$ u(v+v^{\ast})u^{-1}=M_{t}, \quad (u\delta_{0})(t)=\displaystyle\frac{(4-t^{2})^\frac{1}{4}}{\sqrt{2\pi}}, \quad t\in(-2, 2), $$ where $M_{t}$ is the multiplication operator by $t.$
\end{Prop}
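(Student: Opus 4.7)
The plan is to construct $u$ explicitly using Chebyshev polynomials of the second kind, exploiting the fact that $v+v^{\ast}$ is a Jacobi matrix with zero diagonal and unit off-diagonal entries.

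First, I would compute the action of $v+v^{\ast}$ on the standard basis: since $v\delta_{x}=\delta_{x+1}$ and $v^{\ast}\delta_{0}=0$, $v^{\ast}\delta_{x}=\delta_{x-1}$ for $x\geq1$, we get
\[
(v+v^{\ast})\delta_{0}=\delta_{1},\qquad (v+v^{\ast})\delta_{x}=\delta_{x-1}+\delta_{x+1}\quad (x\geq1).
\]
This is precisely the three-term recurrence satisfied by the Chebyshev polynomials of the second kind $U_{n}$, rescaled by $t\mapsto t/2$. Concretely, setting
\[
e_{n}(t):=U_{n}(t/2)\cdot\frac{(4-t^{2})^{1/4}}{\sqrt{2\pi}},\qquad t\in(-2,2),
\]
the recurrence $U_{n+1}(x)=2xU_{n}(x)-U_{n-1}(x)$ with $U_{0}=1$, $U_{1}(x)=2x$ gives $t\,e_{n}(t)=e_{n-1}(t)+e_{n+1}(t)$ for $n\geq1$ and $t\,e_{0}(t)=e_{1}(t)$.

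Second, I would verify that $\{e_{n}\}_{n\geq0}$ is an orthonormal basis of $L^{2}((-2,2),dt)$. The substitution $t=2\cos\theta$, $dt=-2\sin\theta\,d\theta$ converts
\[
\int_{-2}^{2}e_{n}(t)\overline{e_{m}(t)}\,dt=\int_{0}^{\pi}U_{n}(\cos\theta)U_{m}(\cos\theta)\cdot\frac{\sin\theta}{\pi}\cdot 2\sin\theta\,d\theta=\frac{2}{\pi}\int_{0}^{\pi}\frac{\sin((n+1)\theta)\sin((m+1)\theta)}{\sin^{2}\theta}\sin^{2}\theta\,d\theta,
\]
which equals $\delta_{n,m}$ by the standard trigonometric orthogonality of $\{\sin((n+1)\theta)\}_{n\geq0}$ on $(0,\pi)$. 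Completeness follows because the span of $\{U_{n}(t/2)\}$ is the space of all polynomials in $t$, which is dense in $L^{2}((-2,2),w(t)dt)$ for the bounded weight $w(t)=\sqrt{4-t^{2}}/(2\pi)$; equivalently, by unitary pullback through $t=2\cos\theta$, completeness reduces to the completeness of $\{\sin((n+1)\theta)\}$ in $L^{2}((0,\pi),d\theta)$.

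Third, I would define $u:\ell^{2}(\Z_{\geq0})\to L^{2}((-2,2),dt)$ as the unique bounded linear extension of $u\delta_{n}:=e_{n}$. Orthonormality and completeness of $\{e_{n}\}$ make $u$ unitary. In particular, $u\delta_{0}(t)=(4-t^{2})^{1/4}/\sqrt{2\pi}$, giving the second stated identity. For the intertwining, I verify $u(v+v^{\ast})u^{-1}e_{n}=M_{t}e_{n}$ on the orthonormal basis using the recurrence computed in the first step and then extend by density.

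I do not anticipate a genuine obstacle; the only subtle point is the completeness argument, which I would handle via the $\theta$-coordinate picture where it reduces to a classical Fourier sine series on $(0,\pi)$. Everything else is a routine verification of the three-term recurrence and the orthogonality integral.
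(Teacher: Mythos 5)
Your proposal is correct. The recurrence $t\,e_0=e_1$, $t\,e_n=e_{n-1}+e_{n+1}$ matches $(v+v^*)\delta_0=\delta_1$, $(v+v^*)\delta_x=\delta_{x-1}+\delta_{x+1}$, the orthonormality computation is right, and since both $u(v+v^*)u^{-1}$ and $M_t$ are bounded, checking the intertwining on the basis $\{e_n\}$ suffices. Your route differs from the paper's in presentation though it produces the same unitary: the paper factors $u=V\mathcal{F}$, where $\mathcal{F}$ is the discrete sine transform $(\mathcal{F}\delta_x)(k)=\sqrt{2/\pi}\,\sin(k(x+1))$ diagonalizing $v+v^*$ as multiplication by $2\cos k$ on $L^2([0,\pi])$, and $V$ is the change-of-variables unitary $t=2\cos k$ (whose unitarity is checked by exhibiting an explicit inverse isometry), whereas you work directly in the target space $L^2((-2,2),dt)$ with the Chebyshev functions $e_n(t)=U_n(t/2)(4-t^2)^{1/4}/\sqrt{2\pi}$ and the Jacobi-matrix three-term recurrence; note that under $t=2\cos\theta$ one has $U_n(\cos\theta)\sin\theta=\sin((n+1)\theta)$, so your $e_n$ are exactly the images of the paper's sine basis, and your fallback completeness argument (reduction to the Fourier sine system on $(0,\pi)$) is the same underlying fact the paper exploits. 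What your approach buys is transparency of the orthogonal-polynomial/spectral-theory-of-Jacobi-matrices structure and a single basis-to-basis verification in place of two separate unitaries; what the paper's buys is an entirely explicit, elementary computation (two changes of variables and an inverse map) that never needs Weierstrass density or Chebyshev identities. Either completeness argument you sketch is adequate; if you use the polynomial-density version, just state cleanly that multiplication by $(4-t^2)^{1/4}$ is unitary from $L^2((-2,2),\sqrt{4-t^2}\,dt)$ onto $L^2((-2,2),dt)$ and that polynomials are dense in the former because the measure is finite with compact support.
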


\begin{proof}
First, we note that the set $\{\sqrt{2/\pi}\sin(k(x+1))\}_{x\ge 0}$ is a CONS of $L^{2}([0, \pi], dk).$ We introduce a unitary operator $\F$ by 
$$ (\F\delta_{x})(k):=\sqrt{\displaystyle\frac{2}{\pi}}\sin(k(x+1)),\quad x\in\Z_{\ge0}, \quad k\in[0, \pi]. $$ 
Then, for any $x\in\Z_{\ge0},$ it follows that
\begin{align*}
(\F(v+v^{\ast})\delta_{x})(k)&=\sqrt{\displaystyle\frac{2}{\pi}}\sin(k(x+2))+\sqrt{\displaystyle\frac{2}{\pi}}\sin(kx)
=\sqrt{\displaystyle\frac{2}{\pi}}2 \sin(k(x+1))\cos k\\
&=2\cos k\times\sqrt{\displaystyle\frac{2}{\pi}}\sin(k(x+1))
=2\cos k(\F\delta_{x})(k).
\end{align*}
Thus, we have the operator equality 
$$ \F(v+v^{\ast})\mathcal{F}^{\ast}=2\cos(\cdot).$$
By the definition of $\F,$  we get $(\F\delta_{0})(k)=\sqrt{2/\pi}\sin k.$

Next, we introduce a operator $V: L^{2}([0, \pi], dt)\rightarrow L^{2}((-2, 2), dt)$ by
$$ (Vf)(t):=\displaystyle\frac{1}{(4-t^2)^{\frac{1}{4}}}f\left(\cos^{-1}\displaystyle\frac{t}{2}\right),\quad f\in L^{2}([0, \pi], dt), \quad t\in(-2, 2),$$
where $\cos^{-1}:[-1, 1]\rightarrow [0, \pi].$ By the change of variables $t=2\cos k,$ we have
$$ \displaystyle\int_{-2}^{2}|(Vf)(t)|^{2}dt=\displaystyle\int_{0}^{\pi}\displaystyle\frac{|f(k)|^2}{2\sin k}\times 2\sin k dk=\displaystyle\int_{0}^{\pi}|f(k)|^2dk.$$
Thus, $V$ is isometry. We also introduce a operator $U:L^{2}((-2, 2), dt)\rightarrow L^{2}([0, \pi], dt)$ by
$$ (Ug)(k):=\sqrt{2\sin k}g(2\cos k),\quad g\in L^{2}((-2, 2), dt), \quad k\in [0, \pi].$$
By the change of variables $k=\cos^{-1}\frac{t}{2},$ it is seen that
$$ \displaystyle\int_{0}^{\pi}|(Ug)(k)|^{2}dk=\displaystyle\int_{-2}^{2}2\sin\left(\cos^{-1}\displaystyle\frac{t}{2}\right)|g(t)|^2\times \displaystyle\frac{\frac{1}{2}}{\sqrt{1-\left(\frac{t}{2}\right)^2}} dt =\displaystyle\int_{-2}^{2}|g(t)|^2 dt.$$
Thus $U$ is also isometry. Since $VU=1,$ $V$ is a surjection. Therefore $V$ is unitary. 

For any $f\in L^{2}([0, \pi], dk),$ we have
$$ (V(2\cos k) f)(t)=\displaystyle\frac{1}{(4-t^2)^{\frac{1}{4}}}\times t\times f\left(\cos^{-1}\displaystyle\frac{t}{2}\right)=t(Vf)(t).$$
Hence we get $V(2\cos k)V^{\ast}=M_{t}.$ Moreover, we have
$$ \left(V \sqrt{\displaystyle\frac{2}{\pi}}\sin(\cdot)\right)(t)=\sqrt{\displaystyle\frac{2}{\pi}}\displaystyle\frac{1}{(4-t^2)^{\frac{1}{4}}}\sin\left(\cos^{-1}\displaystyle\frac{t}{2}\right)=\sqrt{\displaystyle\frac{2}{\pi}}\displaystyle\frac{1}{(4-t^2)^{\frac{1}{4}}}\sqrt{1-\left(\displaystyle\frac{t}{2}\right)^2}=\displaystyle\frac{(4-t^{2})^{\frac{1}{4}}}{\sqrt{2\pi}}.$$
Therefore $u:=V\F$ is the desired unitary operator.
\end{proof}

\section{A structure of solutions of a quadratic equation}
\label{section: Appendix C}
In this section, we consider a quadratic equation $w^2+zw+1=0$ parametrized by $z\in\C\setminus[-2, 2].$ A general form of this equation appears in \cite[Section 2.9]{MR2497822}. Recall that the square root of a complex number is defined by
$$ \sqrt{re^{i\theta}}:=r^{\frac{1}{2}}e^{\im\frac{\theta}{2}},\quad r>0,\quad -\pi<\theta<\pi.$$
For $a\in\R$ and $z\in\C\setminus\{-a\},$ we have
\begin{align}\label{fraction}
\displaystyle\frac{z-a}{z+a}=\displaystyle\frac{(z-a)(\overline{z}+a)}{|z+a|^2}=\displaystyle\frac{|z|^2+2a\im \mathrm{Im}\,z-a^2}{|z+a|^2}.
\end{align}
The above calculation implies that
\[ 
\displaystyle\frac{z-a}{z+a}\in\R_{\le 0}\quad \mbox{ if and only if }\quad z\in\R \mbox{ and } |z|\le |a|.
\]
Thus, we have $(z-2)/(z+2)\in\C\setminus(-\infty, 0]$ if $z\in\C\setminus[-2, 2]$ and we can define $\sqrt{(z-2)/(z+2)}.$ Here, it follows that $\mathrm{Re}\sqrt{(z-2)/(z+2)}>0.$ Under the above preliminaries, we define
\begin{align}
H(z):=\displaystyle\frac{\sqrt{\displaystyle\frac{z-2}{z+2}}-1}{\sqrt{\displaystyle\frac{z-2}{z+2}}+1},\quad z\in\C\setminus[-2, 2].
\end{align}

\begin{Prop} 
\label{theorem: properties of H}
The following assertions hold:
\begin{enumerate}
\item The function $z \longmapsto H(z)$ is holomorphic on $\C\setminus[-2, 2].$
\item We have $H(z)^2+zH(z)+1=0$ for any $z\in\C\setminus[-2, 2].$
\item We have $H(\C_{+})\subset  \C_{+}$ and $H(\C_{-})\subset \C_{-},$ where $\C_{\pm} =\{w\in\C|\ \pm\mathrm{Im}\, w>0\}.$
\item We have $|H(z)|<1$ for any $z\in\C\setminus[-2, 2].$
\item For any $z\in\C\setminus[-2, 2],$ we have $$ \displaystyle\frac{1}{2\pi}\int_{-2}^{2}\frac{\sqrt{4-t^2}}{t-z}dt =H(z).$$
\end{enumerate}
\end{Prop}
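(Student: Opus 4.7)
The plan is to tackle the five parts in increasing order of difficulty, as parts (1)--(4) are essentially structural consequences of the Möbius transformation $s \mapsto (s-1)/(s+1)$ applied to $s(z) := \sqrt{(z-2)/(z+2)}$, while part (5) requires a genuine contour computation.

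Parts (1) and (4): The preamble already establishes that the map $z \mapsto (z-2)/(z+2)$ sends $\C \setminus [-2, 2]$ into $\C \setminus (-\infty, 0]$ and that $s(z)$ has strictly positive real part. Holomorphy of $H$ then follows from the chain rule combined with the fact that $\mathrm{Re}(s+1)>0$, so the denominator never vanishes. For the bound $|H(z)| < 1$, I would compute $|s+1|^2 - |s-1|^2 = 4\,\mathrm{Re}\,s > 0$, which is the defining property of the Cayley transform from the right half-plane to the unit disk.

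Part (2) is a direct algebraic identity. Setting $w = (s-1)/(s+1)$, I would form
\[
w + \frac{1}{w} = \frac{(s-1)^2 + (s+1)^2}{s^2-1} = \frac{2(s^2+1)}{s^2-1}.
\]
Substituting $s^2 = (z-2)/(z+2)$ gives $s^2+1 = 2z/(z+2)$ and $s^2-1 = -4/(z+2)$, hence $w + 1/w = -z$, which is equivalent to $w^2 + zw + 1 = 0$.

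Part (3): Using the identity $(z-a)/(z+a) = (|z|^2 - a^2 + 2a\im\,\mathrm{Im}\,z)/|z+a|^2$ from \eqref{fraction} with $a=2$, I see that $\mathrm{Im}(s^2) > 0$ whenever $z \in \C_+$. Combined with $\mathrm{Re}(s) > 0$, this forces $\arg s \in (0,\pi/2)$, in particular $\mathrm{Im}(s)>0$. A short computation shows $\mathrm{Im}\bigl((s-1)/(s+1)\bigr) = 2\,\mathrm{Im}(s)/|s+1|^2 > 0$, giving $H(z) \in \C_+$. The case $z \in \C_-$ is symmetric (or follows from $H(\bar z) = \overline{H(z)}$).

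Part (5) is the main obstacle. My plan is a residue calculation via the Joukowski substitution $t = \zeta + 1/\zeta$ with $\zeta = e^{\im\theta}$. After writing $t = 2\cos\theta$, $\sqrt{4-t^2} = 2\sin\theta$ on $[0,\pi]$, extending by symmetry to $[-\pi,\pi]$, and converting $d\theta = d\zeta/(\im\zeta)$, the integrand becomes a rational function of $\zeta$ on $|\zeta|=1$ with poles at $\zeta=0$ (a double pole) and at the two roots of $\zeta^2 - z\zeta + 1 = 0$. Since these roots multiply to $1$, precisely one, call it $\zeta_0$, lies inside $|\zeta|<1$, and the substitution $w = -\zeta$ converts the equation to $w^2 + zw + 1 = 0$, so by part (4) the inside root is $\zeta_0 = -H(z)$. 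Computing the residue at $\zeta_0$ using $\zeta_0 - 1/\zeta_0 = (\zeta_0^2-1)/\zeta_0$ yields $(1-H(z)^2)/H(z)$, and Laurent-expanding at the origin gives residue $z$ there. Adding, then using part (2) to replace $1 - H(z)^2 = 2 + zH(z)$, the total simplifies to $-z - 1/H(z)$, which equals $H(z)$ by the quadratic relation once more. The delicate points will be bookkeeping of the Jacobian and orientation in the Joukowski substitution, and invoking part (4) at exactly the right moment to pin down which root is the inner one.
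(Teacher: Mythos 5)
Your proposal is correct and follows essentially the same route as the paper: the half-plane/Cayley-transform arguments for parts (1)--(4) and the unit-circle residue computation (with poles at $0$ and at the inner root of $\zeta^2-z\zeta+1=0$) for part (5), with all stated residues and simplifications checking out. The only real difference is cosmetic but a slight improvement: by identifying the inner pole directly as $-H(z)$ via $|H(z)|<1$, you avoid the paper's detour through $H(-z)$ and its final verification that $-H(-z)=H(z)$.
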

\begin{proof}
\begin{enumerate}
\item This is obvious from the definition of $H(z).$
\item First, we note that $\sqrt{(z-2)/(z+2)}\neq 1.$ It is seen that
\begin{align*}
H(z)=H(z)\times \displaystyle\frac{\sqrt{\displaystyle\frac{z-2}{z+2}}-1}{\sqrt{\displaystyle\frac{z-2}{z+2}}-1}=\displaystyle\frac{\displaystyle\frac{z-2}{z+2}-2\sqrt{\displaystyle\frac{z-2}{z+2}}+1}{\displaystyle\frac{z-2}{z+2}-1}
=\displaystyle\frac{-z+(z+2)\sqrt{\displaystyle\frac{z-2}{z+2}}}{2}.
\end{align*}
Therefore, we have 
\begin{align}
H(z)^{2}=\displaystyle\frac{z^2+(z^2-4)-2z(z+2)\sqrt{\displaystyle\frac{z-2}{z+2}}}{4}=-1-z\displaystyle\frac{-z+(z+2)\sqrt{\displaystyle\frac{z-2}{z+2}}}{2}=-1-zH(z).
\end{align}
\item We take $z\in\C_{+}.$ Then \eqref{fraction} implies $(z-2)/(z+2)\in\mathbb{C}_{+}.$ By the definition of the square root, we have $\sqrt{(z-2)/(z+2)}\in\mathbb{C}_{+}.$ From the definition of $H(z)$ and \eqref{fraction}, again, we have $H(z)\in\mathbb{C}_{+}.$ The other case is also similar.
\item We take $w\in\C\setminus\{-1\}$ and write $w=x+\im y$ $(x, y\in\R).$ Since 
$$ \left|\displaystyle\frac{w-1}{w+1}\right|^2=\displaystyle\frac{(x-1)^2+y^2}{(x+1)^2+y^2},$$
it follows that
\[ 
\left|\displaystyle\frac{w-1}{w+1}\right|<1\quad \mbox{ if and only if }\quad (x-1)^2+y^2<(x+1)^2+y^2\quad
 \mbox{ if and only if }\quad x>0.
\]
Recall that for any $z\in\C\setminus[-2, 2],$ we have $\mathrm{Re}\sqrt{\displaystyle\frac{z-2}{z+2}}>0.$ The above equivalence implies $|H(z)|<1.$
\item 
Since $H(z)^2+zH(z)+1=0$ holds, it follows that $H(z)\not=0$ for all $z\in\C\setminus[-2, 2].$
From the relation between solutions and coefficients, $1/H(z)$ is also a solution of $w^2+zw+1=0.$ 
By the change of  variables $t=2\cos\theta,$ we have
\begin{align}
\displaystyle\frac{1}{2\pi}\displaystyle\int_{-2}^{2}\frac{\sqrt{4-t^2}}{t-z}dt=\displaystyle\frac{2}{\pi}\int_{0}^{\pi}\displaystyle\frac{\sin^2\theta}{2\cos\theta-z}d\theta=\displaystyle\frac{1}{\pi}\int_{-\pi}^{\pi}\frac{\sin^{2}\theta}{2\cos\theta-z}d\theta,
\end{align}
where to get the last equality we used the fact that the integrand is even. Moreover, by the change of  variables $w=e^{\im\theta},$ we have
\begin{align}
\displaystyle\frac{1}{2\pi}\int_{-2}^{2}\frac{\sqrt{4-t^2}}{t-z}dt&=\displaystyle\frac{1}{\pi}\int_{-\pi}^{\pi}\frac{\left(\frac{e^{\im\theta}-e^{-\im\theta}}{2\im}\right)^2}{2\times\frac{e^{\im\theta}+e^{-\im\theta}}{2}-z}d\theta=\displaystyle\frac{\im}{4\pi}\int_{|w|=1}\frac{(w^2-1)^2}{(w^2-zw+1)w^2}dw.
\end{align}
The solutions of $w^2-zw+1=0$ is $H(-z)$ and $1/H(-z).$ 
By the 4-th assertion, the poles of the integrand inside the unit circle are $w=0$ and $w=H(-z).$
Thus, we have
\begin{align}
\displaystyle\frac{\im}{4\pi}\int_{|w|=1}\frac{(w^2-1)^2}{(w^2-zw+1)w^2}dw&=\displaystyle\frac{\im}{4\pi}2\pi\im\left\{\frac{\left(H(-z)^2-1\right)^2}{\left(H(-z)-\frac{1}{H(-z)}\right)H(-z)^2}+z\right\}
\\
&=-\displaystyle\frac{1}{2}\left\{H(-z)-\displaystyle\frac{1}{H(-z)}+z\right\}.
\end{align}
By the relation between solutions and coefficients, we have $H(-z)+1/H(-z)=z.$ This implies 
\begin{align}\label{h-z}
-\displaystyle\frac{1}{H(-z)}=H(-z)-z.
\end{align}
By using \eqref{h-z}, we have $$ -\displaystyle\frac{1}{2}\left\{H(-z)-\displaystyle\frac{1}{H(-z)}+z\right\}=-H(-z).$$ 
If $-H(-z)=H(z)$ is proven, the proof is completed. By the following observation:
$$ 0=H(-z)^2+(-z)H(-z)+1=(-H(-z))^2+z(-H(-z))+1,$$
$-H(-z)$ is a solution of $w^2+zw+1=0.$ Thus $-H(-z)$ equals to either $H(z)$ or $1/H(z).$ Since $|-H(-z)|<1,$ $|H(z)|<1$ and $|1/H(z)|>1,$ we get $-H(-z)=H(z).$
\end{enumerate}
\end{proof}

\section{Elimination of the phase terms}
\label{section: Appendix D}
\begin{Thm}
\label{lemma: NOW transform}
Let $m \in \Z \setminus \{0\},$ and let
\begin{equation}
\label{equation: generalised ssqw}
S := 
\begin{bmatrix}
\alpha_1 & \beta_1 L^m \\
L^{-m}\beta_1^*  & \alpha_1'(\cdot - m)
\end{bmatrix}, \qquad 
C := 
\begin{bmatrix}
\alpha_2 & \beta_2 \\
\beta_2^* & \alpha_2'
\end{bmatrix},
\end{equation}
where $\alpha_1, \alpha'_1, \alpha_2, \alpha'_2$ are bounded $\R$-valued sequences, and where $\beta_1, \beta_2$ are bounded $\C$-valued sequences. Then there exist $\R$-valued sequences $f,g,$ such that the following two equalities hold true:
\begin{align*}
\begin{bmatrix}
e^{-if} & 0 \\
0      & e^{-ig}
\end{bmatrix} 
S 
\begin{bmatrix}
e^{if} & 0 \\
0      & e^{ig}
\end{bmatrix}
&= 
\begin{bmatrix}
\alpha_1 & |\beta_1| L^m \\
L^{-m}|\beta_1|  & \alpha_1'(\cdot - m)
\end{bmatrix}, \\
\begin{bmatrix}
e^{-if} & 0 \\
0      & e^{-ig}
\end{bmatrix} 
C 
\begin{bmatrix}
e^{if} & 0 \\
0      & e^{ig}
\end{bmatrix}
&= 
\begin{bmatrix}
\alpha_2 & |\beta_2| \\
|\beta_2| & \alpha_2'
\end{bmatrix}.
\end{align*}
\end{Thm}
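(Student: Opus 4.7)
The plan is to reduce the claim to solving a compatible pair of pointwise equations for the two real sequences $f$ and $g$, which then becomes a single first-order difference equation solvable by prescribing values on a fundamental domain modulo $m$.

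First, I would choose any real-valued sequences $\phi_1, \phi_2 : \Z \to \R$ such that $\beta_j(x) = |\beta_j(x)|e^{i\phi_j(x)}$ for each $x\in\Z$ and $j\in\{1,2\}$ (with $\phi_j(x)$ chosen arbitrarily, e.g.\ zero, on the set where $\beta_j(x)=0$). Writing $D_{f,g} := \mathrm{diag}(e^{if}, e^{ig})$ and using the identity $L^m e^{ig} = e^{ig(\cdot+m)}L^m$ coming from the action of the shift on multiplication operators, I would compute
\[
D_{f,g}^{*} S D_{f,g} =
\begin{bmatrix}
\alpha_1 & e^{-if(\cdot)}\beta_1 e^{ig(\cdot+m)} L^m \\
L^{-m} e^{if(\cdot)}\beta_1^{*} e^{-ig(\cdot+m)} & \alpha_1'(\cdot-m)
\end{bmatrix},\qquad
D_{f,g}^{*} C D_{f,g} =
\begin{bmatrix}
\alpha_2 & e^{i(g-f)}\beta_2 \\
e^{-i(g-f)}\beta_2^{*} & \alpha_2'
\end{bmatrix}.
\]
The desired conclusion therefore reduces to requiring, for every $x\in\Z$,
\[
g(x) - f(x) \equiv -\phi_2(x) \pmod{2\pi},\qquad g(x+m) - f(x) \equiv -\phi_1(x) \pmod{2\pi}.
\]

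Next I would eliminate $g$ by using the first equation (which forces $g=f-\phi_2$ up to $2\pi\Z$) and substitute into the second, obtaining the single scalar recurrence
\[
f(x+m) - f(x) = \phi_2(x+m) - \phi_1(x),\qquad x\in\Z.
\]
Since $m\in\Z\setminus\{0\}$, the lattice $\Z$ decomposes into the $|m|$ residue classes $x_0 + m\Z$ for $x_0\in\{0,1,\dots,|m|-1\}$. On each such class I would set $f(x_0)$ to any real value (say $0$) and then define $f$ on the rest of that class by forward and backward iteration of the recurrence; this unambiguously produces a real sequence $f$ satisfying the recurrence everywhere. Finally I set $g(x) := f(x) - \phi_2(x)$, which is real-valued and satisfies both pointwise conditions by construction.

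There is essentially no hard step: the only subtlety is bookkeeping the argument at points where $\beta_j$ vanishes (handled by declaring $\phi_j(x)=0$ there, which is consistent because $e^{i\phi_j(x)}|\beta_j(x)| = 0 = \beta_j(x)$), and verifying that the identity $L^m e^{ig}=e^{ig(\cdot+m)}L^m$ produces exactly the shifted argument $g(x+m)$ in the off-diagonal entry of $S$. Once these are in hand, substituting back into $D_{f,g}^{*}SD_{f,g}$ and $D_{f,g}^{*}CD_{f,g}$ and reading off $e^{i\cdot 0}|\beta_j| = |\beta_j|$ yields the two matrix identities claimed.
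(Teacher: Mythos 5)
Your proposal is correct and follows essentially the same route as the paper: reduce both conjugations to the phase conditions $g-f\equiv-\phi_2$ and $g(\cdot+m)-f\equiv-\phi_1$, set $g=f-\phi_2$, and solve the resulting difference equation $f(x+m)-f(x)=\phi_2(x+m)-\phi_1(x)$ by prescribing $f$ on a fundamental domain for the residue classes mod $m$ and iterating. The paper's proof does exactly this (writing the solution as explicit partial sums and treating $m>0$), so the only difference is presentational.
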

Note that this unitary transform is strongly based on the method of proof of \cite[Corollary 4.4]{MR4336008},  and we give a direct proof merely for the convenience of the reader.
\begin{proof}
We only prove the case for $m > 0.$ For each $j=1,2,$ let $\theta_j = (\theta_j(x))_{x \in \Z}$ be any $\R$-valued sequence, such that $\beta_j(x) = e^{i \theta_j(x)} |\beta_j(x)|.$  Note that we have
\[
\begin{bmatrix}
e^{-if} & 0 \\
0      & e^{-ig}
\end{bmatrix}
S
\begin{bmatrix}
e^{if} & 0 \\
0      & e^{ig}
\end{bmatrix}
=
\begin{bmatrix}
1 & 0 \\
0 & L^{-y}
\end{bmatrix}
\begin{bmatrix}
e^{-if} & 0 \\
0      & e^{-ig(\cdot + m)}
\end{bmatrix}
\begin{bmatrix}
\alpha_1 & \beta_1 \\
\beta_1^* & \alpha_1'
\end{bmatrix}
\begin{bmatrix}
e^{if} & 0 \\
0      & e^{ig(\cdot + m)}
\end{bmatrix}
\begin{bmatrix}
1 & 0 \\
0 & L^{m}
\end{bmatrix}.
\]
We obtain the following two unitary transforms of the given multiplication operators:
\begin{align}
\begin{bmatrix}
e^{-if} & 0 \\
0      & e^{-ig(\cdot + m)}
\end{bmatrix}
\begin{bmatrix}
\alpha_1 & \beta_1 \\
\beta_1^* & \alpha_1'
\end{bmatrix}
\begin{bmatrix}
e^{if} & 0 \\
0      & e^{ig(\cdot + m)}
\end{bmatrix}
&= 
\begin{bmatrix}
\alpha_1 & |\beta_1| e^{i(\theta_1 + g(\cdot + m) - f)} \\
|\beta_1|e^{-i(\theta_1 + g(\cdot + m) - f)} & \alpha_1'
\end{bmatrix},
\\
\label{equation: transform of coin}
\begin{bmatrix}
e^{-if} & 0 \\
0      & e^{-ig}
\end{bmatrix}
\begin{bmatrix}
\alpha_2 & \beta_2 \\
\beta_2^* & \alpha_2'
\end{bmatrix}
\begin{bmatrix}
e^{if} & 0 \\
0      & e^{ig}
\end{bmatrix}
&=
\begin{bmatrix}
\alpha_2 & |\beta_2| e^{i(\theta_2 + g - f)} \\
|\beta_2|e^{-i(\theta_2 + g - f)} & \alpha_2'
\end{bmatrix}.
\end{align}
The unitary transform \eqref{equation: transform of coin} motivates us to define $g := f - \theta_2.$ It remains to define $f$ in such a way that $\theta_1 + g(\cdot + m) - f = 0$ holds true. If we let $\phi := \theta_2(\cdot + m) - \theta_1,$ then this equality is equivalent to
\begin{equation}
\label{equation: required to show}
f(x + m)  - f(x) = \phi(x), \qquad  x \in \Z.
\end{equation}
For each $x \in \Z,$ we consider $Z_x := \{mx + 0, \dots, mx + (m-1)\}$ consisting of $m$ integers. It is obvious that $\Z$ partitions into the disjoint union $\Z = \bigcup_{x \in \Z} Z_x.$ We let $f(n):= 0$ for each $n \in Z_0.$  Note that any arbitrary number in $\Z \setminus Z_0$ can be uniquely written as $mx + n,$ where $x \in \Z \setminus \{0\},$ and where $n \in Z_0.$ This allows us to let 
\begin{equation}
f(mx + n) := 
\begin{cases}
+\sum_{y=0}^{x-1} \phi(my + n), & x \geq 1, \\
-\sum_{y=1}^{-x} \phi(-my + n), & x \leq -1.
\end{cases}
\end{equation}
Let us first prove that \eqref{equation: required to show} holds true on $Z_{-1} \cup Z_0.$ If $n \in Z_0,$ then $f(n) = 0$ by construction, and so
\begin{align*}
&f(n + m)  - f(n) = f(n + m)  - 0 =  f(m \times 1 + n) = \phi(m \times 0 + n) = \phi(n), \\
&f((n-m) + m)  - f(n-m) = f(n)  - f(n-m) = 0 - f(m \times (-1) + n) = \phi(m \times (-1) + n) = \phi(n - m),
\end{align*}
where $n -m$ belongs to $Z_{-1} = Z_0 - m.$ Let $x' \notin Z_{-1} \cup Z_0.$ On one hand, if $x \geq 1$ and if $x' = mx + n,$ then
\[
f(x' + m)  - f(x') = 
f(m(x+1) + n)  - f(mx + n)
= \sum_{y=0}^{x} \phi(my + n) - \sum_{y=0}^{x-1} \phi(my + n)
= \phi(mx + n) = \phi(x').
\]
On the other hand, if $x \leq -2$ and if $x' = mx + n,$ then
\[
f(x' + m)  - f(x') = 
f(m(x+1) + n)  - f(mx + n)
= -\sum_{y=1}^{-x-1} \phi(-my + n) + \sum_{y=1}^{-x} \phi(-my + n)
= \phi(mx + n) = \phi(x').
\]
This completes the proof.
\end{proof}


\section*{Acknowledgments}
Y.T. acknowledges support by JSPS KAKENHI Grant Number 20J22684. K.~W.~acknowledges support by JSPS KAKENHI Grant Number 21K13846. This work was also partially supported by the Research Institute for Mathematical Sciences, an International Joint Usage/Research Center located in Kyoto University.

\bibliographystyle{plain}
\bibliography{References.bib}

Yasumichi Matsuzawa
\\
Department of Mathematics, 
\\
Faculty of Education, 
\\
Shinshu University, 
\\
6-Ro, Nishi-nagano, Nagano 380-8544, Japan.
\\
{\it E-mail Address}: {\tt myasu@shinshu-u.ac.jp}
\\

Akito Suzuki
\\
Division of Mathematics and Physics, 
\\
Faculty of Engineering, 
\\
Shinshu University, 
\\
Wakasato, Nagano 380-8553, Japan.
\\
{\it E-mail Address}: {\tt akito@shinshu-u.ac.jp}
\\

Yohei Tanaka
\\
Department of Mathematics, 
\\
Faculty of Science,
\\
Shinshu University, 
\\
Matsumoto, 390-8621, Japan
\\
{\it E-mail Address}: {\tt tana35@shinshu-u.ac.jp}
\\

Noriaki Teranishi
\\
Department of Mathematics, 
\\
Hokkaido University, 
\\
Sapporo 060-0810, Japan.
\\
{\it E-mail Address}: {\tt teranishi@math.sci.hokudai.ac.jp}
\\

Kazuyuki Wada
\\
Department of General Science and Education,
\\
National Institute of Technology, Hachinohe College, 
\\
Hachinohe 039-1192, Japan.
\\
{\it E-mail Address}: {\tt wada-g@hachinohe.kosen-ac.jp}

\end{document}